\documentclass[a4paper,12pt]{amsart}
\usepackage[usenames]{color}

\usepackage{fullpage}
\usepackage{qtree}
\usepackage[dvips]{graphicx}
\usepackage{gastex}
\usepackage{amsmath}
\usepackage{amssymb}
\usepackage{amsfonts}
\usepackage{amscd}
\usepackage{amsbsy}

\usepackage{verbatim}
\usepackage[T1]{fontenc}
\usepackage[latin1]{inputenc}
\usepackage{enumerate}

\definecolor{Gray}{rgb}{0.9,0.9,0.9}
\definecolor{Gray2}{rgb}{0.6,0.6,0.6}

\usepackage{latexsym}
\usepackage{subfigure}

\theoremstyle{plain}
\newtheorem{theorem}{Theorem}[section]

 \newtheorem{lemma}[theorem]{Lemma}
 \newtheorem{proposition}[theorem]{Proposition}
 
 \newtheorem{corollary}[theorem]{Corollary}
\theoremstyle{definition}
 \newtheorem{definition}[theorem]{Definition}

\theoremstyle{remark}
 \newtheorem{remark}[theorem]{Remark}
 \newtheorem{example}[theorem]{Example}
\theoremstyle{remark}

\numberwithin{equation}{section}

\newcommand{\A}{\mathcal{A}}
\newcommand{\B}{\mathcal{B}}
\newcommand{\F}{\mathcal{F}}
\newcommand{\G}{\mathcal{G}}
\newcommand{\N}{\mathcal{N}}
\newcommand{\T}{\mathcal{T}}

\newcommand{\TT}{\mathfrak{T}}

\DeclareMathOperator{\Sh}{\mathsf{X}}

\DeclareMathOperator{\im}{Im}

\begin{document}
\title[Cellular automata between sofic tree shifts]{Cellular automata between sofic tree shifts}
\author[T. Ceccherini-Silberstein]{Tullio Ceccherini-Silberstein}
\address{Dipartimento di Ingegneria, Universit\`a del Sannio, C.so
Garibaldi 107, 82100 Benevento, Italy}
\email{tceccher@mat.uniroma3.it}
\author[M. Coornaert]{Michel Coornaert}
\address{Institut de Recherche Math\'ematique Avanc\'ee,
UMR 7501, Universit\'e  de Strasbourg et CNRS,
7 rue Ren\'e-Descartes,67000 Strasbourg, France}
\email{coornaert@math.unistra.fr}
\author[F. Fiorenzi]{Francesca Fiorenzi}
\address{Laboratoire de Recherche en Informatique,
Universit\'e Paris-Sud 11,
91405 Orsay Cedex, France}
\email{fiorenzi@lri.fr}
\author[Z. \v Suni\'c]{Zoran \v Suni\'c}
\address{Department of Mathematics, Texas A\&M University, MS-3368, College Station, TX 77843-3368, USA}
\email{sunic@math.tamu.edu}
\date{\today}
\begin{abstract}
We study the sofic tree shifts of $A^{\Sigma^*}$, where $\Sigma^*$ is a regular rooted tree of finite rank.
In particular, we give their characterization in terms of unrestricted Rabin automata.
We show that if $X \subset A^{\Sigma^*}$ is a sofic tree shift, then the
configurations in $X$ whose orbit under the shift action is finite are
dense in $X$, and, as a consequence of this, we deduce that every
injective cellular automata $\tau\colon X \to X$ is surjective.
Moreover, a characterization of sofic tree shifts in terms of general Rabin automata is given.

We present an algorithm for establishing whether two unrestricted Rabin automata accept the same sofic tree shift or not.
This allows us to prove the decidability of the surjectivity problem for
cellular automata between sofic tree shifts.
We also prove the decidability of the injectivity problem for cellular automata defined on a tree shift of finite type.
\end{abstract}
\subjclass[2000]{03B25, 05C05, 37B10, 37B15, 68Q70, 68Q80}
\keywords{Free monoid, regular rooted tree, tree shift, subshift, shift of finite type, sofic shift, unrestricted Rabin automaton, cellular automaton, surjectivity problem, injectivity problem}

\maketitle

\section{Introduction}
Cellular automata have been studied in several different settings and from various points of view.
Classically, the universe is the grid $\mathbb{Z}^d$ of integer points of the Euclidean $d$-dimensional space. The state of every cell in the grid ranges in a finite alphabet $A$ and a configuration is an element of $A^{\mathbb{Z}^d}$, that is, a map $f \colon \mathbb{Z}^d \to A$ that describes the state of every cell. A cellular automaton (CA) is a map $\tau \colon A^{\mathbb{Z}^d} \to A^{\mathbb{Z}^d}$ that changes a configuration by simultaneously updating the state of each cell according to a fixed local rule, i.e., a rule that only considers the states of the neighbors of this cell.
Later, the notion of a CA has been extended to the case where the universe is the Cayley graph of a finitely generated group or semigroup $G$ or, even
more generally, a locally finite graph admitting a dense holonomy in
the sense of Gromov (see~\cite{Gromov99} and \cite{livre}). Note that the grid $\mathbb{Z}^d$ is the Cayley graph of the free abelian group of rank $d$. A subshift $X \subset A^G$ is a set of configurations avoiding a given set of forbidden patterns.
In~\cite{Fiorenzi03}, \cite{Fiorenzi04} and~\cite{Fiorenzi09}, the case of CAs defined on the subshifts of $A^G$, where $G$ is a finitely generated group, has been studied (see also \cite{CeccheriniCoornaert12}, \cite{CeccheriniCoornaert12b}, \cite{CeccheriniCoornaert12c} and~\cite{CeccheriniCoornaert13}).

In this paper, we study cellular automata between subshifts of $A^{\Sigma^*}$ (also called tree shifts),
where $\Sigma$ is a finite set and $\Sigma^*$ is the free
monoid (therefore of finite rank) generated by $\Sigma$.
The Cayley graph of $\Sigma^*$ is a regular $\vert\Sigma\vert$-ary rooted tree.
We investigate, in particular, the decidability of the surjectivity and injectivity problems for these CAs.
For $\vert\Sigma\vert=1$, this corresponds to the case of one-sided CAs (i.e., of CAs defined on $A^\mathbb{N}$).
In~\cite{FiciFiorenzi12}, some topological properties of CAs defined on $A^{\Sigma^*}$ has been investigated.

\smallskip

In the classical setting, it is well known that most of the decidability problems concerning
CAs turn out to have a positive answer in the
one-dimensional case ($d = 1$, including the one-sided case). For instance, the surjectivity and the
injectivity problems have been proved decidable by Amoroso and
Patt~\cite{AmorosoPatt72}. On the other hand,
Kari~\cite{Kari90,Kari94} proved that these problems fail to
be decidable in dimension $d\geq2$. We also mention that Kari~\cite{kari:nilpotent} proved that already in dimension one it is undecidable whether a given CA is nilpotent or not, that is, whether its limit set contains just one configuration or not.
Recently, Margenstern~\cite{Margenstern09} proved the undecidability
of the injectivity problem for certain CAs defined on the hyperbolic plane.

Muller and Schupp~\cite{MullerSchupp81,MullerSchupp85} introduced and
studied the class of context-free graphs and proved that the monadic second-order logic (MSOL) of any such graph is decidable. Since surjectivity and injectivity of
a given CA are both expressible in the language of MSOL,
they deduced that these properties are decidable in the case of
CAs with universe a context-free graph. Note that this result covers the one-dimensional
cases and, more generally, the case of free groups.
Moreover, since the Cayley graph of a finitely generated free monoid
is context-free, the result of Muller and Schupp also applies in the
case of CAs defined on $A^{\Sigma^*}$, where $\Sigma^*$ is a regular rooted tree.
The main purpose of this paper is to explicitly describe two decision algorithms
for the surjectivity and the injectivity problems of these CAs. Moreover, our algorithms apply to CAs defined on suitable classes of tree shifts of $A^{\Sigma^*}$.

\smallskip

Tree shifts of $A^{\Sigma^*}$ have been extensively studied by Aubrun and Béal in~\cite{Aubrun11}, \cite{AubrunBeal10} and \cite{AubrunBeal12}.
In the present work we use a slightly different (but equivalent) setting. Some results in Section~\ref{s:Unrestricted Rabin automata} have already been proved by Aubrun and Béal and naturally extend the corresponding one-dimensional properties.
We give a direct proof of them for the sake of completeness and in order to fix notation.

A tree shift $X \subset A^{\Sigma^*}$ is said to be of finite type if it can be described as the set of configurations avoiding a finite number of forbidden patterns.
A sofic tree
shift is defined as the image of a tree shift of finite type under a CA.
In the one-dimensional case, a sofic subshift of $A^\mathbb{N}$ may be
characterized as the set of all right-infinite words over $A$ accepted by some
finite-state automaton. In our setting, we use the notion of an unrestricted Rabin
automaton (see \cite{Rabin69}, \cite{ThatcherWright68}, \cite{tata07}), as well as a related notion of acceptance, in order to provide
the analogous characterization of sofic tree shifts of $A^{\Sigma^*}$. If $\A$ is an unrestricted Rabin automaton, we denote by $\Sh_\A$ the (sofic) tree shift accepted by $\A$.

The unrestricted Rabin automata we consider are called top-down: assuming a graphical representation of trees with the root symbol at the top, an automaton starts its computation at
the root and moves downward.
Many authors consider bottom-up unrestricted Rabin automata instead.
Actually,
top-down unrestricted Rabin automata and bottom-up unrestricted Rabin automata have the same expressive power. The bottom-up version of the characterization of sofic tree shifts in terms of acceptance has been proved by Aubrun and B\'eal~\cite{AubrunBeal12}.

Unrestricted Rabin automata constitute a device that allows us to prove (Theorem~\ref{t:regular}) that the regular configurations (i.e., with a finite orbit under the action of $\Sigma^*$) of a sofic tree shift $X \subset A^{\Sigma^*}$ are dense in $X$ (with respect to the prodiscrete topology).
This is no more the case when the free monoid $\Sigma^*$ is replaced by a free group.
The density of regular configurations allows us to prove (Corollary~\ref{c:surjunctive}) the surjunctivity of a CA defined on a sofic tree shift (a selfmap is said to be surjunctive if the implication ``injective $\Longrightarrow$ surjective'' holds). This is also the case for CAs on residually finite groups and on irreducible sofic subshifts of $A^\mathbb{Z}$ (see~\cite{Fiorenzi09}, \cite{livre} and~\cite{CeccheriniCoornaert12}).

For a subshift $X \subset A^{\mathbb{N}}$, the notion of irreducibility expresses the fact that for any two connected patterns of $X$ there always exists a configuration in which they appear simultaneously as factors. If $X$ is sofic, this notion is equivalent to the strong connectedness of a suitable finite-state automaton accepting $X$. Lind and Marcus~\cite{LindMarcus95} prove this fact for the sofic subshifts of $\mathbb{Z}$. In our setting, we introduce a notion of irreducibility for tree shifts of $A^{\Sigma^*}$ and a notion of strong connectedness
for unrestricted Rabin automata. We then show that the mentioned equivalence still holds for sofic tree shifts.
Aubrun and Béal~\cite{AubrunBeal10} prove an analogous result in which both irreducibility and strong connectedness are stronger properties than the ones we use.

\smallskip
Rabin automata are more general than the ones we consider, because they have initial states and accepting conditions. This is the reason why we call ``unrestricted'' our automata accepting sofic tree shifts (each state is initial and there are no restrictions on the accepting sets).
It is shown by Rabin~\cite{Rabin69} that the class of tree languages (i.e., subsets of $A^{\Sigma^*}$) accepted by Rabin automata is closed under complement, and that it is decidable if a language accepted by a Rabin automaton is empty. This implies for example that it is decidable whether a tree shift accepted by an unrestricted Rabin automaton is the full tree shift $A^{\Sigma^*}$ or not (one first constructs the Rabin automaton which accepts the complement and then checks for emptiness). Nevertheless, even if the emptiness algorithm for Rabin automata is not difficult (even though the problem is NP-complete), the complementation of Rabin automata is a highly complex issue, and is fairly impractical.
In this paper we have worked out the details of the complementation and emptiness algorithms in a limited setting of interest, namely when we start with an unrestricted Rabin automaton.

Let us now illustrate our decidability results. It is easy to decide whether the sofic tree shift accepted by a given unrestricted Rabin automaton is empty or not. An idea to decide the surjectivity of a cellular automaton $\tau \colon A^{\Sigma^*} \to A^{\Sigma^*}$ could be to establish the emptiness of $A^{\Sigma^*} \setminus \tau(A^{\Sigma^*})$.
But given a nontrivial tree shift $X \subset A^{\Sigma^*}$, the complement
$A^{\Sigma^*} \setminus X$ always fails to be a tree shift. To avoid this obstruction, we use a pattern set (whose elements are called full-tree-patterns), which is accepted by suitably defined finite-tree automaton (i.e., an unrestricted Rabin automaton in which initial and final states are specified in order to accept, or recognize, finite configurations).
For $X \subset A^{\Sigma^*}$, we denote by $\TT(X)$ the set of the full-tree-patterns of $X$.
For one-dimensional subshifts $X\subset A^\mathbb{N}$, the set $\TT(X)$ coincides with the language of $X$. In any case, a tree shift is entirely determined by its full-tree-patterns.
With this notion at hand we are able to construct a finite-tree automaton accepting exactly the full-tree-patterns of a sofic tree shift, as stated in the following result.

\begin{theorem}\label{t:subsetCONST2}
Let $\A$ be an unrestricted Rabin automaton. Then there is an effective procedure to construct a co-determini\-stic finite-tree automaton $\A_{\rm cod}(\mathcal{I}, F)$ such that
$\TT(\Sh_\A) = \TT(\A_{\rm cod}(\mathcal{I}, F)).$
\end{theorem}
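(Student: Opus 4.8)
The plan is to carry out the classical powerset (subset) construction that turns a nondeterministic top-down tree automaton into a bottom-up deterministic, i.e.\ co-deterministic, one, after first pruning the non-productive states of $\A$; the delicate step is the passage from accepted infinite configurations to their finite full-tree-patterns, and this is precisely where being \emph{unrestricted} is used.

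First I would normalize $\A=(Q,A,\Delta)$. Call $q\in Q$ \emph{productive} if there is a valid run of $\A$ on some $g\in A^{\Sigma^*}$, regarded as a subtree, whose root carries the state $q$. The set $Q_{\mathrm{prod}}$ of productive states is obtained by the standard greatest-fixed-point iteration: starting from $Q$, repeatedly discard every $q$ for which no transition $(q,a,q_1,\dots,q_{|\Sigma|})\in\Delta$ has all $q_i$ in the current set; this stabilizes in at most $|Q|$ rounds. Any valid run on an infinite complete tree uses productive states only, so discarding $Q\setminus Q_{\mathrm{prod}}$ together with every transition touching it leaves $\Sh_\A$ unchanged, and we may assume from now on that all states of $\A$ are productive. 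The key observation is then: a full-tree-pattern $p$ of depth $n$ (the restriction of a configuration to the complete subtree $\Sigma^{\le n}$) lies in $\TT(\Sh_\A)$ if and only if $\A$ admits a \emph{valid partial run} on $p$, meaning a map $\rho\colon\Sigma^{\le n}\to Q$ that respects $\Delta$ at every internal node and assigns to each leaf $w$ a state compatible with $p(w)$, i.e.\ one admitting an outgoing $\Delta$-transition on the letter $p(w)$. Restricting a run that witnesses $p$ gives one direction; for the other, productivity lets us graft below each leaf $w$ of $p$ a valid run on some $g_w\in A^{\Sigma^*}$ with root letter $p(w)$ and root state $\rho(w)$ (begin with the transition witnessing compatibility at $w$ and continue using productivity at every new vertex), yielding $f\in A^{\Sigma^*}$ with $f|_{\Sigma^{\le n}}=p$ equipped with a global valid run; as $\A$ is unrestricted there is nothing else to check, so $p\in\TT(\Sh_\A)$.

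Next I would run the subset construction. Let $\A_{\mathrm{cod}}$ have state set $\mathcal{P}(Q)$, leaf assignment $a\mapsto\iota(a):=\{q\in Q:(q,a)\text{ has an outgoing }\Delta\text{-transition}\}$, and internal transition
\[
\delta(S_1,\dots,S_{|\Sigma|},a)=\bigl\{\,q\in Q:\exists\,q_1\in S_1,\dots,q_{|\Sigma|}\in S_{|\Sigma|}\text{ with }(q,a,q_1,\dots,q_{|\Sigma|})\in\Delta\,\bigr\}.
\]
Read from the leaves upwards this is a partial function, so $\A_{\mathrm{cod}}$ is co-deterministic; keeping only the finitely many subsets reachable from the $\iota(a)$ under $\delta$ makes it an effectively constructible finite automaton. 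Put $F=\{\iota(a):a\in A\}$ and let $\mathcal{I}$ be the set of the nonempty (reachable) subsets of $Q$. An easy induction on $n$ shows that the unique run of $\A_{\mathrm{cod}}$ on a depth-$n$ full-tree-pattern $p$ assigns to each vertex $v$ the set $S_v=\{q\in Q:\A\text{ has a valid partial run on }p|_v\text{ with root state }q\}$. Since the leaf states $S_w=\iota(p(w))$ always lie in $F$, the run is accepting exactly when $S_\epsilon\in\mathcal{I}$, that is, $S_\epsilon\ne\emptyset$, which by the previous paragraph means $p\in\TT(\Sh_\A)$. Hence $\TT(\Sh_\A)=\TT(\A_{\mathrm{cod}}(\mathcal{I},F))$.

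I expect the second step to be the main obstacle: one must line up the notion of ``valid partial run on a finite full-tree-pattern'' with the exact definitions in the paper and verify carefully that removing the non-productive states really does make every partial run extend to an accepted infinite configuration. This productivity/grafting argument is the crux, and it is what allows the finite automaton to capture the \emph{sofic} tree shift $\Sh_\A$ rather than merely a finite-tree language. The normalization and the powerset construction itself are routine, being the tree-automata analogues of the corresponding one-dimensional facts.
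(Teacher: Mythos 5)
Your proof is correct and follows essentially the same route as the paper's: prune $\A$ to its essential (productive) part, apply the powerset construction with every nonempty subset initial and the full state set playing the role of the universal final state, and use the grafting/extension property of essential automata (Theorem~\ref{t:subtree} and Corollary~\ref{c:full-subtree}) to identify the existence of a valid partial run on a finite pattern with membership in $\TT(\Sh_\A)$. The only cosmetic differences are that you package the construction with a bottom-up leaf assignment $a\mapsto\iota(a)$ instead of the paper's single final state $F=S$ placed on $T^{+}\setminus T$ (the two agree, since the transition out of $(S,\dots,S)$ on $a$ has source set exactly $\iota(a)$), and that you write the argument only for balanced supports $\Delta_{n+1}$, although it applies verbatim to arbitrary finite full subtrees.
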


An important difference between bottom-up and
top-down unrestricted Rabin automata appears in the question of determinism since deterministic
top-down unrestricted Rabin automata are strictly less powerful than nondeterministic ones
and therefore are strictly less powerful than bottom-up unrestricted Rabin automata.
Intuitively,
this is due to the fact that tree properties specified by deterministic
top-down unrestricted Rabin automata can depend only on path properties.
This is the reason why we deal with co-determinism: this notion is now equivalent to the determinism in the bottom-up setting and
deterministic and nondeterministic bottom-up unrestricted Rabin automata have the same expressive
power.
The nonequivalence between deterministic and nondeterministic top-down unrestricted Rabin automata is proved in~\cite[Proposition 1.6.2]{tata07}
for the general setting of finite ordered trees
with bounded rank. We give in Example~\ref{e:countnondet} another proof of this fact for our particular class of unrestricted Rabin automata.

The recognizable sets of full-tree-patterns form a class which is closed under complementation (Theorem~\ref{t:complement}). This allows to prove
that the complement of the full-tree-pattern set of a sofic tree shift is a recognizable set of full-tree-patterns.

\begin{theorem}\label{t:complement2}
Let $\A$ be an unrestricted Rabin automaton. Then there is an effective procedure to construct a co-complete and co-deterministic finite-tree automaton $\A_\complement(I,F)$ (with a single initial state) which accepts the complement of the full-tree-pattern set of $\Sh_\A$, in formual\ae, $\TT(A^{\Sigma^*}) \setminus \TT(\Sh_\A) = \TT(\A_\complement(I,F))$.
\end{theorem}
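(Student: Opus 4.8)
The plan is to obtain this as a combination of Theorem~\ref{t:subsetCONST2} with the closure of recognizable sets of full-tree-patterns under complementation (Theorem~\ref{t:complement}): I would feed the automaton produced by the former into the complementation construction underlying the latter. First, by Theorem~\ref{t:subsetCONST2}, effectively build a co-deterministic finite-tree automaton $\A_{\rm cod}(\mathcal{I}, F)$, say with state set $Q$, such that $\TT(\Sh_\A) = \TT(\A_{\rm cod}(\mathcal{I}, F))$. Co-determinism --- determinism of a run read from the leaves towards the root --- is exactly what spares us a determinization step and makes the complementation viable, just as Boolean closure of recognizable finite-tree languages rests on complete deterministic bottom-up automata.

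The next step is to complete $\A_{\rm cod}(\mathcal{I}, F)$: adjoin a fresh absorbing ``co-dead'' state $\bot$, declared final but not initial, and fill in with $\bot$ every missing leaf-rule and every missing internal transition. This leaves $\TT(\A_{\rm cod}(\mathcal{I}, F))$ unchanged (a run passing through $\bot$ reaches $\bot$ at the root, which is not initial), while the automaton becomes co-complete; being co-deterministic as well, it now assigns to every full-tree-pattern $p$ over $A$ exactly one run, with a well-defined root state $q_p \in Q \cup \{\bot\}$. Keeping the transitions and the final states but replacing the initial set $\mathcal{I}$ by $(Q \cup \{\bot\}) \setminus \mathcal{I}$ then yields an automaton that, by uniqueness of the run, accepts $p \in \TT(A^{\Sigma^*})$ if and only if $p$ is rejected by $\A_{\rm cod}(\mathcal{I}, F)$, i.e.\ if and only if $p \in \TT(A^{\Sigma^*}) \setminus \TT(\Sh_\A)$; moreover it is still co-complete and co-deterministic.

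Finally I would collapse the new initial set to a singleton: add one state $q_*$, set $I = \{q_*\}$, and let $q_*$ inherit at the root exactly the behaviour of the states of $(Q\cup\{\bot\})\setminus\mathcal{I}$ --- that is, for each $a \in A$, a tuple $\vec q$ is allowed below a root labeled $a$ precisely when the bottom-up image of $(\vec q, a)$ lies in $(Q\cup\{\bot\})\setminus\mathcal{I}$. Since the root state is of a separate kind from the internal states, this preserves co-completeness and co-determinism and does not alter the accepted set; the result is the desired $\A_\complement(I, F)$, and every step is effective. The one place where care is genuinely required --- and the reason the steps must be carried out in this order (complete, complement the acceptance, then introduce the single initial state) --- is the completion: only a completed co-deterministic automaton attaches a unique run to every full-tree-pattern over $A$, so that swapping the accepting condition produces the complement relative to $\TT(A^{\Sigma^*})$ rather than to some larger ambient set; this is precisely what fails for nondeterministic top-down automata and is what motivates working with co-deterministic ones in Theorem~\ref{t:subsetCONST2}. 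Beyond this, the routine checks are that adjoining $\bot$ (filling only previously undefined cases) does not destroy co-determinism and that the bookkeeping for $\bot$ among the leaf-rules and the final states is consistent with the definition of a finite-tree automaton.
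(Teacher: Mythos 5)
Your first three steps are exactly the paper's route: apply Theorem~\ref{t:subsetCONST2} to get a co-deterministic finite-tree automaton $\A_{\rm cod}(\mathcal{I},F)$ recognizing $\TT(\Sh_\A)$, then complete it with a fresh absorbing state (the paper's $K$, your $\bot$) and swap the initial set for its complement --- this is precisely Theorem~\ref{t:complement} with $\mathcal{I}_\complement=(S\setminus\mathcal{I})\cup\{K\}$. Up to there the argument is sound, modulo a terminological slip: $\bot$ should not be ``declared final'' --- the final state $F$ must remain the one inherited from Theorem~\ref{t:subsetCONST2}, since it is the state forced on $T^+\setminus T$ and hence determines the bottom-up run; what you mean is simply that $\bot\notin\mathcal{I}$.

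The genuine gap is your last step. Adding a fresh state $q_*$ together with the bundles $(q_*;a;\vec q)$ for every pair $(\vec q,a)$ whose unique source lies in $(Q\cup\{\bot\})\setminus\mathcal{I}$ destroys co-determinism as the paper defines it (Definition~\ref{d:CODET}): for such a pair there are now two bundles terminating at $\vec q$ carrying the same label $a$, one with source $s$ and one with source $q_*$. In this formalism a finite-tree automaton is just an unrestricted Rabin automaton with distinguished $\mathcal{I}$ and $F$; there is no separate ``root sort'' of states, so your appeal to the root state being ``of a separate kind'' has no footing. You also cannot delete the old bundle $(s;a;\vec q)$ to restore co-determinism, since $s$ may be needed as the state of a non-root vertex whose parent's bundle terminates at a tuple containing $s$ rather than $q_*$. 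The point you are missing is that no collapsing step is needed at all: Theorem~\ref{t:subsetCONST2} delivers $\mathcal{I}={\mathcal P}^*(S)$, i.e.\ \emph{every} state of $\A_{\rm cod}$ is initial, so the complemented initial set produced by Theorem~\ref{t:complement} is $({\mathcal P}^*(S)\setminus{\mathcal P}^*(S))\cup\{K\}=\{K\}$ --- already a singleton. The single-initial-state clause is a feature of this particular instance, not the outcome of a general normalization of co-deterministic finite-tree automata.
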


The following result is mentioned in Rabin's paper~\cite{Rabin69} and addressed by Doner~\cite{Doner70} and Thatcher and Wright~\cite{ThatcherWright68} in a more general setting.
\begin{theorem}[Emptiness problem]\label{t:emptiness}
Let $\A(\mathcal{I},F)$ be a finite-tree automaton. Then there is an algorithm which establishes whether $\TT(\A(\mathcal{I},F)) = \varnothing$ or not.
\end{theorem}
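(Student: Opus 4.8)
The plan is to give a standard bottom-up (co-deterministic) saturation argument for the emptiness of the tree language recognized by a finite-tree automaton $\A(\mathcal{I},F)$. First I would recall that a full-tree-pattern is a finite configuration over a prefix-closed, ``complete'' finite subset of $\Sigma^*$ (a finite $|\Sigma|$-ary tree all of whose internal nodes have all $|\Sigma|$ children), and that the automaton $\A(\mathcal{I},F)$ assigns states to nodes compatibly with its transition relation, with the leaves carrying states in $F$ and the root a state in $\mathcal{I}$. Thus $\TT(\A(\mathcal{I},F))\neq\varnothing$ if and only if there is a finite labeled tree admitting such an accepting run.

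The core of the proof is to compute, by a fixpoint iteration, the set $R\subseteq Q$ of \emph{reachable} states, where a state $q$ is reachable if there is some finite full-tree-pattern together with a valid run of the automaton on it whose root is labeled $q$. I would define $R_0 := F$ (the states that can label a leaf), and then iterate $R_{n+1} := R_n \cup \{\,q \in Q : \exists a\in A,\ (q_1,\dots,q_{|\Sigma|})\in R_n^{|\Sigma|}\ \text{with}\ (q, a, q_1,\dots,q_{|\Sigma|})\ \text{a transition of }\A\,\}$. Since $Q$ is finite and the sequence $R_0\subseteq R_1\subseteq\cdots$ is nondecreasing, it stabilizes after at most $|Q|$ steps to a set $R$. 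A straightforward induction on the height of trees shows that $q\in R$ if and only if some full-tree-pattern has an accepting-below run with root state $q$: the forward direction unfolds the definition of $R_{n+1}$ to build the tree, and the backward direction inducts on tree height. The algorithm then outputs ``$\TT(\A(\mathcal{I},F))=\varnothing$'' precisely when $R\cap\mathcal{I}=\varnothing$, and ``nonempty'' otherwise; this is clearly effective since each step only inspects the finite transition table.

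The verification of correctness reduces to the two directions of the equivalence ``$\TT(\A(\mathcal{I},F))\neq\varnothing \iff R\cap\mathcal{I}\neq\varnothing$''. If $R\cap\mathcal{I}\neq\varnothing$, pick $q\in R\cap\mathcal{I}$; by the characterization of $R$ there is a full-tree-pattern with an accepting run rooted at $q$, and since $q\in\mathcal{I}$ this run is globally accepting, so the pattern lies in $\TT(\A(\mathcal{I},F))$. Conversely, any element of $\TT(\A(\mathcal{I},F))$ comes with a run whose root state lies in $\mathcal{I}$ and (by induction on height, using that leaf states are in $F=R_0$ and internal states are produced by transitions from states already shown to be in $R$) whose root state is therefore in $R$.

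The step I expect to be the only real subtlety — rather than a true obstacle — is making precise the notion of run that the automaton uses on finite full-tree-patterns in the present (top-down, ``unrestricted'', finite-tree) formalism, and checking that the co-deterministic/bottom-up reading of $\A(\mathcal{I},F)$ makes the fixpoint $R$ literally the set of realizable root states; once the definitions are pinned down this is routine. Everything else is the classical reachability computation for tree automata, termination being immediate from finiteness of $Q$ and monotonicity of the iteration. I would remark that this also yields, as a by-product, an effective way to exhibit a witness full-tree-pattern when the language is nonempty, by recording during the iteration, for each newly added state, a transition and states justifying its membership in $R$.
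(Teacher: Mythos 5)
Your overall strategy --- a bottom-up monotone fixpoint computing the realizable root states --- is the standard emptiness test for tree automata and is genuinely different from the paper's proof, which instead pumps: it shows that if $\TT(\A(\mathcal{I},F))\neq\varnothing$ then it contains a pattern of height at most $\vert S\vert$ (by collapsing a repeated state along any long branch) and then checks all patterns and runs of bounded height by brute force. Your route would be preferable (polynomial rather than exponential), but as written it contains a genuine error, located exactly at the point you set aside as ``routine''. In this paper $F$ is a \emph{single} state that labels the positions of $T^{+}\setminus T$ (the children of the leaves of $T$), not the leaves themselves, and the support $T$ must be a finite \emph{full} subtree: every vertex of $T$ has either none or all $\vert\Sigma\vert$ of its children in $T$. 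Seeding $R_0=\{F\}$ and closing under transitions whose targets all lie in $R_n$ therefore computes the wrong set: a transition $(q;a;(q_\sigma)_{\sigma\in\Sigma})$ in which some $q_\sigma$ equal $F$ (present only because of the seeding) and others are genuinely realizable corresponds to a vertex with some children pruned and others kept, i.e.\ to a run on a \emph{non-full} subtree, which is not an element of $\TT(\A(\mathcal{I},F))$. Concretely, take $\Sigma=\{0,1\}$, $S=\{I,X,F\}$, $\mathcal{I}=\{I\}$, final state $F$, and $\T=\{(I;a;(X,F)),\ (X;a;(F,F))\}$. Your iteration gives $R_0=\{F\}$, $R_1=\{F,X\}$, $R_2=\{F,X,I\}$, hence $R\cap\mathcal{I}\neq\varnothing$ and you answer ``nonempty''; but $\TT(\A(\mathcal{I},F))=\varnothing$, since $T=\{\varepsilon\}$ would require a transition $(I;a;(F,F))$, and any larger full $T$ must contain the vertex $1$, which is forced to carry state $F$ and would need a transition with source $F$ --- there is none. (The same seeding also makes the test fail spuriously whenever $F\in\mathcal{I}$ but $F$ roots no accepted pattern.)

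The repair is small but necessary. Initialize with $R_1=\{q\in S : (q;a;(F,\dots,F))\in\T \text{ for some } a\in A\}$, the states that can sit at a leaf of $T$ (all of whose children lie in $T^{+}\setminus T$ and hence all carry $F$ simultaneously), and at each later stage add $q$ only when some transition $(q;a;(q_\sigma)_{\sigma\in\Sigma})$ has \emph{every} $q_\sigma$ already in $R_n$; in particular $F$ enters the pool only if $F$ itself roots an accepted sub-pattern. With this invariant your induction on the height of $T$ goes through in both directions (height $1$ is exactly the base case; for height $\geq 2$ the root of a full subtree has all its children in $T$, so the glued witnesses are again full), the test $R\cap\mathcal{I}\neq\varnothing$ is correct, and you obtain a polynomial-time algorithm together with the witness extraction you mention, improving on the exponential procedure in the paper.
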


The three effective procedures above are the first step towards the solution of the following decision problem.

\begin{theorem}[Equality problem]\label{t:equality}
It is decidable whether two unrestricted Rabin automata accept the same tree shift or not.
\end{theorem}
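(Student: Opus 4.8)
The plan is to reduce deciding equality of two sofic tree shifts to finitely many emptiness tests for finite-tree automata, by chaining together the three effective procedures of Theorems~\ref{t:subsetCONST2}, \ref{t:complement2} and~\ref{t:emptiness}. The starting observation is that a tree shift $X \subset A^{\Sigma^*}$ is entirely determined by its set of full-tree-patterns $\TT(X)$; hence two tree shifts $X$ and $Y$ coincide if and only if $\TT(X) = \TT(Y)$, which is in turn equivalent to the conjunction of the two inclusions $\TT(X) \subseteq \TT(Y)$ and $\TT(Y) \subseteq \TT(X)$. Each such inclusion is an emptiness statement in disguise: indeed $\TT(X) \subseteq \TT(Y)$ holds if and only if $\TT(X) \cap \big(\TT(A^{\Sigma^*}) \setminus \TT(Y)\big) = \varnothing$.

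So let $\A$ and $\B$ be unrestricted Rabin automata and put $X = \Sh_\A$, $Y = \Sh_\B$. By Theorem~\ref{t:subsetCONST2} we effectively construct a co-deterministic finite-tree automaton accepting exactly $\TT(X)$, and by Theorem~\ref{t:complement2} applied to $\B$ we effectively construct a co-complete and co-deterministic finite-tree automaton accepting exactly $\TT(A^{\Sigma^*}) \setminus \TT(Y)$. We then take the product of these two finite-tree automata --- Cartesian product of state sets, product of the sets of initial states, and the suitable product of the accepting data --- to obtain a finite-tree automaton $\mathcal{C}$ whose accepted set of full-tree-patterns is $\TT(X) \cap \big(\TT(A^{\Sigma^*}) \setminus \TT(Y)\big) = \TT(X) \setminus \TT(Y)$. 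Theorem~\ref{t:emptiness} now decides whether $\TT(\mathcal{C}) = \varnothing$, i.e.\ whether $\TT(X) \subseteq \TT(Y)$. Running the symmetric computation with the roles of $\A$ and $\B$ interchanged decides the reverse inclusion, and the algorithm reports that the two automata accept the same tree shift precisely when both emptiness tests succeed.

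The only point that is not a pure invocation of earlier results is the product construction used to realize the intersection $\TT(X) \cap \big(\TT(A^{\Sigma^*}) \setminus \TT(Y)\big)$ as the set of full-tree-patterns accepted by a single finite-tree automaton. One has to check that intersecting a co-deterministic finite-tree automaton with a co-complete and co-deterministic one again yields a genuine finite-tree automaton, and that the bookkeeping of initial and final states is arranged so that a full-tree-pattern is accepted by the product exactly when it is accepted by both factors; this is the standard closure of recognizable sets of full-tree-patterns under intersection, and if it has been recorded earlier in the paper one simply cites it here. I expect this routine product/acceptance bookkeeping, rather than any new idea, to be the only place demanding care.
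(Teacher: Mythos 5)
Your proposal is correct and follows essentially the same route as the paper: reduce $X=Y$ to emptiness of the two differences $\TT(X)\setminus\TT(Y)$ and $\TT(Y)\setminus\TT(X)$, realize each difference as the set of full-tree-patterns accepted by a product (the join of Definition~\ref{d:join}) of finite-tree automata built via Theorems~\ref{t:subsetCONST2} and~\ref{t:complement2}, and then invoke the emptiness test of Theorem~\ref{t:emptiness}. The only cosmetic difference is that the paper joins the two complement automata $\A_1'$ and $\A_2'$ and encodes ``accepted by $\A_1$'' by declaring the non-initial root states of the co-complete, co-deterministic $\A_1'$ to be initial, whereas you join the subset-construction automaton for $\TT(X)$ directly with the complement automaton for $Y$; both products compute the intended intersection, and the bookkeeping you flag as the only delicate point is precisely the inline verification the paper carries out.
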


Note that a particular case of this result can also be deduced for strongly connected unrestricted Rabin automata (in the sense of Aubrun and B\'eal), by a minimization process. Actually, in~\cite{AubrunBeal10} it is shown that there exists a canonical minimal presentation of a ``strongly'' (always in the sense of Aubrun and B\'eal) irreducible sofic tree shift. Thus the decision algorithm consists in computing the minimal automata of the two shifts and checking whether they coincide or not.

The decidability results by Amoroso and Patt for CAs defined on $A^\mathbb{Z}$ have been generalized in \cite{Fiorenzi09} to CAs defined on the subshifts of finite type of $A^\mathbb{Z}$. These results use the notion of a de Bruijn graph.
This idea can be extended to our setting to get an unrestricted Rabin automaton accepting the image (under a CA) of a tree shift of finite type of $A^{\Sigma^*}$. The graph underlying this unrestricted Rabin automaton is a sort of ``multidimensional'' de Bruijn graph. Our final decision results, which use both this construction and Theorem~\ref{t:equality}, yield two algorithms for establishing the surjectivity and the injectivity of CAs. Moreover, the following solutions for the surjectivity
(resp. injectivity) problem for CAs defined on sofic tree shifts (resp.
on sofic tree shifts of finite type) hold.

\begin{theorem}[Surjectivity problem]\label{t:surjectivity} Let $A$ and $B$ be two finite alphabets.
Also let $X \subset A^{\Sigma^*}$ and $Y \subset B^{\Sigma^*}$ be two sofic tree shifts.
Then it is decidable whether a cellular automaton $\tau \colon X \to Y$ is surjective or not.
\end{theorem}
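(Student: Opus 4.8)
The plan is to reduce the surjectivity of $\tau$ to an instance of the equality problem solved by Theorem~\ref{t:equality}. Since $\tau$ is, by hypothesis, a cellular automaton \emph{from} $X$ \emph{to} $Y$, we always have $\tau(X) \subseteq Y$; hence $\tau$ is surjective if and only if $\tau(X) = Y$. Both $\tau(X)$ and $Y$ are sofic tree shifts, so it suffices to produce, effectively from the given data, unrestricted Rabin automata $\A'$ and $\A_Y$ with $\Sh_{\A'} = \tau(X)$ and $\Sh_{\A_Y} = Y$, and then to run the decision algorithm of Theorem~\ref{t:equality} on the pair $(\A', \A_Y)$.

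For $Y$ there is nothing to do: being sofic, $Y$ is presented by an unrestricted Rabin automaton $\A_Y$ with $\Sh_{\A_Y} = Y$. For $\tau(X)$ I would first exhibit $X$ as a factor of a tree shift of finite type. Since $X$ is sofic, fix an unrestricted Rabin automaton accepting $X$; the tree shift $Z \subset Q^{\Sigma^*}$ of its accepting runs (where $Q$ is the state set) is of finite type, and the one-block label-projection $\pi \colon Z \to X$ is a cellular automaton with $\pi(Z) = X$. The composition $\tau \circ \pi \colon Z \to Y$ is again a cellular automaton (one composes the local rules, as in the Curtis--Hedlund--Lyndon description), and $\tau(X) = \tau(\pi(Z)) = (\tau \circ \pi)(Z)$. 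Now apply the ``multidimensional de Bruijn graph'' construction recalled above to the cellular automaton $\tau \circ \pi$ defined on the tree shift of finite type $Z$: this yields, effectively, an unrestricted Rabin automaton $\A'$ with $\Sh_{\A'} = (\tau \circ \pi)(Z) = \tau(X)$ (in particular this re-confirms that $\tau(X)$ is sofic).

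Finally, run the algorithm of Theorem~\ref{t:equality} on $(\A', \A_Y)$: it decides whether $\Sh_{\A'} = \Sh_{\A_Y}$, that is, whether $\tau(X) = Y$, that is, whether $\tau$ is surjective. The routine parts are the composition of local rules and the passage from a Rabin automaton for $X$ to the finite-type cover $Z$. The substantive step is the de Bruijn-type construction for the image of a tree shift of finite type under a cellular automaton, where one must encode simultaneously, in the branching structure of the Rabin automaton, both the finite-type memory constraints of $Z$ and the neighbourhood of the local rule of $\tau \circ \pi$, and verify that the acceptance condition exactly captures $(\tau \circ \pi)(Z)$; the correctness of the whole procedure then rests on that construction together with the complementation and emptiness results (Theorems~\ref{t:subsetCONST2}, \ref{t:complement2} and \ref{t:emptiness}) underlying Theorem~\ref{t:equality}.
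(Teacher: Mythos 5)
Your proposal is correct and follows essentially the same route as the paper: you replace $X$ by a finite-type cover $Z$ with a covering cellular automaton (the paper's $\tau'$, your $\pi$), apply the de Bruijn-type construction of Proposition~\ref{p:A(tau,M,X) accepts tau(X)} to $\tau\circ\pi$ on $Z$ to obtain an unrestricted Rabin automaton presenting $\tau(X)$, and then invoke Theorem~\ref{t:equality} against a presentation of $Y$. The only cosmetic difference is that you spell out the cover as the bundle automaton of runs of a presentation of $X$, which is exactly the effective passage the paper delegates to Propositions~\ref{p:XA sofic} and~\ref{p:A(tau,M,X) accepts tau(X)}.
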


\begin{theorem}[Injectivity problem]\label{t:injectivity} Let $A$ and $B$ be two finite alphabets.
Also let $X \subset A^{\Sigma^*}$ be a tree shift of finite type.
Then it is decidable whether a cellular automaton $\tau \colon X \to B^{\Sigma^*}$ is injective or not.
\end{theorem}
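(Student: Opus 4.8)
\emph{Proof proposal.} The plan is to reduce the injectivity of $\tau$ to an instance of the equality problem for sofic tree shifts, Theorem~\ref{t:equality}, by encoding the statement ``there are two distinct configurations with the same image'' into a single tree shift of finite type. First I would form the \emph{pair tree shift}
\[
\widehat{X} \;=\; \{\, (x,x') \in X \times X \;:\; \tau(x) = \tau(x') \,\} \;\subseteq\; (A\times A)^{\Sigma^*}.
\]
Since $X$ is a tree shift of finite type, so is $X\times X$ (a forbidden pattern for $X$ placed in either coordinate is a forbidden pattern for $X\times X$). The map $\tau\times\tau\colon (A\times A)^{\Sigma^*}\to (B\times B)^{\Sigma^*}$, $(x,x')\mapsto(\tau(x),\tau(x'))$, is a cellular automaton, and $\widehat X$ is exactly $(X\times X)\cap(\tau\times\tau)^{-1}(\Delta_B)$, where $\Delta_B=\{(y,y):y\in B^{\Sigma^*}\}$ is the diagonal tree shift. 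Now $\Delta_B$ is of finite type (obtained by forbidding, at a single node, every pair $(b,b')$ with $b\neq b'$), the preimage of a tree shift of finite type under a cellular automaton is of finite type (pull each forbidden pattern back through the memory set of the local rule, obtaining finitely many new forbidden patterns), and the intersection of two tree shifts of finite type is of finite type; hence $\widehat X$ is a tree shift of finite type, in particular sofic, and from its defining forbidden patterns one can effectively produce an unrestricted Rabin automaton accepting it (the ``de Bruijn--type'' construction, equivalently the automaton obtained by presenting $X\times X$ as a tree shift of finite type and applying the de Bruijn construction associated with the cellular automaton $\tau\times\tau$).

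Next I would record the elementary equivalence. Let $\Delta_X=\{(x,x):x\in X\}$; this is again a tree shift of finite type (a copy of $X$), for which an unrestricted Rabin automaton is effectively constructible. One always has $\Delta_X\subseteq\widehat X$, and $\widehat X\subseteq\Delta_X$ holds precisely when $\tau(x)=\tau(x')$ forces $x=x'$ for all $x,x'\in X$, that is, precisely when $\tau$ is injective. Therefore $\tau$ is injective if and only if $\widehat X=\Delta_X$. Applying the decision procedure of Theorem~\ref{t:equality} to the two unrestricted Rabin automata constructed above decides whether these two sofic tree shifts coincide, hence whether $\tau$ is injective, which proves the theorem.

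As for the difficulty: essentially all of the substantive content has been pushed into Theorem~\ref{t:equality} (which itself rests on Theorems~\ref{t:subsetCONST2}, \ref{t:complement2} and~\ref{t:emptiness}) and into the effective passage from a finite set of forbidden patterns to an unrestricted Rabin automaton; the remaining points are routine, the only mildly technical one being the verification that a preimage of a tree shift of finite type under a cellular automaton is again of finite type, a short memory-set computation. An alternative, more self-contained route would directly mirror the de Bruijn-graph arguments of Amoroso--Patt and of~\cite{Fiorenzi09}: build a ``pair'' automaton whose states are pairs of states of the de Bruijn automaton of $\tau$ on $X$ and whose transitions are pairs of transitions carrying the same output label, and then test, by an emptiness-type argument, whether the accepted tree shift avoids the diagonal. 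There the main obstacle is to formulate the correct diagonal-avoidance condition for trees rather than for bi-infinite lines, which the reduction above sidesteps entirely.
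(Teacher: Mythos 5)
Your reduction is correct, but it is genuinely different from the paper's argument. The paper works on the automaton side: it forms the de Bruijn presentation $\A=\A(\tau,\Delta_n,X)$ of $\tau(X)$ with state set $X_{n-1}$, observes that preimages of a configuration $g$ under $\tau$ correspond bijectively to accepting homomorphisms $\G_g\to\A$, and hence that $\tau$ is noninjective exactly when some homomorphism into the join $\A*\A$ uses a nondiagonal state; this is tested simply by essentializing $\A*\A$ and checking whether any nondiagonal state survives. In particular the paper's injectivity algorithm does \emph{not} pass through Theorem~\ref{t:equality} at all, and is polynomial in the size of $\A*\A$. You instead work on the configuration side, forming the ``equalizer'' shift $\widehat X=(X\times X)\cap(\tau\times\tau)^{-1}(\Delta_B)$ and the diagonal $\Delta_X$, both of finite type with effectively computable forbidden blocks, and deciding $\widehat X=\Delta_X$ via Theorem~\ref{t:equality}. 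This is sound: the memory-set computation showing $(\tau\times\tau)^{-1}(\Delta_B)$ is of finite type is exactly as you say (forbid the patterns $p\in(A\times A)^{\Delta_n}$ with $\mu(\pi_1\circ p)\neq\mu(\pi_2\circ p)$), the equivalence ``injective $\Longleftrightarrow\widehat X=\Delta_X$'' is immediate, and passing from a finite forbidden-block set to an unrestricted Rabin automaton is the same effective step the paper itself relies on in Proposition~\ref{p:A(tau,M,X) accepts tau(X)} and Remark~\ref{r:transducer}. What you lose is efficiency: Theorem~\ref{t:equality} runs through the subset construction and the emptiness test and is exponential, whereas the paper's direct nondiagonal-state test is not; note also that since $\Delta_X\subseteq\widehat X$ always holds you only need one inclusion, though that does not spare you the complementation. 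Finally, the ``alternative, more self-contained route'' you sketch at the end is essentially the paper's actual proof, and the obstacle you anticipate there (formulating diagonal-avoidance on trees) is dispatched in the paper merely by essentializing $\A*\A$ and looking for a surviving nondiagonal state.
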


We conclude our paper by recalling the original notion of a Rabin automaton in its full generality (see~\cite{Rabin69}).
In particular, as stated in the following theorem, we give another description of the class of sofic tree shifts in terms of general Rabin automata.

\begin{theorem}\label{t:rabin+shift=sofic}
A tree shift $X \subset A^{\Sigma^*}$ is sofic if and only if it is recognized by a Rabin automaton. In other words, the intersection of the class of Rabin recognizable tree languages and the class of tree shifts is precisely the class of sofic tree shifts.
\end{theorem}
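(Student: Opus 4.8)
The plan is to prove the two implications separately, making free use of the characterization (proved earlier in the paper) that a tree shift is sofic if and only if it equals $\Sh_\A$ for some unrestricted Rabin automaton $\A$. The implication ``sofic $\Rightarrow$ Rabin recognizable'' is essentially formal: an unrestricted Rabin automaton \emph{is} a Rabin automaton, namely one in which every state is initial and the accepting family is chosen so that every run is accepting (for instance the single Rabin pair $(\varnothing,Q)$, with $Q$ the state set). For such an automaton neither the choice of initial states nor the accepting condition imposes any restriction, so the tree shift $\Sh_\A$ accepted in our sense coincides with the tree language recognized by $\A$ in Rabin's sense; hence every sofic tree shift is Rabin recognizable.

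For the converse, let $\A$ be a Rabin automaton whose recognized tree language $X=L(\A)$ happens to be a tree shift. I would call a state $q$ of $\A$ \emph{useful} if it occurs in some accepting run of $\A$, starting from an initial state, on some configuration of $X$, and then let $\A^{u}$ be the unrestricted Rabin automaton whose states are the useful states, whose transitions are those transitions of $\A$ all of whose states are useful, with every state initial and the trivial accepting condition. The goal is to show $\Sh_{\A^{u}}=X$, which exhibits $X$ as sofic. The inclusion $X\subseteq\Sh_{\A^{u}}$ is immediate: an accepting run $\rho$ of $\A$ on $f\in X$ from an initial state visits only useful states (it witnesses their usefulness itself), so $\rho$ is already a run of $\A^{u}$, whence $f\in\Sh_{\A^{u}}$. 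Two observations will be used for the reverse inclusion: first, $\Sh_{\A^{u}}$ is shift-invariant because every state of $\A^{u}$ is initial; second, since $X$ is shift-invariant, every useful state $q$ occurs as the root state of an accepting run of $\A$ on some configuration \emph{of $X$} (pass to the subtree below the witnessing node).

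The reverse inclusion $\Sh_{\A^{u}}\subseteq X$ is the crux of the argument, and its main obstacle. Because $X$ is a tree shift, a configuration belongs to $X$ as soon as each of its finite patterns occurs in some configuration of $X$, and by the shift-invariance of $\Sh_{\A^{u}}$ it suffices, given $f\in\Sh_{\A^{u}}$ with a run $\rho$ of $\A^{u}$ and given $n$, to show that the pattern $f|_{T_n}$ on the ball $T_n=\{w:|w|\le n\}$ occurs in $X$. One first builds a configuration $g$ with $g|_{T_n}=f|_{T_n}$ carrying an accepting run of $\A$ whose root state is $q_{0}:=\rho(\varepsilon)$: take $\rho$ restricted to $T_n$ and, below each node $w$ of depth $n$ and each child direction $\sigma\in\Sigma$, graft a configuration of $X$ on which $\A$ has an accepting run from the (useful) state $\rho(w\sigma)$ prescribed by $\rho$; the transitions match at depth $n$ precisely because they are the ones used by $\rho$, and every infinite path eventually enters a grafted subtree, so the resulting run is accepting. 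Now $q_{0}$ is useful but perhaps not initial, so $g$ itself need not lie in $X$; to remedy this one takes a configuration $g_{0}\in X$ carrying an accepting run from an initial state through a node $v_{0}$ with state $q_{0}$, replaces the subtree of $g_{0}$ rooted at $v_{0}$ by $g$, and checks that the spliced configuration $h$ carries an accepting run from an initial state, hence $h\in L(\A)=X$; finally, shift-invariance of the tree shift $X$ gives that the subtree of $h$ rooted at $v_{0}$, which is exactly $g$, again lies in $X$. Thus $f|_{T_n}=g|_{T_n}$ occurs in $X$, as required. The delicate point throughout is that passing from $\A$ to the unrestricted automaton $\A^{u}$ discards both the initial states and the Rabin acceptance condition without enlarging the accepted language, and the only thing that legitimizes this is the local (forbidden-pattern) nature of $X$, exploited through the grafting-and-splicing construction together with shift-invariance.
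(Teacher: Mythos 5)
Your proof is correct and follows essentially the same route as the paper: the forward direction is the formal observation that an unrestricted Rabin automaton is a Rabin automaton with trivial initial and accepting data, and the converse strips the acceptance condition by grafting accepting runs below depth $n$ and using closedness, and strips the initial-state condition by splicing into a witnessing configuration and using shift-invariance. The paper merely factors these two moves into separate lemmas (Lemma~\ref{l:no-initial} and Lemma~\ref{l:no-acceptance}) applied to the essential part of the automaton, whereas you merge them into a single argument via the ``useful'' states; the constructions are the same.
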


Part of this paper was presented at
the conference CIAA 2012 (see~\cite{CeccheriniCoornaertFiorenziSunic12}).

\subsection*{Acknowledgements}
We express our deep gratitude to the unknown referee for
      his/her most careful reading of our paper and for several
      valuable remarks and suggestions.

\section{Definitions and background material}

\subsection{The rooted regular tree $\Sigma^*$}\label{ss:free}
Let $\Sigma$ be a nonempty finite set.
Given a nonnegative integer $n \in \mathbb{N}$ we denote by $\Sigma^n$ the set of all \emph{words} $w = \sigma_1\sigma_2 \cdots \sigma_n$ of \emph{length} $n$ (where $\sigma_i \in \Sigma$ for $i=1,2,\dots,n$) over $\Sigma$. In particular, $\varepsilon \in \Sigma^0$ indicates the only word of length $0$, called the \emph{empty word}. For $n \geq 1$, we denote by $\Delta_n$ the set $\bigcup_{i = 0}^{n-1}\Sigma^i$ (that is, the set of all words of length $\leq n-1$). In particular, we have $\Delta_1 = \{\varepsilon\}$ and $\Delta_2 = \{\varepsilon\} \cup \Sigma$.

The \emph{concatenation} of two words $w = \sigma_1\sigma_2 \cdots \sigma_n \in \Sigma^n$ and $w' = \sigma'_1\sigma'_2 \cdots \sigma'_m\in \Sigma^m$ is the word $ww' = \sigma_1\sigma_2 \cdots \sigma_n\sigma'_1\sigma'_2 \cdots \sigma'_m\in \Sigma^{m+n}$. Then the set $\Sigma^* = \bigcup_{n \in \mathbb{N}} \Sigma^n$, equipped with the multiplication given by concatenation, is a monoid with identity element the empty word $\varepsilon$. It is called the \emph{free monoid} over the set $\Sigma$.

From the graph theoretical point of view, we consider $\Sigma^*$ as the vertex set of the regular $k$-ary rooted tree, where $k = \vert \Sigma\vert$. The empty word $\varepsilon$ is its \emph{root} and, for every vertex $w \in \Sigma^*$, the vertices $w\sigma\in \Sigma^*$ (with $\sigma \in \Sigma$) are called the \emph{children} of $w$. Every vertex is connected to each of its children by an 
edge.

\begin{remark}\label{r:factor closed}
Notice that the subsets $\Delta_n$ are \emph{factor closed}, that is, $ww' \in \Delta_n$ implies $w, w' \in \Delta_n$ for all $w, w' \in \Sigma^*$.
\end{remark}

\subsection{Configurations and shift spaces}
Let $A$ be a finite set, called the \emph{alphabet}. The elements of $A$ are called \emph{letters} or \emph{colors}. The \emph{space of configurations} of $\Sigma^*$ over the alphabet $A$ is the set $A^{\Sigma^*}$ of
all maps $f \colon  \Sigma^* \to  A$. When equipped with the \emph{prodiscrete topology} (that is, with the product topology  where each factor $A$ of $A^{\Sigma^*} = \prod_{w \in \Sigma^*}A$ is endowed with the discrete topology), the configuration space is a compact, totally disconnected, metrizable space.
Also, the free monoid $\Sigma^*$ has a right action on $A^{\Sigma^*}$ defined as follows: for every $w\in \Sigma^*$ and $f \in  A^{\Sigma^*}$ the configuration $f^w \in  A^{\Sigma^*}$ is defined by setting
$$f^w(w') = f(ww')$$ for all $w' \in \Sigma^*$. This action, called the \emph{shift action}, is continuous with respect to the prodiscrete topology.

\medskip

Recall that a sub-basis for the prodiscrete topology on $A^{\Sigma^*}$ consists of the \emph{elementary cylinders}
$$\mathcal{C}(w,a) = \{f \in A^{\Sigma^*} : f(w) = a\},$$
where $w \in \Sigma^*$ and $a \in A$.
Moreover, a neighborhood basis of a configuration $f \in A^{\Sigma^*}$ is given by the sets
$$\N(f,n) = \{g \in A^{\Sigma^*} : g\vert_{\Delta_n} = f\vert_{\Delta_n}\}$$
where $n \geq 1$ (as usual, for $M\subset \Sigma^*$, we denote by $f\vert_M$ the restriction of $f$ to $M$).

\begin{definition}[Tree shift]
A subset $X \subset A^{\Sigma^*}$ is called a \emph{tree shift} (or \emph{subshift}, or simply \emph{shift}) provided that $X$ is closed (with
respect to the prodiscrete topology) and \emph{shift-invariant} (that is, $f^w \in X$ for all
$f \in X$ and $w \in \Sigma^*$). In particular $A^{\Sigma^*}$ is a tree shift and it is called the \emph{full (tree) $A$-shift}.
\end{definition}

\subsection{Forbidden blocks and shifts of finite type}
Let $\Sigma$ be a nonempty finite set and let $A$ be a finite alphabet.

\begin{definition}[Pattern and block]
Let $M\subset \Sigma^*$ be a finite set. A \emph{pattern} is a map $p \colon  M \to  A$. The set $M$ is called the \emph{support} of $p$ and it is denoted by ${\rm supp}(p)$. We denote by $A^M$ the set of all patterns with support $M$.
For any $n\geq1$, a \emph{block} is a pattern $p \colon  \Delta_n \to  A$. The integer $n$ is called the \emph{size} of the block.
The set of all blocks is denoted by $\B(A^{\Sigma^*})$.
\end{definition}

If $X$ is a subset of $A^{\Sigma^*}$ and $M \subset \Sigma^*$ is finite,
the set of patterns $\{f\vert_M : f \in X\}$ is denoted by $X_M$. For $n \geq 1$, the notation $X_n$ is an abbreviation for $X_{\Delta_n}$ (that is, the set of all blocks of size $n$ which are restrictions to $\Delta_n$ of some configuration in $X$). We denote by $\B(X)$ the set of all blocks of $X$ (that is, $\B(X)= \bigcup_{n\geq1}X_n$).

Given a block $p \in \B(A^{\Sigma^*})$ and a configuration $f \in A^{\Sigma^*}$, we say that $p$ \emph{appears} in $f$ if there exists $w \in \Sigma^*$ such that $(f^w)\vert_{{\rm supp}(p)} = p$. If $p$ does not appear in $f$, we say that $f$ \emph{avoids} $p$.
Let $\F$ be a set of blocks. We denote by $\mathsf{X}(\F)$ the set of all configurations in $A^{\Sigma^*}$ avoiding simultaneously all the blocks in $\F$, in symbols
$$\mathsf{X}(\F) = \{f \in A^{\Sigma^*} : (f^w)\vert_{\Delta_n} \notin \F, \textup{ for all } w \in \Sigma^* \textup{ and } n \geq 1 \}.$$

If $\vert \Sigma \vert = 1$ we have a one-dimensional setting in which $\Sigma^*$ is identified with $\mathbb{N}$. Indeed, if $\Sigma = \{\sigma\}$, we associate $n\in \mathbb{N}$ with $\sigma^n\in \Sigma^n$, where $\sigma^n$ denotes the word $\underbrace{\sigma\sigma\cdots \sigma}_n$. In this case, a configuration $f \in A^\mathbb{N}$ can be identified with the (right) infinite word $w = a_0a_1\cdots$ over the alphabet $A$ where $a_0 = f(\varepsilon)$ and $a_n = f(\sigma^n)$ for all $n\geq 1$. Analogously, a block of size $n$ can be identified with an element of $A^n$, that is a word of length $n$ over the alphabet $A$. Indeed the set $\Delta_n = \{\varepsilon, \sigma, \sigma\sigma, \dots, \sigma^{n-1}\} \subset \Sigma^*$ is identified with $\{0, 1, 2, \dots, n-1\} \subset \mathbb{N}$.

By analogy with the one-dimensional case (see for example \cite[Theorem 6.1.21]{LindMarcus95}), we have the following combinatorial characterization of tree shifts.

\begin{proposition}\label{p:forbidden}
A subset $X  \subset A^{\Sigma^*}$ is a tree shift if and only if there exists a set
$\F \subset \B(A^{\Sigma^*})$ of blocks such that
$X = \mathsf{X}(\F)$.
\end{proposition}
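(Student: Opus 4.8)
The plan is to adapt the classical argument characterizing subshifts by forbidden patterns (as in \cite[Theorem 6.1.21]{LindMarcus95}) to the free monoid $\Sigma^*$, proving the two implications separately.

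For the ``if'' part, suppose $X = \mathsf{X}(\F)$ for some $\F \subset \B(A^{\Sigma^*})$. I would first record the identity $(f^w)^{w'} = f^{ww'}$, valid for all $f \in A^{\Sigma^*}$ and $w, w' \in \Sigma^*$, which is immediate from the definition of the shift action. Shift-invariance is then a one-line check: if $f \in X$ and $w \in \Sigma^*$, then for every $w' \in \Sigma^*$ and $n \geq 1$ one has $((f^w)^{w'})|_{\Delta_n} = (f^{ww'})|_{\Delta_n} \notin \F$, so $f^w \in X$. For closedness I would verify that the complement is open: if $f \notin X$, pick $w \in \Sigma^*$ and $n \geq 1$ with $(f^w)|_{\Delta_n} \in \F$; writing $\ell$ for the length of $w$ and noting that every word of $\Delta_n$ has length at most $n-1$, the finite set $w\Delta_n$ is contained in $\Delta_{\ell+n}$, so every $g \in \N(f,\ell+n)$ agrees with $f$ on $w\Delta_n$, whence $(g^w)|_{\Delta_n} = (f^w)|_{\Delta_n} \in \F$ and $g \notin X$; thus $\N(f,\ell+n)$ is a neighborhood of $f$ disjoint from $X$.

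For the ``only if'' part, suppose $X$ is a tree shift and set $\F := \B(A^{\Sigma^*}) \setminus \B(X)$, the set of blocks occurring in no configuration of $X$. The inclusion $X \subset \mathsf{X}(\F)$ follows from shift-invariance: for $f \in X$ every $f^w$ lies in $X$, so $(f^w)|_{\Delta_n} \in X_n \subset \B(X)$ is not forbidden. For the reverse inclusion, take $f \in \mathsf{X}(\F)$; specializing the defining condition to $w = \varepsilon$ gives $f|_{\Delta_n} \notin \F$, i.e.\ $f|_{\Delta_n} \in X_n$ for every $n \geq 1$ (the support $\Delta_n$ forces the size), so we may choose $g_n \in X$ with $g_n|_{\Delta_n} = f|_{\Delta_n}$, that is, $g_n \in \N(f,n)$. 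Since $\{\N(f,n)\}_{n \geq 1}$ is a neighborhood basis of $f$, the sequence $(g_n)$ converges to $f$, and as $X$ is closed we conclude $f \in X$. Hence $\mathsf{X}(\F) = X$.

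I do not expect a serious obstacle here; the points that need a little care are the monoid-action identity $(f^w)^{w'} = f^{ww'}$ (which reduces shift-invariance to a triviality), the observation that any translated finite support $w\Delta_n$ sits inside a single $\Delta_m$ (so that non-membership in $\mathsf{X}(\F)$ is witnessed on a basic neighborhood, giving closedness), and — for the converse — the use of the closedness of $X$ to pass from the finite approximations $g_n \in X$ to their limit $f$.
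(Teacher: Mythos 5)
Your proof is correct and is precisely the standard argument that the paper leaves implicit, citing only the one-dimensional analogue in Lind--Marcus: shift-invariance via $(f^w)^{w'}=f^{ww'}$, closedness via the containment $w\Delta_n\subset\Delta_{|w|+n}$ of a witnessing support in a basic neighborhood, and for the converse the choice $\F=\B(A^{\Sigma^*})\setminus\B(X)$ together with a limiting argument using the closedness of $X$ (consistent with Remark~\ref{r:B(X) determines X}). No issues.
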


Let $X \subset A^{\Sigma^*}$ be a tree shift. A set $\F$ of blocks as in Proposition~\ref{p:forbidden} is called a \emph{defining set of forbidden blocks} for $X$.
If one can find a finite defining set of forbidden blocks for $X$ one says that $X$ is a tree shift of \emph{finite type}.

\begin{remark}\label{r:B(X) determines X}The blocks of a tree shift determine the tree shift. In fact, given two tree shifts $X,Y \subset A^{\Sigma^*}$, we have
$X = \Sh(\B(A^{\Sigma^*}) \setminus \B(X))$
so that
$X = Y \Longleftrightarrow \B(X) = \B(Y).$ Moreover, $\B(A^{\Sigma^*})\setminus \B(X)$ is the largest defining set of forbidden blocks for $X$ (with respect to inclusion).
\end{remark}

\begin{remark}\label{r:memory}
Let $\F = \{p_1, \dots p_h\}$ be a finite set of blocks with supports $\Delta_{n_1}, \dots, \Delta_{n_h}$, respectively. Let $n = \max\{n_1, \dots, n_h\}$. Consider the set $\F' \subset A^{\Delta_n}$ consisting of all blocks $p$ such that $p\vert_{\Delta_{n_i}} = p_i$ for some $i \in \{1, \dots, h\}$. In other words, $\F'$ consists of all the possible extensions to $\Delta_n$ of any block in $\F$.
It is clear that $\F'$ is finite and $\mathsf{X}(\F) = \mathsf{X}(\F')$.
For this reason, we can always suppose that the forbidden blocks of a defining set of a given tree shift of finite type all have the same support. This motivates the following definition:
\end{remark}

\begin{definition}[Memory]\label{d:memory}
A tree shift of finite type has \emph{memory} $n$ if it admits a defining set of forbidden blocks all of size $n$.
\end{definition}
Notice that if a tree shift has memory $n$, then it also has memory $m$ for all $m\geq n$.

\subsection{Examples of shifts}
Let $\Sigma$ be a nonempty finite set and let $A$ be a finite alphabet.
\begin{example}[Full tree shift]
The space $A^{\Sigma^*}$ of all configurations over $A$ is clearly a tree shift of finite type. Indeed, the empty set is the unique defining set of forbidden blocks for $A^{\Sigma^*}$.
\end{example}

\begin{example}[Monochromatic children]\label{e:monochromaticchildren}
Consider the set of blocks
$$\F = \left\{
{\scriptsize
\begin{tabular}[c]{@{}l@{}}
\Tree [.$a$ $...\phantom{a}$ !\qsetw{0.05cm} $a_\sigma$ !\qsetw{0.05cm} $...$ !\qsetw{0.05cm} $\phantom{a}a_{\sigma'}$ !\qsetw{0.05cm} $\phantom{a}...$ ]
\end{tabular}
}
\in A^{\Delta_2}  : a_\sigma \neq a_{\sigma'} \textup{ for some } \sigma, \sigma' \in \Sigma\right\}.$$
The tree shift $\Sh(\F) \subset A^{\Sigma^*}$ is of finite type and exactly consists of those configurations for which every vertex in $\Sigma^*$ has monochromatic children.
If $\vert \Sigma \vert = 2$ and $A = \{a,b\}$ an example of a configuration in $\Sh(\F)$ is sketched in Figure~\ref{ALBmonochromaticchildren}.
\end{example}
\begin{figure}[!h]
\qtreecenterfalse
\scriptsize
\Tree [.$a$ [.$b$ [.$a$ [.$a$ $...$ !\qsetw{0.1cm} $...$ ] !\qsetw{0.5cm} [.$a$ $...$ !\qsetw{0.1cm} $...$ ] ] !\qsetw{1.2cm} [.$a$ [.$b$ $...$ !\qsetw{0.1cm} $...$ ] !\qsetw{0.5cm} [.$b$ $...$ !\qsetw{0.1cm} $...$ ] ] ] !\qsetw{3cm} [.$b$ [.$b$ [.$b$ $...$ !\qsetw{0.1cm} $...$ ] !\qsetw{0.5cm} [.$b$ $...$ !\qsetw{0.1cm} $...$ ] ] !\qsetw{1.2cm} [.$b$ [.$b$ $...$ !\qsetw{0.1cm} $...$ ] !\qsetw{0.5cm} [.$b$ $...$ !\qsetw{0.1cm} $...$ ] ] ] ]
\caption{A configuration of the tree shift in Example~\ref{e:monochromaticchildren}.}\label{ALBmonochromaticchildren}
\end{figure}

\begin{example}[Even sum]\label{e:zerosummodtwo}
Let $A = \{0,1\}$.
Consider the set of blocks
$$\F = \left\{
{\scriptsize
\begin{tabular}[c]{@{}l@{}}
\Tree [.$a$ $...\phantom{a}$ !\qsetw{0.05cm} $a_\sigma$ !\qsetw{0.05cm} $...$ !\qsetw{0.05cm} $\phantom{a}a_{\sigma'}$ !\qsetw{0.05cm} $\phantom{a}...$ ]
\end{tabular}
}
\in A^{\Delta_2}  : a + \sum_{\sigma \in \Sigma} a_\sigma \equiv 1 \mod 2\right\}.$$
The tree shift $\Sh(\F) \subset A^{\Sigma^*}$ is of finite type and exactly consists of those configurations such that the sum of the label of any vertex in $\Sigma^*$ with the labels of its children is always even.
If $\vert \Sigma \vert = 2$, an example of configuration of $\Sh(\F)$ is given in Figure~\ref{ALBzerosummod2}.
\end{example}
\begin{figure}[!h]
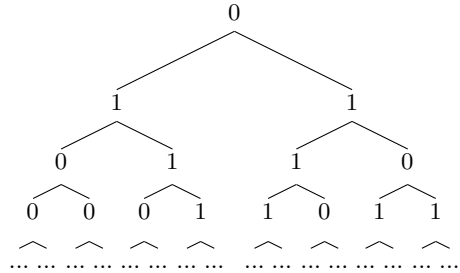

\qtreecenterfalse
\scriptsize
\Tree [.$0$ [.$1$ [.$0$ [.$0$ $...$ !\qsetw{0.1cm} $...$ ] !\qsetw{0.5cm} [.$0$ $...$ !\qsetw{0.1cm} $...$ ] ] !\qsetw{1.2cm} [.$1$ [.$0$ $...$ !\qsetw{0.1cm} $...$ ] !\qsetw{0.5cm} [.$1$ $...$ !\qsetw{0.1cm} $...$ ] ] ] !\qsetw{3cm} [.$1$ [.$1$ [.$1$ $...$ !\qsetw{0.1cm} $...$ ] !\qsetw{0.5cm} [.$0$ $...$ !\qsetw{0.1cm} $...$ ] ] !\qsetw{1.2cm} [.$0$ [.$1$ $...$ !\qsetw{0.1cm} $...$ ] !\qsetw{0.5cm} [.$1$ $...$ !\qsetw{0.1cm} $...$ ] ] ] ]
\caption{A configuration of the tree shift in Example~\ref{e:zerosummodtwo}.}\label{ALBzerosummod2}
\end{figure}

\subsection{Cellular automata and sofic tree shifts}
Let $\Sigma$ be a nonempty finite set. Let $A$ and $B$ two finite alphabets, and let $X \subset A^{\Sigma^*}$ be a tree shift.
\begin{definition}[Cellular automaton]
A map $\tau \colon X \to B^{\Sigma^*}$ is called a \emph{cellular automaton} (CA for short) if there exist a finite subset $M \subset \Sigma^*$ and a map $\mu \colon  A^M \to  B$ such that
$$\tau(f)(w) = \mu((f^w)\vert_M)$$
for all $f \in X$ and $w \in \Sigma^*$. The set $M$ is called a \emph{memory set} for $\tau$ and $\mu$ is the associated \emph{local defining map}.
\end{definition}

\begin{remark}\label{r:samealphabet}
Classically, a cellular automaton is often a selfmapping $\tau \colon X \to X$ (in particular
$A = B$). By dropping this hypothesis, we deal with a more general notion that, in the one-dimensional case, corresponds to that of \emph{sliding block code} as defined in~\cite{LindMarcus95}.
\end{remark}

\begin{remark}\label{r:triangle}
Let $\tau \colon   X \to  B^{\Sigma^*}$ be a CA with memory set $M$ and local defining map $\mu \colon  A^M \to  B$.
If $M' \subset \Sigma^*$ is a finite subset containing $M$, consider the map $\mu' \colon  A^{M'} \to  B$ defined by $\mu'(p) = \mu(p\vert_M)$, for all $p \in A^{M'}$.
Then $M'$ and $\mu'$ are respectively a memory set and a local defining map for $\tau$ as well.
In view of this fact, we assume in the sequel (without loss of generality), that a memory set has the form $M = \Delta_n$, for a suitable $n\geq 1$.
\end{remark}

The following is a topological characterization of cellular automata. For a proof in the one-dimensional case, see \cite[Theorem 6.2.9]{LindMarcus95}. See also \cite[Theorem 1.8.1]{livre} and \cite[Proposition 1.2.4]{Fiorenzi00thesis}, for a more general setting.

\begin{theorem}[Curtis-Hedlund-Lyndon]\label{t:curtis}
A map $\tau \colon   X \to  B^{\Sigma^*}$ is a CA if and only if it is continuous (with respect to the prodiscrete topology) and commutes with the shift action (that is, $(\tau(f))^w = \tau(f^w)$ for all $f \in X$ and $w \in \Sigma^*$).
\end{theorem}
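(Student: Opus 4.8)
The plan is to prove the two implications separately, following the classical one-dimensional scheme (cf.\ \cite[Theorem 6.2.9]{LindMarcus95}) while keeping track of the tree structure through the factor-closed sets $\Delta_n$ (Remark~\ref{r:factor closed}).

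For the ``only if'' direction I would start from a CA $\tau$ and, by Remark~\ref{r:triangle}, assume its memory set is $M = \Delta_n$ for some $n \geq 1$, with local defining map $\mu \colon A^{\Delta_n} \to B$. Shift-commutation is then a one-line computation: from $(f^w)^{w'} = f^{ww'}$ (immediate from the definition of the shift action) one gets, for all $w' \in \Sigma^*$, $(\tau(f))^w(w') = \tau(f)(ww') = \mu((f^{ww'})\vert_{\Delta_n}) = \mu(((f^w)^{w'})\vert_{\Delta_n}) = \tau(f^w)(w')$, hence $(\tau(f))^w = \tau(f^w)$. For continuity I would fix $f \in X$ and $m \geq 1$ and use that $\Delta_m \cdot \Delta_n \subset \Delta_{m+n-1}$ (words of length $\leq m-1$ times words of length $\leq n-1$ have length $\leq m+n-2$): any $g \in X$ with $g\vert_{\Delta_{m+n-1}} = f\vert_{\Delta_{m+n-1}}$ satisfies $(g^w)\vert_{\Delta_n} = (f^w)\vert_{\Delta_n}$ for every $w \in \Delta_m$, so $\tau(g)$ and $\tau(f)$ agree on $\Delta_m$; that is, $\tau(\N(f, m+n-1)) \subset \N(\tau(f), m)$.

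For the ``if'' direction, assume $\tau$ is continuous and shift-commuting. The evaluation $\pi_\varepsilon \colon B^{\Sigma^*} \to B$, $g \mapsto g(\varepsilon)$, is continuous, hence so is $\pi_\varepsilon \circ \tau \colon X \to B$; since $B$ is discrete and $X$ is compact (a closed subset of the compact space $A^{\Sigma^*}$), for each $f \in X$ there is $n_f \geq 1$ with $\tau(g)(\varepsilon) = \tau(f)(\varepsilon)$ for all $g \in \N(f, n_f) \cap X$. Extract from the open cover $\{\N(f, n_f) \cap X\}_{f \in X}$ of $X$ a finite subcover indexed by $f_1,\dots,f_k$ and put $n = \max_i n_{f_i}$. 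Then $f\vert_{\Delta_n} = g\vert_{\Delta_n}$ (for $f,g\in X$) forces $\tau(f)(\varepsilon)=\tau(g)(\varepsilon)$: choosing $i$ with $f \in \N(f_i,n_{f_i})$ and using $n_{f_i}\leq n$ gives $g \in \N(f_i,n_{f_i})$ as well, so both values equal $\tau(f_i)(\varepsilon)$. Therefore $f\vert_{\Delta_n}\mapsto \tau(f)(\varepsilon)$ is a well-defined map $X_{\Delta_n}\to B$, which I extend arbitrarily to $\mu \colon A^{\Delta_n}\to B$. Finally, by shift-invariance of $X$ and shift-commutation of $\tau$, for all $f\in X$ and $w\in\Sigma^*$ one has $\tau(f)(w) = (\tau(f))^w(\varepsilon) = \tau(f^w)(\varepsilon) = \mu((f^w)\vert_{\Delta_n})$, so $\tau$ is a CA with memory set $\Delta_n$ and local defining map $\mu$.

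The only genuinely delicate point is the passage to a single \emph{uniform} radius $n$ in the converse: the continuity of $\tau$ only yields, a priori, a radius depending on the configuration, and these must be amalgamated into one global bound — this is precisely where compactness of $X$ is used. Everything else is bookkeeping with the sets $\Delta_n$ and the identity $(f^w)^{w'} = f^{ww'}$.
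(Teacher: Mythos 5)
Your proof is correct and is the standard Curtis--Hedlund--Lyndon argument (local rule $\Rightarrow$ shift-commutation plus uniform continuity via $\Delta_m\cdot\Delta_n\subset\Delta_{m+n-1}$; conversely, compactness of the closed subset $X\subset A^{\Sigma^*}$ to extract a uniform radius for the locally constant map $f\mapsto\tau(f)(\varepsilon)$, then shift-commutation to propagate the local rule to all of $\Sigma^*$). The paper itself gives no proof and simply refers to the one-dimensional and general-group versions in the literature, and your argument is precisely the expected adaptation of those proofs to the rooted-tree setting.
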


\begin{remark}\label{r:image}
Given a cellular automaton $\tau \colon X \to  B^{\Sigma^*}$, it immediately follows from Theorem~\ref{t:curtis} and the compactness of $X$, that the image $\tau(X) \subset B^{\Sigma^*}$ is again a tree shift.
\end{remark}

\begin{definition}[Sofic tree shift]\label{d:sofic}
A tree shift $X \subset B^{\Sigma^*}$ is called \emph{sofic} provided there exist a finite
alphabet $A$, a tree shift of finite type $Y \subset A^{\Sigma^*}$, and a CA
$\tau \colon Y \to B^{\Sigma^*}$ such that $X = \tau(Y)$.
\end{definition}


\begin{remark}
Every tree shift of finite type is sofic.
\end{remark}

\begin{example}[A sofic tree shift which is not of finite type]\label{e:evenshift}
Let $A = \{0,1\}$ and let $\vert \Sigma \vert = 1$. The subshift $Y = \Sh(\F)\subset A^\mathbb{N}$ where $\F = \{11\}$, is called the \emph{golden mean shift}. Thus $Y$ is the subshift of finite type containing all the right infinite words avoiding the factor $11$. Consider the cellular automaton $\tau \colon  Y \to  A^\mathbb{N}$ defined by the local defining map $\mu \colon  A^2 \to  A$ associating $0$ with $01$ and $10$, and associating $1$ with $00$ and $11$. As one can see, the image $\tau(Y)$ is precisely the subshift $X \subset \{0,1\}^\mathbb{N}$ consisting of all right infinite words in which, between two occurrences of $1$s, there always is an even number of occurrences of $0$s. More precisely, $X = \Sh(\F')\subset A^\mathbb{N}$ where $\F' = \{10^{2n+1}1 : n \in \mathbb{N}\}$.
For this reason, $X$ is called the \emph{even shift}. Thought sofic (by definition), it is easy to see that $X$ cannot be of finite type (see \cite[Example 2.1.5, Example 2.1.9]{LindMarcus95}). Indeed, if $N \in \mathbb{N}$ was the memory of $X$, the block $10^{2N+1}1$ would be forbidden with a length greater than~$N$.

The notion of even shift can be generalized to any $\Sigma$ as follows.
%
We define $X \subset A^{\Sigma^*}$ as the tree shift
consisting of all configurations $f \colon \Sigma^* \to A$ avoiding the words of $\F'$ on any branch of the tree $\Sigma^*$.
Again, by using Remark~\ref{r:memory} and Definition~\ref{d:memory}, it can be seen that $X$ is not of finite type.
In Example~\ref{e:even1dimensional} we shall deduce from Proposition~\ref{p:XA sofic} and
Corollary~\ref{c:sofic iff accepted} that $X$ is sofic.

\end{example}

\section{Unrestricted Rabin graphs and automata}\label{s:Unrestricted Rabin automata}
\begin{definition}
An \emph{unrestricted Rabin graph} (over $\Sigma$ and with alphabet set $A$), is a $4$-tuple $\G = (S,\Sigma,A,\T)$, where
\begin{itemize}
\item $\Sigma$ is a nonempty finite set and $A$ is a finite alphabet;
\item $S$ is a nonempty set, called the set of \emph{states} (or \emph{vertices}) of $\G$;
\item $\T$ is a subset of $S \times A \times S^\Sigma$ whose elements are called \emph{transition bundles}.
\end{itemize}
When the state set $S$ is finite, an unrestricted Rabin graph $\A = (S,\Sigma,A,\T)$ is called an \emph{unrestricted Rabin automaton}.

Given a transition bundle $t = (s;a;(s_\sigma)_{\sigma \in \Sigma}) \in \T$ we denote by
$\mathbf{i}(t) := s \in S$ its \emph{source state}, by $\lambda(t) := a \in A$ its \emph{label}, by $\mathbf{t}(t) := (s_\sigma)_{\sigma \in \Sigma} \in S^\Sigma$ its \emph{terminal sequence} and by $\mathbf{t}_\sigma(t) :=  s_\sigma \in S$ its $\sigma$-\emph{terminal state}.
A \emph{bundle loop on $s \in S$} is a transition bundle $t \in \T$ such that $\mathbf{i}(t) = \mathbf{t}_\sigma(t) = s$ for all $\sigma \in \Sigma$.

An unrestricted Rabin graph $\G = (S,\Sigma,A,\T)$ is said to be \emph{essential} provided that each state $s \in S$ is the source state of some transition bundle.
\end{definition}

\begin{remark}
 Our notion of essentiality differs with the one used in the two-sided one-dimensional case (see~\cite[Definition 2.2.9]{LindMarcus95}). In this setting we want that  each state is the source of some transition bundle but we allow states which are not terminal. Our trees being rooted, we do not need to go backward. This fails to hold in the above mentioned
two-sided one-dimensional case where the automaton has to recognize biinfinite words.
\end{remark}

Note that \emph{multiple transition bundles}, that is, different transition bundles having the same source state and the same terminal sequence, are allowed.
Nevertheless, the number of transition bundles between a given source state and a given terminal sequence must be finite (cannot exceed $\vert A \vert$): two such bundles must either have different labels or coincide if they have the same label.

\subsection{Graphical representation}
Let $\Sigma$ be a nonempty finite set. If $\vert \Sigma \vert = k$ we may identify $\Sigma$ with the set  $\{0, 1, \dots, k-1\}$.
This way, a transition bundle of an unrestricted Rabin automaton $\A = (S,\Sigma,A,\T)$ is a $(k+2)$-tuple $t = (s;a;s_0, \dots, s_{k-1})$ which can be visualized as in Figure~\ref{FIGautomaton}.
If $\vert \Sigma \vert = 2$ and $(s;a;s_0, s_1)$ is a transition bundle, we represent the edge from $s$ to $s_0$ by a broken line and the edge from $s$ to $s_1$ by a full line. This makes unnecessary the need to pay attention to the graphical position of the $i$-th child of the state~$s$ (see Figure~\ref{FIG2bundle}).
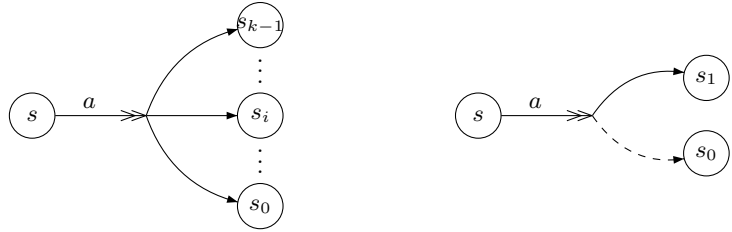
\begin{figure}[h!]
\scriptsize
\centering
\subfigure[A general labeled transition bundle.]{
\begin{picture}(50,30)(-25,-18)
\gasset{Nw=6,Nh=6}
\node(S)(-15,0){$s$}
\node(SK)(15,12){$s_{k-1}$}
\node[Nframe=n,Nw=-1,Nh=-1](SD)(15,7){$\vdots$}
\node(SI)(15,0){$s_i$}
\node[Nframe=n,Nw=-1,Nh=-1](SDD)(15,-5){$\vdots$}
\node(S1)(15,-12){$s_0$}
\node[Nframe=n,Nw=-1,Nh=-1](SS)(0,0){}
\drawedge[AHnb=2,AHangle=20,AHLength=2,AHlength=0](S,SS){$a$}
\drawedge[curvedepth=3](SS,SK){}
\drawedge(SS,SI){}
\drawedge[curvedepth=-3](SS,S1){}
\end{picture}\label{FIGautomaton}
}\quad\quad
\subfigure[A labeled transition bundle of an unrestricted Rabin automaton in which $\Sigma = \{0,1\}$.]{
\begin{picture}(50,16)(-25,-18)
\gasset{Nw=6,Nh=6}
\node(S)(-15,0){$s$}
\node(S1)(15,5){$s_1$}
\node(S2)(15,-5){$s_0$}
\node[Nframe=n,Nw=-1,Nh=-1](SS)(0,0){}
\drawedge[AHnb=2,AHangle=20,AHLength=2,AHlength=0](S,SS){$a$}
\drawedge[curvedepth=3](SS,S1){}
\drawedge[curvedepth=-3,dash={1}0](SS,S2){}
\end{picture}\label{FIG2bundle}
}
\caption{Representations of a transition bundle.}
\label{FIGbundles}
\end{figure}

\subsection{Acceptance}
Let $\Sigma$ be a nonempty finite set and let $A$ be a finite alphabet.
\begin{definition}[Unrestricted Rabin graph of a configuration]
\label{d:configuration}
The \emph{unrestricted Rabin graph of a configuration $f \in A^{\Sigma^*}$} is defined by $\G_f = (\Sigma^*,\Sigma,A,\T_f)$
where
$$\T_f = \{(w;f(w);(w\sigma)_{\sigma \in \Sigma}): w \in \Sigma^*\}.$$
\end{definition}

\begin{example}
Let $\Sigma = \{0,1\}$ and $A = \{a,b\}$. We denote by $f \in A^{\Sigma^*}$ the configuration sketched in Figure~\ref{ALBmonochromaticchildren}. The unrestricted Rabin graph of $f$ is illustrated in Figure~\ref{FIGrabingraph}.

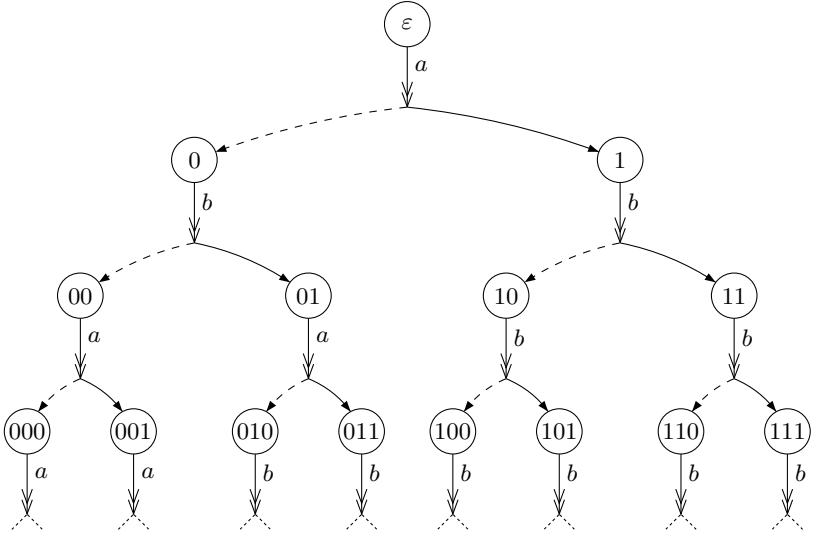
\begin{figure}[h!]
\scriptsize

\begin{center}
\begin{picture}(0,71)(0,-13)
\gasset{Nw=6,Nh=6}
\node(E)(0,58){$\varepsilon$}
\node[Nframe=n,Nw=-1,Nh=-1](EE)(0,47){}

\node(0)(-28,40){$0$}
\node[Nframe=n,Nw=-1,Nh=-1](0S)(-28,29){}
\node(1)(28,40){$1$}
\node[Nframe=n,Nw=-1,Nh=-1](1S)(28,29){}

\node(00)(-43,22){$00$}
\node[Nframe=n,Nw=-1,Nh=-1](00S)(-43,11){}
\node(01)(-13,22){$01$}
\node[Nframe=n,Nw=-1,Nh=-1](01S)(-13,11){}
\node(10)(13,22){$10$}
\node[Nframe=n,Nw=-1,Nh=-1](10S)(13,11){}
\node(11)(43,22){$11$}
\node[Nframe=n,Nw=-1,Nh=-1](11S)(43,11){}

\node(000)(-50,4){$000$}
\node[Nframe=n,Nw=-1,Nh=-1](000S)(-50,-7){}
\node(001)(-36,4){$001$}
\node[Nframe=n,Nw=-1,Nh=-1](001S)(-36,-7){}
\node(010)(-20,4){$010$}
\node[Nframe=n,Nw=-1,Nh=-1](010S)(-20,-7){}
\node(011)(-6,4){$011$}
\node[Nframe=n,Nw=-1,Nh=-1](011S)(-6,-7){}
\node(100)(6,4){$100$}
\node[Nframe=n,Nw=-1,Nh=-1](100S)(6,-7){}
\node(101)(20,4){$101$}
\node[Nframe=n,Nw=-1,Nh=-1](101S)(20,-7){}
\node(110)(36,4){$110$}
\node[Nframe=n,Nw=-1,Nh=-1](110S)(36,-7){}
\node(111)(50,4){$111$}
\node[Nframe=n,Nw=-1,Nh=-1](111S)(50,-7){}
\node[Nframe=n,Nadjust=w,Nh=5](0000)(-53,-10){$$}
\node[Nframe=n,Nadjust=w,Nh=5](0001)(-47,-10){$$}
\node[Nframe=n,Nadjust=w,Nh=5](0010)(-39,-10){$$}
\node[Nframe=n,Nadjust=w,Nh=5](0011)(-33,-10){$$}
\node[Nframe=n,Nadjust=w,Nh=5](0100)(-23,-10){$$}
\node[Nframe=n,Nadjust=w,Nh=5](0101)(-17,-10){$$}
\node[Nframe=n,Nadjust=w,Nh=5](0110)(-9,-10){$$}
\node[Nframe=n,Nadjust=w,Nh=5](0111)(-3,-10){$$}
\node[Nframe=n,Nadjust=w,Nh=5](1000)(3,-10){$$}
\node[Nframe=n,Nadjust=w,Nh=5](1001)(9,-10){$$}
\node[Nframe=n,Nadjust=w,Nh=5](1010)(17,-10){$$}
\node[Nframe=n,Nadjust=w,Nh=5](1011)(23,-10){$$}
\node[Nframe=n,Nadjust=w,Nh=5](1100)(33,-10){$$}
\node[Nframe=n,Nadjust=w,Nh=5](1101)(39,-10){$$}
\node[Nframe=n,Nadjust=w,Nh=5](1110)(47,-10){$$}
\node[Nframe=n,Nadjust=w,Nh=5](1111)(53,-10){$$}
\drawedge[AHnb=2,AHangle=20,AHLength=2,AHlength=0](E,EE){$a$}
\drawedge[AHnb=2,AHangle=20,AHLength=2,AHlength=0](0,0S){$b$}
\drawedge[AHnb=2,AHangle=20,AHLength=2,AHlength=0](1,1S){$b$}
\drawedge[AHnb=2,AHangle=20,AHLength=2,AHlength=0](00,00S){$a$}
\drawedge[AHnb=2,AHangle=20,AHLength=2,AHlength=0](01,01S){$a$}
\drawedge[AHnb=2,AHangle=20,AHLength=2,AHlength=0](10,10S){$b$}
\drawedge[AHnb=2,AHangle=20,AHLength=2,AHlength=0](11,11S){$b$}
\drawedge[AHnb=2,AHangle=20,AHLength=2,AHlength=0](000,000S){$a$}
\drawedge[AHnb=2,AHangle=20,AHLength=2,AHlength=0](001,001S){$a$}
\drawedge[AHnb=2,AHangle=20,AHLength=2,AHlength=0](010,010S){$b$}
\drawedge[AHnb=2,AHangle=20,AHLength=2,AHlength=0](011,011S){$b$}
\drawedge[AHnb=2,AHangle=20,AHLength=2,AHlength=0](100,100S){$b$}
\drawedge[AHnb=2,AHangle=20,AHLength=2,AHlength=0](101,101S){$b$}
\drawedge[AHnb=2,AHangle=20,AHLength=2,AHlength=0](110,110S){$b$}
\drawedge[AHnb=2,AHangle=20,AHLength=2,AHlength=0](111,111S){$b$}

{
\gasset{curvedepth=1}
\drawedge(EE,1){}
\drawedge(0S,01){}
\drawedge(1S,11){}
\drawedge(00S,001){$$}
\drawedge(01S,011){$$}
\drawedge(10S,101){$$}
\drawedge(11S,111){$$}
}
{
\gasset{curvedepth=-1,dash={1}0}
\drawedge(EE,0){}
\drawedge(0S,00){$$}
\drawedge(1S,10){$$}
\drawedge(00S,000){$$}
\drawedge(10S,100){$$}
\drawedge(01S,010){$$}
\drawedge(11S,110){$$}
}

\drawedge[AHnb=0,dash={0.4}0](000S,0000){}
\drawedge[AHnb=0,dash={0.4}0](000S,0001){}
\drawedge[AHnb=0,dash={0.4}0](001S,0010){}
\drawedge[AHnb=0,dash={0.4}0](001S,0011){}
\drawedge[AHnb=0,dash={0.4}0](010S,0100){}
\drawedge[AHnb=0,dash={0.4}0](010S,0101){}
\drawedge[AHnb=0,dash={0.4}0](011S,0110){}
\drawedge[AHnb=0,dash={0.4}0](011S,0111){}
\drawedge[AHnb=0,dash={0.4}0](100S,1000){}
\drawedge[AHnb=0,dash={0.4}0](100S,1001){}
\drawedge[AHnb=0,dash={0.4}0](101S,1010){}
\drawedge[AHnb=0,dash={0.4}0](101S,1011){}
\drawedge[AHnb=0,dash={0.4}0](110S,1100){}
\drawedge[AHnb=0,dash={0.4}0](110S,1101){}
\drawedge[AHnb=0,dash={0.4}0](111S,1110){}
\drawedge[AHnb=0,dash={0.4}0](111S,1111){}
\end{picture}
\end{center}\caption{The unrestricted Rabin graph of the configuration in Figure~\ref{ALBmonochromaticchildren}.}\label{FIGrabingraph}
\end{figure}


\end{example}

\begin{definition}[Homomorphism]\label{d:homomorphism}
Let $\G_1 = (S_1,\Sigma,A,\T_1)$ and $\G_2 = (S_2,\Sigma,A,\T_2)$ be two unrestricted Rabin graphs. A \emph{homomorphism} from $\G_1$ to $\G_2$
is a map $\alpha \colon  S_1 \to  S_2$ such that
\begin{equation}
(\alpha(s);a;(\alpha(s_\sigma))_{\sigma\in\Sigma}) \in \T_2
\end{equation}
for each $(s;a;(s_\sigma)_{\sigma\in\Sigma}) \in \T_1$.
By abuse of language/notation, we also denote by $\alpha \colon  \G_1 \to  \G_2$ such a
homomorphism.
\end{definition}

\begin{definition}[Acceptance]
Let $\A = (S,\Sigma,A,\T)$ be an unrestricted Rabin automaton. We say that a configuration $f \in A^{\Sigma^*}$ is \emph{accepted} (or \emph{recognized}) by $\A$, if there exists a homomorphism $\alpha \colon  \G_f \to  \A$, where $\G_f$ denotes the unrestricted Rabin graph of the configuration $f$.
In this case, we say that \emph{$f$ is accepted by $\A$ via} $\alpha$.\\
We denote by $\Sh_\A$ the set consisting of all those configurations $f \in A^{\Sigma^*}$ accepted by $\A$, that is
$$
\Sh_\A = \{f \in A^{\Sigma^*} : \textup{there exists a homomorphism } \alpha \colon  \G_f \to  \A\}.
$$
An unrestricted Rabin automaton $\A$ is called a \emph{presentation} for $X \subset A^{\Sigma^*}$ provided that $X = \Sh_\A$.
\end{definition}

\begin{remark}\label{r:essential}
We could always consider essential unrestricted Rabin automata. This is not restrictive since, by recursively removing all states that are source of no transition bundles together with all transition bundles admitting these states as terminal states, we can transform any unrestricted Rabin automaton $\A$ into an essential one $\A'$ which accepts the same configuration set, i.e., such that $\Sh_\A = \Sh_{\A'}$.
\end{remark}

\begin{definition}
A \emph{bundle automaton} is an unrestricted Rabin automaton $\mathfrak{A} = (S,\Sigma,A,\T)$ in which the labeling map $\lambda \colon \T \to A$ is injective. That is, two distinct elements in $\T$ carry different labels.
\end{definition}

\begin{definition}
Let $\A = (S,\Sigma,A,\T)$ be an unrestricted Rabin automaton. The \emph{underlying bundle automaton} of $\A$ is the bundle automaton $\mathfrak{A} = (S,\Sigma,\T,\T')$ defined over the alphabet $\T$, such that
$$\T' = \{(s;t;(s_\sigma)_{\sigma\in\Sigma}) : t \in \T \textup{ and } t = (s;a;(s_\sigma)_{\sigma\in\Sigma})\}.$$
\end{definition}
Roughly speaking, a transition bundle of $\mathfrak{A}$ is obtained by replacing the label of any transition bundle $t = (s;a;(s_\sigma)_{\sigma \in \Sigma})$ of $\A$
by $t$ itself.
Equivalently, a bundle $(s;t;(s_\sigma)_{\sigma\in\Sigma}) \in S \times \T \times S^\Sigma$ belongs to $\T'$ if and only if $\mathbf{i}(t) = s$ and $\mathbf{t}_\sigma(t) = s_\sigma$ for each $\sigma\in\Sigma$.

\begin{remark}\label{r:acceptance} Explicitly, a configuration $f \in A^{\Sigma^*}$ is accepted by an unrestricted Rabin automaton $\A = (S,\Sigma,A,\T)$ if there exists a map $\alpha \colon  \Sigma^* \to  S$ such that
\begin{equation}\label{eq:acceptance}
(\alpha(w);f(w);(\alpha(w\sigma))_{\sigma \in \Sigma}) \in \T\end{equation}
for all $w \in \Sigma^*$.

In particular, if $\mathfrak{A} = (S,\Sigma,\T,\T')$ is the bundle automaton underlying $\A$, a configuration $f \in \T^{\Sigma^*}$ is accepted by $\mathfrak{A}$ if and only if $\mathbf{t}_\sigma(f(w)) = \mathbf{i}(f(w\sigma))$,
for all $w \in \Sigma^*$ and $\sigma \in \Sigma$.
Indeed if $\mathfrak{A}$ accepts $f \in \T^{\Sigma^*}$ via $\alpha \colon  \Sigma^* \to  S$, then
$(\alpha(w);f(w);(\alpha(w\sigma))_{\sigma \in \Sigma}) \in \T'$
for all $w \in \Sigma^*$. This implies that $\mathbf{i}(f(w)) = \alpha(w)$ and $\mathbf{t}_\sigma(f(w)) = \alpha(w\sigma)$ for each $w \in \Sigma^*$ and $\sigma\in\Sigma$. Therefore $\mathbf{t}_\sigma(f(w)) = \alpha(w\sigma) = \mathbf{i}(f(w\sigma))$. Conversely, assume that $\mathbf{t}_\sigma(f(w)) = \mathbf{i}(f(w\sigma))$,
for all $w \in \Sigma^*$ and $\sigma \in \Sigma$. Then $\mathfrak{A}$ accepts $f$ via the homomorphism $\alpha \colon  \Sigma^* \to  S$ defined by $\alpha(w) = \mathbf{i}(f(w))$.

\end{remark}
As it will be proved in Proposition~\ref{p:XA sofic}, if $\A$ is an unrestricted Rabin automaton then the set $\Sh_\A \subset A^{\Sigma^*}$ is a sofic tree shift. Conversely, every sofic tree shift has the form $\Sh_\A$ for some unrestricted Rabin automaton $\A$ (see Corollary~\ref{c:sofic iff accepted}).

\begin{example}[Full shift]
The full $A$-shift $A^{\Sigma^*}$ is accepted by the bundle automaton $\A = (\{s\},\Sigma,A, \T)$, where $\T$ consists of $\vert A\vert$ bundle loops on $s$. That is, $\T = \{(s;a;(s)_{\sigma \in \Sigma}) : a \in A\}$.
If $\vert \Sigma \vert = 2$ and $A = \{a_1, a_2, a_3, a_4\}$, the corresponding automaton is represented in Figure~\ref{FIGfullshift}.
\end{example}
\begin{figure}[h!]
\scriptsize
\begin{center}
\begin{picture}(0,30)(0,-15)
\gasset{Nw=6,Nh=6}
\node(s)(0,0){$s$}
\node[Nframe=n,Nw=-1,Nh=-1](s-)(-15,0){}
\node[Nframe=n,Nw=-1,Nh=-1](s^)(0,15){}
\node[Nframe=n,Nw=-1,Nh=-1](s+)(15,0){}
\node[Nframe=n,Nw=-1,Nh=-1](s_)(0,-15){}

\drawedge[AHnb=2,AHangle=20,AHLength=2,AHlength=0,ELside=r](s,s-){$a_1$}
\drawedge[curvedepth=5,dash={1}0](s-,s){}
\drawedge[curvedepth=8](s-,s){}
\drawedge[AHnb=2,AHangle=20,AHLength=2,AHlength=0,ELside=r](s,s^){$a_2$}
\drawedge[curvedepth=5,dash={1}0](s^,s){}
\drawedge[curvedepth=8](s^,s){}
\drawedge[AHnb=2,AHangle=20,AHLength=2,AHlength=0,ELside=r](s,s+){$a_3$}
\drawedge[curvedepth=5,dash={1}0](s+,s){}
\drawedge[curvedepth=8](s+,s){}
\drawedge[AHnb=2,AHangle=20,AHLength=2,AHlength=0,ELside=r](s,s_){$a_4$}
\drawedge[curvedepth=5,dash={1}0](s_,s){}
\drawedge[curvedepth=8](s_,s){}
\end{picture}
\end{center}
\caption{A bundle automaton accepting the full $\{a_1, a_2, a_3, a_4\}$-shift for $\vert \Sigma \vert =2$.}\label{FIGfullshift}
\end{figure}
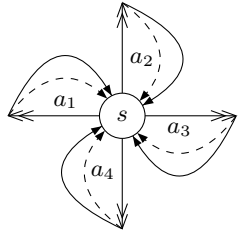

\begin{example}[Monochromatic children]\label{e:AUTmonochromaticchildren}
Consider the unrestricted Rabin automaton $\A = (A,\Sigma,A,\T)$ where the bundle set is given by $\T = \{(a;a;(a_\sigma)_{\sigma \in \Sigma}) \in A \times A \times A^\Sigma : a_\sigma = a_{\sigma'}
\textup{ for all } \sigma, \sigma' \in \Sigma\}$. We then have that $\Sh_\A$ is the
tree shift described in Example~\ref{e:monochromaticchildren}.
If $\vert \Sigma \vert = 2$ and $A = \{a,b\}$ the corresponding automaton is represented in Figure~\ref{FIGmonochromaticchildren}.
\end{example}

\begin{example}[Even sum]
Let $A = \{0,1\}$.
Consider the unrestricted Rabin automaton $\A = (A,\Sigma,A,\T)$ where $\T = \{(a;a;(a_\sigma)_{\sigma \in \Sigma}) \in A \times A\times A^\Sigma : a+\sum_{\sigma \in \Sigma} a_\sigma \equiv 0 \mod 2\}$.
We have that $\Sh_\A$ is the
tree shift described in Example~\ref{e:zerosummodtwo}.
If $\vert \Sigma \vert = 2$, the corresponding automaton is represented in Figure~\ref{FIGzerosummod2}.
\end{example}
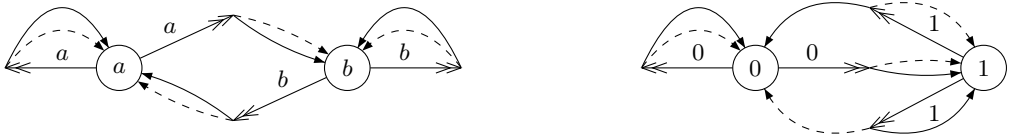
\begin{figure}[h!]
\scriptsize
\centering
\subfigure[An unrestricted Rabin automaton accepting the tree shift of Example~\ref{e:monochromaticchildren} for $\vert \Sigma \vert =2$ and $A = \{a,b\}$.]{
\begin{picture}(70,20)(-35,-13)
\gasset{Nw=6,Nh=6}
\node(A)(-15,0){$a$}
\node(B)(15,0){$b$}
\node[Nframe=n,Nw=-1,Nh=-1](AA)(-30,0){}
\node[Nframe=n,Nw=-1,Nh=-1](AB)(0,7){}
\node[Nframe=n,Nw=-1,Nh=-1](AB_)(0,-7){}
\node[Nframe=n,Nw=-1,Nh=-1](BB)(30,0){}
\drawedge[AHnb=2,AHnb=2,AHangle=20,AHLength=2,AHlength=0,ELside=r](A,AA){$a$}
\drawedge[curvedepth=5,dash={1}0](AA,A){}
\drawedge[curvedepth=8](AA,A){}
\drawedge[AHnb=2,AHnb=2,AHangle=20,AHLength=2,AHlength=0](A,AB){$a$}
\drawedge[curvedepth=1,dash={1}0](AB,B){}
\drawedge[curvedepth=-1](AB,B){}
\drawedge[AHnb=2,AHnb=2,AHangle=20,AHLength=2,AHlength=0](B,BB){$b$}
\drawedge[curvedepth=-5,dash={1}0](BB,B){}
\drawedge[curvedepth=-8](BB,B){}
\drawedge[AHnb=2,AHnb=2,AHangle=20,AHLength=2,AHlength=0,ELside=r](B,AB_){$b$}
\drawedge[curvedepth=1,dash={1}0](AB_,A){}
\drawedge[curvedepth=-1](AB_,A){}
\end{picture}\label{FIGmonochromaticchildren}
}\quad\quad
\subfigure[An unrestricted Rabin automaton accepting the even sum tree shift of Example~\ref{e:zerosummodtwo} for $\vert \Sigma \vert =2$.]{
\begin{picture}(70,20)(-40,-13)
\gasset{Nw=6,Nh=6}
\node(A)(-15,0){$0$}
\node(B)(15,0){$1$}
\node[Nframe=n,Nw=-1,Nh=-1](AA)(-30,0){}
\node[Nframe=n,Nw=-1,Nh=-1](AB)(0,0){}
\node[Nframe=n,Nw=-1,Nh=-1](B^)(0,8){}
\node[Nframe=n,Nw=-1,Nh=-1](B_)(0,-8){}
\drawedge[AHnb=0,ATnb=2,ATangle=20,ATLength=2,ATlength=0](AA,A){0}
\drawedge[curvedepth=5,dash={1}0](AA,A){}
\drawedge[curvedepth=8](AA,A){}
\drawedge[AHnb=2,AHangle=20,AHLength=2,AHlength=0](A,AB){0}
\drawedge[curvedepth=1,dash={1}0](AB,B){}
\drawedge[curvedepth=-1](AB,B){}
\drawedge[AHnb=2,AHangle=20,AHLength=2,AHlength=0,ELside=r](B,B^){1}
\drawedge[curvedepth=4,dash={1}0](B^,B){}
\drawedge[curvedepth=-4](B^,A){}
\drawedge[AHnb=2,AHangle=20,AHLength=2,AHlength=0](B,B_){1}
\drawedge[curvedepth=4,dash={1}0](B_,A){}
\drawedge[curvedepth=-4](B_,B){}
\end{picture}\label{FIGzerosummod2}
}
\caption{Some unrestricted Rabin automata.}
\label{FIGshifts}
\end{figure}

\begin{example}[The even shift on $\Sigma^*$]\label{e:michel}
The unrestricted Rabin automaton $\A$ accepting the even shift
described in Example~\ref{e:evenshift} is represented in Figure~\ref{FIGevenshift}.
For simplicity we limit ourselves to the case $\vert \Sigma \vert = 2$.
\end{example}
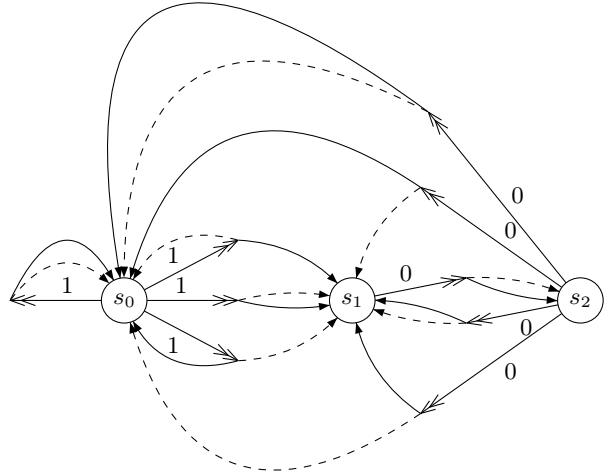
\begin{figure}[h!]
\scriptsize
\begin{center}
\begin{picture}(0,60)(0,-20)
\gasset{Nw=6,Nh=6}
\node(A)(-30,0){$s_0$}
\node(B)(0,0){$s_1$}
\node(C)(30,0){$s_2$}
\node[Nframe=n,Nw=-1,Nh=-1](AA)(-45,0){}
\node[Nframe=n,Nw=-1,Nh=-1](AB)(-15,0){}
\node[Nframe=n,Nw=-1,Nh=-1](BC^)(15,3){}
\node[Nframe=n,Nw=-1,Nh=-1](BC_)(15,-3){}
\node[Nframe=n,Nw=-1,Nh=-1](C^)(9,15){}
\node[Nframe=n,Nw=-1,Nh=-1](C_)(9,-15){}
\node[Nframe=n,Nw=-1,Nh=-1](A^)(-15,8){}
\node[Nframe=n,Nw=-1,Nh=-1](A_)(-15,-8){}
\node[Nframe=n,Nw=-1,Nh=-1](CC)(10,25){}
\drawedge[AHnb=2,AHangle=20,AHLength=2,AHlength=0,ELside=r](A,AA){1}
\drawedge[AHnb=2,AHangle=20,AHLength=2,AHlength=0,ELside=r](C,CC){0}
\drawedge[curvedepth=5,dash={1}0](AA,A){}
\drawedge[curvedepth=8](AA,A){}
\drawedge[AHnb=2,AHangle=20,AHLength=2,AHlength=0](A,AB){1}
\drawedge[curvedepth=1,dash={1}0](AB,B){}
\drawedge[curvedepth=-1](AB,B){}
\drawedge[AHnb=2,AHangle=20,AHLength=2,AHlength=0](B,BC^){0}
\drawedge[curvedepth=1,dash={1}0](BC^,C){}
\drawedge[curvedepth=-1](BC^,C){}
\drawedge[AHnb=2,AHangle=20,AHLength=2,AHlength=0](C,BC_){0}
\drawedge[curvedepth=1,dash={1}0](BC_,B){}
\drawedge[curvedepth=-1](BC_,B){}
\drawedge[AHnb=2,AHangle=20,AHLength=2,AHlength=0](C,C_){0}
\drawedge[curvedepth=15,dash={1}0](C_,A){}
\drawedge[curvedepth=2](C_,B){}
\drawedge[AHnb=2,AHangle=20,AHLength=2,AHlength=0,ELside=r](C,C^){0}
\drawedge[curvedepth=-15](C^,A){}
\drawedge[curvedepth=-2,dash={1}0](C^,B){}
\drawedge[AHnb=2,AHangle=20,AHLength=2,AHlength=0](A,A^){1}
\drawedge[curvedepth=-4,dash={1}0](A^,A){}
\drawedge[curvedepth=2](A^,B){}
\drawedge[AHnb=2,AHangle=20,AHLength=2,AHlength=0,ELside=r](A,A_){1}
\drawedge[curvedepth=4](A_,A){}
\drawedge[curvedepth=-2,dash={1}0](A_,B){}
\drawedge[curvedepth=-20,dash={1}0](CC,A){}
\drawedge[curvedepth=-30](CC,A){}

\end{picture}
\end{center}
\caption{An unrestricted Rabin automaton accepting the even shift for $\vert \Sigma \vert = 2$.}\label{FIGevenshift}
\end{figure}

\subsection{Unrestricted Rabin automata and sofic tree shifts}
\begin{proposition}\label{p:XA sofic} Let $\A = (S,\Sigma,A,\T)$ be an unrestricted Rabin automaton.
Then $\Sh_\A$ is a sofic tree shift.
\end{proposition}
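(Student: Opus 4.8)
The plan is to realize $\Sh_\A$ as the image of a tree shift of finite type under a cellular automaton, using the bundle automaton underlying $\A$ as the intermediate device. Write $\mathfrak{A} = (S,\Sigma,\T,\T')$ for the underlying bundle automaton of $\A = (S,\Sigma,A,\T)$. Since $\A$ is an unrestricted Rabin \emph{automaton}, the state set $S$ and the alphabet $A$ are finite, hence $\T \subset S \times A \times S^\Sigma$ is a finite set, and it will serve as the alphabet of the intermediate shift. First I would let $Y \subset \T^{\Sigma^*}$ be the set of configurations accepted by $\mathfrak{A}$ and check that it is a tree shift of memory $2$. By Remark~\ref{r:acceptance}, a configuration $f \in \T^{\Sigma^*}$ belongs to $Y$ if and only if $\mathbf{t}_\sigma(f(w)) = \mathbf{i}(f(w\sigma))$ for all $w \in \Sigma^*$ and $\sigma \in \Sigma$; equivalently, $f$ avoids every block $p \colon \Delta_2 \to \T$ for which $\mathbf{t}_\sigma(p(\varepsilon)) \neq \mathbf{i}(p(\sigma))$ for some $\sigma \in \Sigma$. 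This family $\F$ of forbidden blocks is finite, so $Y = \mathsf{X}(\F)$ is a tree shift of finite type by Proposition~\ref{p:forbidden}.

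Next I would define $\tau \colon Y \to A^{\Sigma^*}$ by $\tau(f)(w) = \lambda(f(w))$, that is, by reading off the $A$-label of the transition bundle occupying each vertex. This is a cellular automaton with memory set $M = \Delta_1 = \{\varepsilon\}$ and local defining map $\mu \colon \T^{\Delta_1} \to A$ given by $\mu(p) = \lambda(p(\varepsilon))$, since $\tau(f)(w) = \mu((f^w)\vert_{\Delta_1})$ holds by construction. By Remark~\ref{r:image}, $\tau(Y)$ is a tree shift.

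It then remains to prove $\tau(Y) = \Sh_\A$. For $\tau(Y) \subseteq \Sh_\A$: given $f \in Y$, set $\alpha(w) = \mathbf{i}(f(w))$; writing $f(w) = (\mathbf{i}(f(w)); \lambda(f(w)); (\mathbf{t}_\sigma(f(w)))_{\sigma \in \Sigma})$ and using $\mathbf{t}_\sigma(f(w)) = \mathbf{i}(f(w\sigma)) = \alpha(w\sigma)$, the triple $(\alpha(w); \tau(f)(w); (\alpha(w\sigma))_{\sigma \in \Sigma})$ equals $f(w) \in \T$, so $\alpha$ witnesses $\tau(f) \in \Sh_\A$ via~\eqref{eq:acceptance}. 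For $\Sh_\A \subseteq \tau(Y)$: given $g \in \Sh_\A$ accepted via $\alpha \colon \Sigma^* \to S$, put $f(w) := (\alpha(w); g(w); (\alpha(w\sigma))_{\sigma \in \Sigma}) \in \T$; then $\mathbf{t}_\sigma(f(w)) = \alpha(w\sigma) = \mathbf{i}(f(w\sigma))$, so $f \in Y$ by Remark~\ref{r:acceptance}, and $\tau(f)(w) = \lambda(f(w)) = g(w)$, whence $g \in \tau(Y)$. Thus $\Sh_\A = \tau(Y)$ is the image of a tree shift of finite type under a cellular automaton, i.e.\ sofic.

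I do not expect a genuine obstacle here: the single point to get right is that acceptance by the bundle automaton is a nearest-neighbour constraint involving only blocks of size $2$ --- this is exactly what forces $Y$ to be of finite type, and it is supplied by Remark~\ref{r:acceptance} --- after which the identification $\tau(Y) = \Sh_\A$ is a routine two-way verification.
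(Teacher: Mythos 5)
Your proposal is correct and follows essentially the same route as the paper: pass to the underlying bundle automaton, observe that its accepted configurations form a tree shift of finite type with memory $2$ (forbidden blocks on $\Delta_2$ coming from the mismatch condition in Remark~\ref{r:acceptance}), and then project back to $A^{\Sigma^*}$ via the one-cell cellular automaton induced by the labeling map $\lambda$. The only difference is that you spell out the two-way verification of $\tau(Y)=\Sh_\A$, which the paper leaves as an observation; your verification is accurate.
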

\proof
Consider first the bundle automaton $\mathfrak{A} = (S,\Sigma,\T,\T')$ underlying $\A$. Let $$\F = \{p \in \T^{\Delta_2} : \mathbf{t}_\sigma(p(\varepsilon)) \neq \mathbf{i}(p(\sigma)) \textup{ for some } \sigma \in \Sigma \}.$$
For $f \in \T^{\Sigma^*}$, we have that $f \in \Sh_{\mathfrak{A}}$ if and only if for each $w \in \Sigma^*$ and $\sigma \in \Sigma$ one has $\mathbf{t}_\sigma(f(w)) = \mathbf{i}(f(w\sigma))$ (see Remark~\ref{r:acceptance}).
This is equivalent to $(f^w)\vert_{\Delta_2} \notin \F$ for each $w \in \Sigma^*$, proving that
$$
\Sh_\mathfrak{A} = \Sh(\F).
$$
Hence $\Sh_\mathfrak{A}$ is a tree shift. Moreover, since $\F$ is finite, $\Sh_\mathfrak{A}$ is a tree shift of finite type.

Observe now that the labeling map $\lambda \colon  \T \to  A$ defines a cellular automaton $\tau_\lambda \colon  \T^{\Sigma^*} \to  A^{\Sigma^*}$ given by $\left(\tau_\lambda(f)\right)(w) = \lambda(f(w))$ (the memory set is reduced to $\Delta_1 = \{\varepsilon\}$ and we identify $\T$ with $\T^{\{\varepsilon\}}$). Notice that $\tau_\lambda(\Sh_\mathfrak{A}) = \Sh_\A$. Hence $\Sh_\A$ is a tree shift (see Remark~\ref{r:image}). Moreover, since $\Sh_\mathfrak{A}$ is of finite type, we have that $\Sh_\A$ is sofic.
\endproof

Let $\Sigma$ be a nonempty finite set. Let $A$ and $B$ be two finite alphabets.

\begin{definition}\label{d:def w cdot p}
If $M \subset \Sigma^*$ is a nonempty set and
$p \in A^M$ is a pattern with support $M$, we set $wM = \{wm : m \in M\} \subset \Sigma^*$, where $w \in \Sigma^*$.
We denote by $w \cdot p \in A^{wM}$ the pattern with support $wM$ defined by $(w\cdot p)(wm) = p(m)$ for all $m \in M$.
\end{definition}

\begin{definition}[Unrestricted Rabin automaton associated with a cellular automaton]\label{d:A(tau,M,X)}
Let $X \subset A^{\Sigma^*}$ be a tree shift of finite type and let $\tau \colon  X \to  B^{\Sigma^*}$ be a CA. Let $n \geq 2$ be large enough so that $X$ has memory $n-1$ (see Definition
\ref{d:memory}) and $M = \Delta_n \subset \Sigma^*$ is a memory set for $\tau$
(recall Remark~\ref{r:triangle} as well as the remark following Definition~\ref{d:memory}).
Denote by $\mu \colon  A^M \to  B$ the corresponding local defining map. Set $M' = \Delta_{n-1}$.
The \emph{unrestricted Rabin automaton $\A(\tau,M,X)$ associated with $\tau$} is defined by
$$\A(\tau,M,X) = (X_{M'},\Sigma,B,\T),$$ where $\T \subset X_{M'} \times B \times (X_{M'})^\Sigma$ consists of the bundles $(p;b;(p_\sigma)_{\sigma \in \Sigma})$ such that
\begin{enumerate}[(1)]
\item $p\vert_{\sigma {M'} \cap {M'}}$ equals $(\sigma \cdot p_\sigma)\vert_{\sigma {M'} \cap {M'}}$ for all
$\sigma \in \Sigma$ (that is, $p(\sigma m) = p_\sigma(m)$ whenever $\sigma m \in \sigma {M'} \cap {M'})$;
\item the block $\overline p \colon  M \to  A$, coinciding with $p$ on $M'$ and with $\sigma \cdot p_\sigma$ on $\sigma {M'}$ for all $\sigma \in \Sigma$,  belongs to $X_M$ (such a block $\overline p \in X_M$ is denoted by $\overline{(p;(p_\sigma)_{\sigma \in \Sigma})}$);
\item $b = \mu \left(\overline{(p;(p_\sigma)_{\sigma \in \Sigma})}\right)$.
\end{enumerate}
A transition bundle of $\A(\tau,M,X)$ is illustrated in Figure~\ref{FIGdebruijn} for $\vert \Sigma\vert = 2$.
\end{definition}

\begin{figure}[h!]
\scriptsize
\centering
\begin{picture}(0,40)(0,-20)
\gasset{Nw=0,Nh=0,Nframe=n,ATnb=0,AHnb=0}
\node(dep)(-20,0){}
\node(arr)(20,0){}
\node(S0)(35,15){}
\node(S1)(35,-15){}
\drawedge[AHnb=2,AHnb=2,AHangle=20,AHLength=2,AHlength=0](dep,arr){}
\drawedge[AHnb=1,curvedepth=3](arr,S0){}
\drawedge[AHnb=1,curvedepth=-3,dash={1}0](arr,S1){}

\node(MU)(-15,7.5){$\mu$}
\drawpolygon[fillcolor=Gray,Nframe=y](-8,6)(-9,3)(-1,3)(-2,6)
\drawpolygon[fillcolor=Gray2,Nframe=y](2,6)(1,3)(9,3)(8,6)
\drawcurve(-11,16)(-13,7.5)(-11,2)
\drawcurve(11,16)(13,7.5)(11,2)
\node(1)(0,15){$\bullet$}
\node(2)(-5,12){$\bullet$}
\node(3)(5,12){$\bullet$}
\node(4)(-5,7.5){$p_0$}
\node(5)(5,7.5){$p_1$}
\node(6)(-8,6){$\bullet$}
\node(7)(-2,6){$\bullet$}
\node(8)(2,6){$\bullet$}
\node(9)(8,6){$\bullet$}
\node(10)(-9,3){$\bullet$}
\node(11)(-1,3){$\bullet$}
\node(12)(1,3){$\bullet$}
\node(13)(9,3){$\bullet$}
\drawedge(1,2){}
\drawedge(1,3){}
\drawedge(2,6){}
\drawedge(2,7){}
\drawedge(3,8){}
\drawedge(3,9){}

\node(A)(-30,5){$\bullet$}
\node(A0)(-35,2){$\bullet$}
\node(A1)(-25,2){$\bullet$}
\node(P0)(-35,-1.5){$p_0$}
\node(P1)(-25,-1.5){$p_1$}
\node(A00)(-38,-4){$\bullet$}
\node(A01)(-32,-4){$\bullet$}
\node(A10)(-28,-4){$\bullet$}
\node(A11)(-22,-4){$\bullet$}
\drawedge(A,A0){}
\drawedge(A,A1){}
\drawedge(A0,A00){}
\drawedge(A0,A01){}
\drawedge(A1,A10){}
\drawedge(A1,A11){}
\drawedge(A00,A01){}
\drawedge(A10,A11){}
\drawcircle[Nframe=y](-30,0,20)

\drawpolygon[fillcolor=Gray2,Nframe=y](39,14)(38,11)(46,11)(45,14)
\node(AA0)(42,20){$\bullet$}
\node(PP0)(42,15.5){$p_1$}
\node(AA00)(39,14){$\bullet$}
\node(AA01)(45,14){$\bullet$}
\node(AA000)(38,11){$\bullet$}
\node(AA010)(46,11){$\bullet$}
\drawedge(AA0,AA00){}
\drawedge(AA0,AA01){}
\drawcircle[Nframe=y](42,15,14)

\drawpolygon[fillcolor=Gray,Nframe=y](39,-16)(38,-19)(46,-19)(45,-16)
\node(AA1)(42,-10){$\bullet$}
\node(PP1)(42,-14.5){$p_0$}
\node(AA10)(39,-16){$\bullet$}
\node(AA11)(45,-16){$\bullet$}
\node(AA100)(38,-19){$\bullet$}
\node(AA110)(46,-19){$\bullet$}
\drawedge(AA1,AA10){}
\drawedge(AA1,AA11){}
\drawcircle[Nframe=y](42,-15,14)

\end{picture}
\caption{A transition bundle of $\A(\tau,M,X)$ when $\vert \Sigma\vert = 2$.}
\label{FIGdebruijn}
\end{figure}



\begin{proposition}
\label{p:A(tau,M,X) accepts tau(X)}
Let $X \subset A^{\Sigma^*}$ be a tree shift of finite type with memory $n-1$.
Let $\tau \colon  X \to  B^{\Sigma^*}$ be a cellular
automaton with memory set $\Delta_n$. Then $\Sh_{\A(\tau,\Delta_n,X)} = \tau(X)$.
\end{proposition}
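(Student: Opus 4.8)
The plan is to prove the two inclusions $\tau(X) \subset \Sh_{\A(\tau,\Delta_n,X)}$ and $\Sh_{\A(\tau,\Delta_n,X)} \subset \tau(X)$ separately, in both cases exhibiting the required homomorphism (resp.\ configuration) explicitly. Throughout I write $M = \Delta_n$, $M' = \Delta_{n-1}$, $\A = \A(\tau,M,X) = (X_{M'},\Sigma,B,\T)$, and I use the explicit description of acceptance from Remark~\ref{r:acceptance}: a configuration $g \in B^{\Sigma^*}$ lies in $\Sh_\A$ iff there is a map $\alpha \colon \Sigma^* \to X_{M'}$ with $(\alpha(w);g(w);(\alpha(w\sigma))_{\sigma\in\Sigma}) \in \T$ for all $w \in \Sigma^*$.

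For the inclusion $\tau(X) \subset \Sh_\A$: given $f \in X$, set $g = \tau(f)$ and define $\alpha \colon \Sigma^* \to X_{M'}$ by $\alpha(w) = (f^w)\vert_{M'}$, which indeed lies in $X_{M'}$ because $X$ is shift-invariant. I must check that $t_w := (\alpha(w);g(w);(\alpha(w\sigma))_{\sigma\in\Sigma})$ is a transition bundle, i.e.\ satisfies conditions (1)--(3) of Definition~\ref{d:A(tau,M,X)}. Condition (1) is a compatibility of overlaps: for $\sigma m \in \sigma M' \cap M'$ one needs $\alpha(w)(\sigma m) = \alpha(w\sigma)(m)$, and both sides equal $f(w\sigma m)$ by definition of $\alpha$ and of the shift. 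For (2), the glued block $\overline{p}$ with $p = \alpha(w)$, $p_\sigma = \alpha(w\sigma)$ is exactly $(f^w)\vert_M$: on $M'$ it is $\alpha(w) = (f^w)\vert_{M'}$, and on $\sigma M'$ the value at $\sigma m$ is $p_\sigma(m) = f(w\sigma m) = (f^w)(\sigma m)$; since $(f^w)\vert_M \in X_M$ by shift-invariance, (2) holds. Then (3) follows because $g(w) = \tau(f)(w) = \mu((f^w)\vert_M) = \mu(\overline{(p;(p_\sigma)_{\sigma})})$, using $M$ as memory set for $\tau$. Hence $g \in \Sh_\A$.

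For the reverse inclusion $\Sh_\A \subset \tau(X)$: let $g \in \Sh_\A$ with witnessing map $\alpha \colon \Sigma^* \to X_{M'}$. The idea is to reconstruct a configuration $f \in A^{\Sigma^*}$ whose ``local views'' are the blocks $\alpha(w)$ glued coherently. Define $f(w) := \alpha(w)(\varepsilon)$ for each $w \in \Sigma^*$ (the value of the block $\alpha(w)$ at the root of $M'$). One first shows, by an induction on the length of $v \in M'$, that $\alpha(w)(v) = f(wv)$ for all $w \in \Sigma^*$ and all $v \in M'$; the base case $v = \varepsilon$ is the definition, and the inductive step $v = \sigma v'$ uses condition (1) of the transition bundle $t_w = (\alpha(w);g(w);(\alpha(w\sigma))_\sigma)$ to slide $\alpha(w)(\sigma v') = \alpha(w\sigma)(v')$ and then the inductive hypothesis applied at $w\sigma$. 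Consequently $(f^w)\vert_{M'} = \alpha(w)$ for every $w$. Next, applying condition (2) to $t_w$, the glued block $\overline{(\alpha(w);(\alpha(w\sigma))_\sigma)}$ lies in $X_M$; but by the identity just proved this glued block equals $(f^w)\vert_M$. Thus $(f^w)\vert_M \in X_M$ for all $w$, and since $X$ has memory $n-1 \le n$, this means $f \in X$ (a configuration avoiding the forbidden blocks locally everywhere lies in $X$ — here one uses that the forbidden set of $X$ can be taken inside $A^{M}$). Finally, condition (3) gives $g(w) = \mu(\overline{(\alpha(w);(\alpha(w\sigma))_\sigma)}) = \mu((f^w)\vert_M) = \tau(f)(w)$ for all $w$, so $g = \tau(f) \in \tau(X)$.

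The main obstacle is the reconstruction step in the second inclusion: one has to verify carefully that the locally defined blocks $\alpha(w)$ patch together into a single globally consistent configuration $f$, which is precisely the content of the inductive identity $\alpha(w)(v) = f(wv)$ and relies essentially on the overlap condition (1). Everything else is bookkeeping with the definitions of $\overline{p}$, the shift action $f^w$, and memory; the role of the hypothesis ``$X$ has memory $n-1$'' is exactly to guarantee that checking $(f^w)\vert_{\Delta_n}\in X_{\Delta_n}$ for all $w$ suffices to conclude $f\in X$, via Remark~\ref{r:memory} and Proposition~\ref{p:forbidden}.
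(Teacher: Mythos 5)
Your proposal is correct and follows essentially the same route as the paper: for $\tau(X)\subset \Sh_{\A}$ you take $\alpha(w)=(f^w)\vert_{M'}$ and verify conditions (1)--(3), and for the reverse inclusion you reconstruct $f$ via $f(w)=\alpha(w)(\varepsilon)$ and prove the sliding identity $\alpha(w)(v)=f(wv)$ by induction using the overlap condition, exactly as in the paper's proof. The only cosmetic difference is that you conclude $f\in X$ from $(f^w)\vert_{M}\in X_M$ (via condition (2)) rather than from $(f^w)\vert_{M'}\in X_{M'}$ as the paper does; both are valid given the memory hypothesis.
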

\proof
Set $M = \Delta_n$ and $M' = \Delta_{n-1}$. Suppose that $g \in \Sh_{\A(\tau,M,X)}$. This means that there exists a homomorphism $\alpha \colon  \G_g \to  \A(\tau,M,X)$, that is a map
$\alpha \colon   \Sigma^* \to  X_{M'}$ such that for each $w \in \Sigma^*$, $\sigma \in \Sigma$ and $m \in M'$:
\begin{enumerate}[(1)]
\item $\alpha(w)(\sigma m) = \alpha(w\sigma)(m)$ if $\sigma m \in M'$;\label{compatibility}
\item $\overline{(\alpha(w);(\alpha(w\sigma))_{\sigma \in \Sigma})} \in X_M$
\item $\mu \left(\overline{(\alpha(w);(\alpha(w\sigma))_{\sigma \in \Sigma})}\right) = g(w)$.\label{LABcompatibility}
\end{enumerate}
We define a configuration $f \colon  \Sigma^* \to  A$ by setting
$$f(w) = \alpha(w)(\varepsilon)$$
for each $w \in \Sigma^*$.
Notice that for all $m \in M'$, one has $f(wm) = \alpha(w)(m)$. Indeed if $m = \sigma_1 \cdots \sigma_h \in \Sigma^h$ ($0 \leq h \leq n-2$), we have that $\sigma_i \cdots \sigma_h \in M'$ for each $i = 1, \dots, h$ (see Remark~\ref{r:factor closed}).
By condition~\eqref{compatibility} we have that
\begin{eqnarray*}
f(w m) & = & \alpha(w m)(\varepsilon) = \alpha(w \sigma_1 \sigma_2 \cdots \sigma_{h-1} \sigma_h)(\varepsilon) = \alpha(w \sigma_1 \sigma_2 \cdots \sigma_{h-1})(\sigma_h) = \ldots =\\
& = &\alpha(w \sigma_1)(\sigma_2 \cdots \sigma_{h-1} \sigma_h)  = \alpha(w)(\sigma_1 \sigma_2 \cdots \sigma_{h-1} \sigma_h) =\\
& = &\alpha(w)(m).
\end{eqnarray*}
As a consequence, $(f^w\vert_{M'})(m) = f^w(m) = f(wm) = \alpha(w)(m)$ for each $m \in M'$, that is, $f^w\vert_{M'} = \alpha(w) \in X_{M'}$. Hence $f \in X$ (see Remark~\ref{r:B(X) determines X}). Analogously, $f^w\vert_{\sigma M'} = \sigma \cdot \left(\alpha(w\sigma)\right)$ so that $f^w\vert_M = \overline{(\alpha(w);(\alpha(w\sigma))_{\sigma \in \Sigma})}$. By definition, for each $w \in \Sigma^*$, we have
$\tau(f)(w) = \mu(f^w\vert_M)$, thus $\tau(f)(w) = \mu \left(\overline{(\alpha(w);(\alpha(w\sigma))_{\sigma \in \Sigma})}\right) = g(w)$ by condition~\eqref{LABcompatibility}. This proves that $\tau(f) = g$ and then $g \in \tau(X)$.
It follows that $\Sh_{\A(\tau,\Delta_n,X)} \subset \tau(X)$.

For the converse, suppose that $f \in X$. We want to prove that $\tau(f) \in \Sh_{\A(\tau,M,X)}$. The map $\alpha \colon   \Sigma^* \to  X_{M'}$
defined by $\alpha(w) = f^w\vert_{M'}$ yields a homomorphism $\alpha \colon  \G_{\tau(f)} \to  \A(\tau,M,X)$. Indeed the above conditions~\eqref{compatibility}--\eqref{LABcompatibility} are satisfied for each $w \in \Sigma^*$, $\sigma \in \Sigma$ and $m \in M'$:
\begin{enumerate}[(1$'$)]
\item if $\sigma m \in M'$, we have that $\alpha(w)(\sigma m)$ = $(f^w\vert_{M'})(\sigma m)$ = $f^w(\sigma m)$ = $f^{w\sigma}(m)$ = $(f^{w\sigma}\vert_{M'})(m)$ = $\alpha(w\sigma)(m)$;
\item $\overline{(\alpha(w);(\alpha(w\sigma))_{\sigma \in \Sigma})} = f^w\vert_{M} \in X_M$;
\item $\mu \left(\overline{(\alpha(w);(\alpha(w\sigma))_{\sigma \in \Sigma})}\right) = \mu(f^w\vert_M) = \tau(f)(w)$.
\end{enumerate}
This shows that $\tau(f)$ is accepted by $\A(\tau,M,X)$. It follows that $\tau(X) \subset \Sh_{\A(\tau,\Delta_n,X)}$, completing the proof.
\endproof

\begin{remark}\label{r:transducer}
In Proposition~\ref{p:A(tau,M,X) accepts tau(X)} we prove that $\A(\tau,\Delta_n,X)$ is a presentation of $\tau(X)$. In fact, we actually show how to construct a pre-image of any configuration in $\Sh_{\A(\tau,\Delta_n,X)}$. This leads in particular to a presentation of $X$ as well.
Consider the bundle automaton $\mathfrak{A} = (X_{M'}, \Sigma, X_M, \T_\mathfrak{A})$, where $\T_\mathfrak{A}$ consists of the transition bundles $(p;\overline{(p;(p_\sigma)_{\sigma \in \Sigma})};(p_\sigma)_{\sigma \in \Sigma})$ such that $p$ and $(p_\sigma)_{\sigma \in \Sigma}$ satisfy conditions (1) and (2) in Definition~\ref{d:A(tau,M,X)}. The tree shift $\Sh_\mathfrak{A}$ is called the \emph{$n$-th higher block shift of $X$}. The presentations of $X$ and $\tau(X)$ are obtained simply by modifying the labels of every transition bundle in $\T_\mathfrak{A}$. More precisely, the transition bundle $\left(p;\overline{(p;(p_\sigma)_{\sigma \in \Sigma})};(p_\sigma)_{\sigma \in \Sigma}\right)$ is replaced by $\left(p;p(\varepsilon);(p_\sigma)_{\sigma \in \Sigma}\right)$ and by $\left(p;\mu\left(\overline{(p;(p_\sigma)_{\sigma \in \Sigma})}\right);(p_\sigma)_{\sigma \in \Sigma}\right)$, respectively.

Notice that, being $X_{n-1}$ its state set, each Rabin automaton described above is essential.

This latter construction is a generalization of a de Bruijn graph (these graphs
were introduced by de Bruijn~\cite{deBruijn46} and, independently, by  Good~\cite{Good46}).
\end{remark}

We are now in position to prove the following result. The bottom-up version of it has been proved by Aubrun and B\'eal in~\cite{AubrunBeal12}.

\begin{corollary}\label{c:sofic iff accepted}
A tree shift is sofic if and only
if it is accepted by some unrestricted Rabin automaton.
\end{corollary}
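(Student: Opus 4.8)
The plan is to prove the two implications separately, each time invoking a result already established in this section, so that the corollary becomes essentially a repackaging of Propositions~\ref{p:XA sofic} and~\ref{p:A(tau,M,X) accepts tau(X)}.

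For the ``if'' direction, suppose $X = \Sh_\A$ for some unrestricted Rabin automaton $\A$. Then Proposition~\ref{p:XA sofic} asserts precisely that $\Sh_\A$ is a sofic tree shift, and there is nothing further to do.

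For the ``only if'' direction, suppose $X \subset B^{\Sigma^*}$ is sofic. By Definition~\ref{d:sofic} there are a finite alphabet $A$, a tree shift of finite type $Y \subset A^{\Sigma^*}$, and a cellular automaton $\tau \colon Y \to B^{\Sigma^*}$ with $X = \tau(Y)$. The key step is to choose the parameters so that Proposition~\ref{p:A(tau,M,X) accepts tau(X)} applies: I would pick an integer $n \geq 2$ large enough that $Y$ has memory $n-1$ (possible since, as noted after Definition~\ref{d:memory}, memory is monotone in $n$) and simultaneously $\Delta_n$ is a memory set for $\tau$ (possible by Remark~\ref{r:triangle}). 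With this choice, Proposition~\ref{p:A(tau,M,X) accepts tau(X)} yields $\Sh_{\A(\tau,\Delta_n,Y)} = \tau(Y) = X$, so $X = \Sh_\A$ with $\A = \A(\tau,\Delta_n,Y)$, i.e.\ $X$ is accepted by an unrestricted Rabin automaton.

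The only point requiring a little care is this simultaneous choice of $n$: one must observe that enlarging the memory of the SFT $Y$ and enlarging the memory set of $\tau$ to a common $\Delta_n$ can be carried out together, which is immediate because both quantities need only be taken sufficiently large. Beyond this bookkeeping there is no substantial obstacle; the entire content of the corollary is already contained in the two cited propositions.
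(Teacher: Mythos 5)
Your proof is correct and follows exactly the paper's own argument: one direction is Proposition~\ref{p:XA sofic}, and the other applies Proposition~\ref{p:A(tau,M,X) accepts tau(X)} after choosing $n$ large enough to serve simultaneously as memory bound for $Y$ and memory set for $\tau$. The remark about the simultaneous choice of $n$ is the same bookkeeping the paper handles with its phrase ``for a suitable $n\geq 2$''.
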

\proof
Let $X \subset B^{\Sigma^*}$ be a sofic tree shift.
By Definition~\ref{d:sofic}, there exist a finite alphabet~$A$, a tree shift of finite type $Y\subset A^{\Sigma^*}$, and a
cellular automaton $\tau \colon  Y \to  B^{\Sigma^*}$ such that $X = \tau(Y)$.
Hence, for a suitable $n\geq 2$ (depending on the memory of $Y$ and on the memory set for
$\tau$) we have $X = \Sh_{\A(\tau, \Delta_n, Y)}$ by Proposition~\ref{p:A(tau,M,X) accepts tau(X)}.
Conversely, if $X = \Sh_\A$ then $X$ is sofic by Proposition~\ref{p:XA sofic}.
\endproof
\subsection{The one-dimensional case}
Let $A$ and $B$ be two finite alphabets. Let $X \subset A^\mathbb{N}$ be a tree shift of finite type and let $\tau \colon  X \to  B^\mathbb{N}$ be a CA.
We can find $n \in \mathbb{N}$ such that the interval $\{0,1,\dots,n-1\}$ is a memory set for $\tau$ and such that $n-1$ is the memory of $X$. Thus, the forbidden blocks of $X$ are words of length $n-1$ and the local
defining map of $\tau$ is defined on $A^n$.

The unrestricted Rabin automaton $\A(\tau, \Delta_n, X)$ has state set $X_{n-1} \subset A^{n-1}$, that is the set of words of length $n-1$ that appear in some configuration of $X$.
Two words $u = a_1a_2 \cdots a_{n-1}$ and $u' = a_1'a_2' \cdots a_{n-1}'$ constitute respectively the source and (unique) terminal states of a transition bundle $t\in \T$ if and only if $a_i' = a_{i+1}$ for all $i=1,2,\dots,n-2$ and $w=a_1a_2 \cdots a_{n-1}a_{n-1}'\in X_n$. Moreover, if $\mu \colon  A^n \to  B$ is the corresponding local defining map for $\tau$, then $\lambda(t) = \mu(w)$.
Hence the configurations in $\tau(X)$ are the right-infinite sequences of labels that correspond to a right-infinite path on this directed graph.

\begin{example}\label{e:even1dimensional}
Let $Y = \Sh(\{11\}) \subset \{0,1\}^\mathbb{N}$ be the golden mean shift and let $X = \Sh(\{10^{2n+1}1 : n \in \mathbb{N}\})\subset \{0,1\}^\mathbb{N}$ be the even shift. These two shifts are both presented in Example~\ref{e:evenshift}. In that example we also defined a surjective cellular automaton $\tau \colon  Y \to X$ with memory set $M  = \{0,1\}= \Delta_2$.
The bundle automaton $\mathfrak{A}$ accepting the $2$nd higher block shift of $Y$ is represented in Figure~\ref{FIGgoldeneven}. As pointed out in Remark~\ref{r:transducer}, by labeling each edge of $\mathfrak{A}$ with the first letter of the word corresponding to the source state of the edge, we get a presentation of the golden mean shift (see Figure~\ref{FIGgolden}). By labeling each edge of $\mathfrak{A}$ with its image under $\mu$, we get the unrestricted Rabin automaton $\A = \A(\tau, M, Y)$ associated with~$\tau$. Proposition~\ref{p:A(tau,M,X) accepts tau(X)} ensures that $\A$ is a presentation of the even shift (see Figure~\ref{FIGeven}).
\end{example}

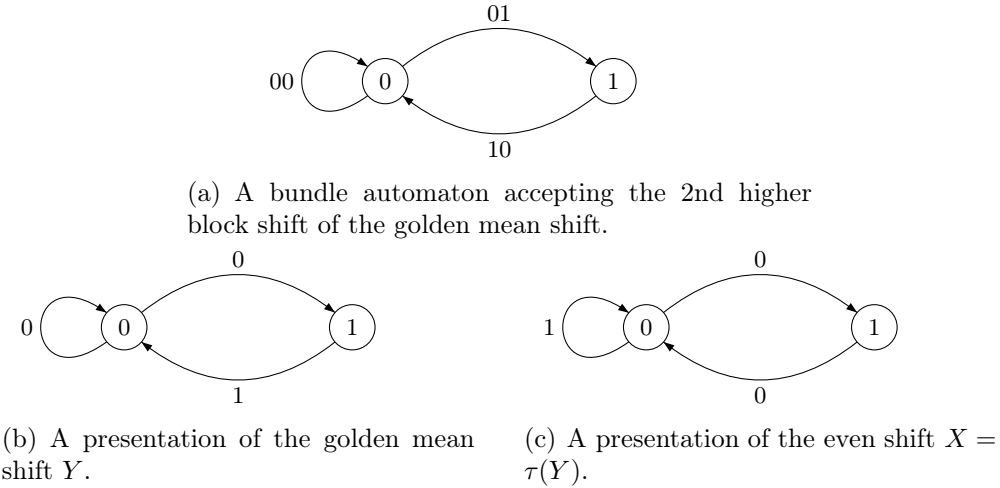
\begin{figure}[h!]
\scriptsize
\centering
\subfigure[A bundle automaton accepting the 2nd higher block shift of the golden mean shift.]{
\begin{picture}(80,20)(-40,-12)
\gasset{Nw=6,Nh=6}
\node(A)(-15,0){0}
\node(B)(15,0){1}
\drawloop[loopangle=180](A){00}
\drawedge[curvedepth=7](A,B){01}
\drawedge[curvedepth=7](B,A){10}
\end{picture}\label{FIGgoldeneven}
}\\
\subfigure[A presentation of the golden mean shift~$Y$.]{
\begin{picture}(60,20)(-30,-12)
\gasset{Nw=6,Nh=6}
\node(A)(-15,0){0}
\node(B)(15,0){1}
\drawloop[loopangle=180](A){0}
\drawedge[curvedepth=7](A,B){0}
\drawedge[curvedepth=7](B,A){1}
\end{picture}\label{FIGgolden}
}\quad\quad
\subfigure[A presentation of the even shift~$X=\tau(Y)$.]{
\begin{picture}(60,20)(-30,-12)
\gasset{Nw=6,Nh=6}
\node(A)(-15,0){0}
\node(B)(15,0){1}
\drawloop[loopangle=180](A){1}
\drawedge[curvedepth=7](A,B){0}
\drawedge[curvedepth=7](B,A){0}
\end{picture}\label{FIGeven}
}
\caption{Representation of a surjective cellular automaton $\tau$ from the golden mean shift $Y$ onto the even shift $X$.}
\label{FIGevene}
\end{figure}

\subsection{Regular configurations and surjunctivity of sofic tree shifts}\label{ss:Regular configurations in sofic tree shifts}
Let $\Sigma$ be a nonempty finite set. Let $A$ and $B$ be two finite alphabets.

\begin{definition}\label{d:regular}
A configuration $f\in A^{\Sigma^*}$ is said to be \emph{regular} (or \emph{periodic}) if its orbit under the action of $\Sigma^*$ is finite, that is, $\{f^w : w \in \Sigma^*\}$ is a finite set. 
\end{definition}

\begin{theorem}
\label{t:regular}
Let $X \subset A^{\Sigma^*}$ be a sofic tree shift. Then the set of regular configurations in $X$ is dense in $X$ with respect to the prodiscrete topology.
\end{theorem}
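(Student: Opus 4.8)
The plan is to exploit the presentation of a sofic tree shift by an unrestricted Rabin automaton, as granted by Corollary~\ref{c:sofic iff accepted}. So write $X = \Sh_\A$ for some unrestricted Rabin automaton $\A = (S,\Sigma,A,\T)$; by Remark~\ref{r:essential} we may assume $\A$ is essential. Fix a configuration $f \in X$ and an integer $n \geq 1$; we must produce a regular configuration $g \in X$ with $g\vert_{\Delta_n} = f\vert_{\Delta_n}$, i.e.\ $g \in \N(f,n)$. Since $f \in \Sh_\A$ there is a homomorphism $\alpha \colon \G_f \to \A$, that is, a map $\alpha \colon \Sigma^* \to S$ with $(\alpha(w);f(w);(\alpha(w\sigma))_{\sigma\in\Sigma}) \in \T$ for every $w \in \Sigma^*$.

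The key step is a ``folding'' construction: we keep the part of $\alpha$ over the finite set $\Delta_n$ untouched, and then, at each word $w$ of length exactly $n-1$ (a leaf of $\Delta_n$), we replace the entire subtree hanging below $w$ by a periodic continuation. Concretely, for each such leaf $w$, the state $s_w := \alpha(w)$ is the source of some transition bundle; since $\A$ is essential we can, starting from $s_w$, keep choosing transition bundles and thereby build a map $\beta_{s} \colon \Sigma^* \to S$, for each state $s \in S$, satisfying the acceptance relation of the \emph{bundle} automaton underlying $\A$ and with $\beta_s(\varepsilon) = s$. To make $\beta_s$ (and hence the resulting configuration) regular, we build it so that $\beta_s(w')$ depends only on the \emph{last state reached}, not on the full history: i.e.\ we fix once and for all, for every state $t \in S$, one transition bundle $\theta(t) = (t; c_t; (t_\sigma)_{\sigma\in\Sigma}) \in \T$ with source $t$, and then set $\beta_s$ recursively by $\beta_s(\varepsilon) = s$ and $\beta_s(w'\sigma) = \big(\mathbf{t}_\sigma(\theta(\beta_s(w')))\big)$. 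The induced configuration $h_s \in A^{\Sigma^*}$ defined by $h_s(w') = \lambda(\theta(\beta_s(w'))) = c_{\beta_s(w')}$ then has the property that $h_s^{w'}$ depends only on $\beta_s(w') \in S$, so its orbit has size at most $|S|$; hence $h_s$ is regular, and by construction $h_s \in \Sh_\A$ via $\beta_s$.

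Now assemble $g$: on $\Delta_n$ put $g = f$; and for every leaf $w \in \Sigma^{n-1}$ of $\Delta_n$ and every $w' \in \Sigma^*$, put $g(ww') = h_{s_w}(w')$ where $s_w = \alpha(w)$. This is well defined (the words of the form $ww'$ with $w$ a leaf and $w' \in \Sigma^*$, together with $\Delta_n$, partition $\Sigma^*$, and the definitions agree at $w' = \varepsilon$ since $h_{s_w}(\varepsilon) = c_{s_w}$ — here one should pick $\theta(s_w)$ to be, when possible, a bundle witnessing the original transition $(\alpha(w);f(w);\ldots)$, or simply redefine $g(w) = f(w)$ at the leaves which is consistent). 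Define $\gamma \colon \Sigma^* \to S$ by $\gamma = \alpha$ on $\Delta_n$ and $\gamma(ww') = \beta_{s_w}(w')$ for $w$ a leaf, $w' \in \Sigma^*$; one checks $\gamma$ is a homomorphism $\G_g \to \A$ (the only transitions to verify are those internal to $\Delta_n$, inherited from $\alpha$; those at the leaves, matching the seam; and those below the leaves, coming from the $\beta$'s), so $g \in \Sh_\A = X$. Finally $g$ is regular: its orbit $\{g^w : w \in \Sigma^*\}$ is a finite set because for $|w| \geq n$, $g^w = h_{s}^{w'}$ for some leaf-word and some suffix, of which there are at most $|S|$ distinct values, and there are only finitely many $w$ with $|w| < n$. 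Since $g\vert_{\Delta_n} = f\vert_{\Delta_n}$, i.e.\ $g \in \N(f,n)$, and $n$ was arbitrary, the regular configurations are dense in $X$.

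The main obstacle is the careful bookkeeping at the ``seam'': one must choose the periodic continuations $h_{s_w}$ so that the transition bundle used by $\gamma$ at the leaf word $w$ is simultaneously (a) a legitimate element of $\T$ with source $\alpha(w)$ and label $f(w)$ — forced by the requirement $g(w) = f(w)$ — and (b) the ``canonical'' bundle $\theta$ that drives the periodic behaviour below $w$. The clean fix is to separate the roles: let $g$ agree with $f$ on all of $\Delta_n$ (including the leaves), use the original bundle $(\alpha(w);f(w);(\alpha(w\sigma))_{\sigma\in\Sigma})$ only to pass from level $n-1$ to level $n$, and only \emph{from level $n$ onward} switch to the canonical self-similar rule $\theta$ seeded at the states $\alpha(w\sigma)$. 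Then regularity still holds (orbit bounded by $|S| + $ a fixed finite correction from levels $<n+1$), acceptance is immediate on each piece, and $g \in \N(f,n)$. Everything else is routine verification that $\gamma$ satisfies \eqref{eq:acceptance} piecewise.
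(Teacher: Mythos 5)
Your proof is correct and follows essentially the same strategy as the paper: fix one outgoing transition bundle per state of an essential presentation, use it to generate a regular configuration $h_s$ seeded at each state $s$ (the paper's $f_s$, built from a choice function $\xi$), and graft these periodic tails onto the accepting homomorphism of $f$ below a finite depth. The only difference is cosmetic --- the paper grafts at level $n-1$, so its approximation agrees with $f$ only on $\Delta_{n-1}$ (which suffices for the limit), whereas your ``clean fix'' grafts one level deeper to keep $f$ on all of $\Delta_n$; the seam issue you flag is real in your first formulation, and the fix you give resolves it correctly.
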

\proof
Let $\A = (S,\Sigma,A,\T)$ be an essential presentation of $X$ (see Remark~\ref{r:essential}).
Fix a map $\xi \colon  S \to  \T$ that associates with each state $s\in S$ a transition bundle $\xi(s) \in \T$ starting at $s$.
For each $s\in S$ there exists exactly one homomorphism $\alpha_s \colon  \Sigma^* \to S$ such that $\alpha_s(\varepsilon) = s$ and
$\alpha_s(w\sigma) = \mathbf{t}_\sigma(\xi(\alpha_s(w)))$, for every $w \in \Sigma^*$ and $\sigma \in \Sigma$.
Let $f_s \in X$ be the configuration accepted by $\A$ via $\alpha_s$. Notice that for each $s \in S$ and $w \in \Sigma^*$ one has $(\alpha_s)^w = \alpha_{\alpha_s(w)}$. Thus $(f_s)^w = f_{\alpha_s(w)}$. Moreover, since $\vert \{f_s : s \in S\} \vert \leq \vert S \vert$, each configuration $f_s$ is regular.

Fix $f \in X$, and suppose that $f$ is accepted by $\A$ via a homomorphism $\beta \colon \Sigma^* \to S$.
Define $\beta_n \colon  \Sigma^* \to  S$ as the map coinciding with $\beta$ on $\Delta_n$ and such that $(\beta_n)^w = \alpha_{\beta(w)}$ for each $w \in \Sigma^{n-1}$.
It is easy to see that there exists a suitable configuration $g_n\in A^{\Sigma^*}$ such that $\beta_n \colon  \G_{g_n} \to  \A$ is a homomorphism.
By definition, we have that $g_n\in X$.

Note that $g_n$ is regular. Indeed for each $w \in \Sigma^{n-1}$ and $v \in \Sigma^*$ we have $(g_n)^{wv} = ((g_n)^w)^v = (f_{\beta(w)})^v = f_{\alpha_{\beta(w)}(v)}$.
Since $\lim_{n \to \infty}g_n =f$, we have that $f$ is in the closure of the regular configurations of $X$.
\endproof
\begin{remark}
The above result is no more true, in general, when the free monoid $\Sigma^*$ is replaced by a free group.
In~\cite{Fiorenzi09}, a counterexample is illustrated in which
$A = \{0,1\}$ and $X \subset A^\mathbb{Z}$ is the subshift of finite type with
defining set of forbidden words ${\mathcal F} = \{01\}$.
\end{remark}

Given any subset $X \subset A^{\Sigma^*}$ we denote by $\mathcal{R}(X)$
(resp. $\mathcal{R}_n(X)$) the set of regular configurations of $X$
(resp. regular configurations of $X$ whose orbits have at most
$n$ elements).

In order to state the following characterization of regularity of
configurations in $A^{\Sigma^*}$ we recall that an equivalence relation
$R \subset S \times S$ on a semigroup $S$ is said to be \emph{right-invariant}
 provided that $(s_1,s_2) \in R$ implies $(s_1s, s_2s) \in
R$ for all $s_1,s_2,s \in S$. We also say that $R$ is of \emph{finite index} if there are only finitely many $R$-equivalence
classes $[s] := \{s' \in S: (s,s') \in R\}$, where $s \in S$. In this case, the number of these classes is called the \emph{index} of $R$.

\begin{proposition}\label{p:regular iff constant}
Let $f \in A^{\Sigma^*}$ be a configuration. Then the following conditions are equivalent:
\begin{enumerate}[{\rm (a)}]
\item $f$ is regular;
\item there exists a right-invariant equivalence relation $R$ of finite index on $\Sigma^*$
 such that $f$ is constant on each equivalence class of $R$.
\end{enumerate}
\end{proposition}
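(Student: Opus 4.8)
The plan is to prove the two implications separately, the direction (b)$\Rightarrow$(a) being the routine one and (a)$\Rightarrow$(b) requiring a little construction.

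For (b)$\Rightarrow$(a): Suppose $R$ is a right-invariant equivalence relation of finite index on $\Sigma^*$ on whose classes $f$ is constant. I would show directly that the map $w\mapsto f^w$ factors through the quotient $\Sigma^*/R$. Indeed, if $(v,w)\in R$, then by right-invariance $(vu,wu)\in R$ for every $u\in\Sigma^*$, so $f(vu)=f(wu)$ for all $u$, which says precisely $f^v=f^w$. Hence $\{f^w:w\in\Sigma^*\}$ has cardinality at most the index of $R$, so $f$ is regular. This step is immediate.

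For (a)$\Rightarrow$(b): Assume $f$ is regular, so $\{f^w:w\in\Sigma^*\}$ is finite. The natural candidate is the relation $R_f$ defined by $(v,w)\in R_f$ iff $f^v=f^w$. I would check three things. First, $R_f$ is an equivalence relation — clear, since it is a kernel of a map. Second, $R_f$ has finite index: its classes are in bijection with the (finite) orbit $\{f^w:w\in\Sigma^*\}$, so the index equals the orbit size. Third, $R_f$ is right-invariant: if $f^v=f^w$ then for any $\sigma\in\Sigma$ and any $u\in\Sigma^*$ we have $f^{v\sigma}(u)=f^v(\sigma u)=f^w(\sigma u)=f^{w\sigma}(u)$, so $f^{v\sigma}=f^{w\sigma}$; iterating over the letters of an arbitrary word $s\in\Sigma^*$ (or observing $(f^v)^s=(f^w)^s$ directly from $f^v=f^w$) gives $f^{vs}=f^{ws}$, i.e. $(vs,ws)\in R_f$. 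Finally $f$ is constant on each $R_f$-class: if $(v,w)\in R_f$ then $f^v=f^w$, and evaluating at $\varepsilon$ gives $f(v)=f^v(\varepsilon)=f^w(\varepsilon)=f(w)$.

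I expect no serious obstacle here; the only point requiring a moment of care is the right-invariance of $R_f$, where one must be careful to use the identity $(f^w)^s=f^{ws}$ (which follows directly from the definition of the shift action: $(f^w)^s(u)=f^w(su)=f(wsu)=f^{ws}(u)$) rather than trying to manipulate $s$ letter by letter. With that identity in hand, right-invariance is a one-line computation. This proposition will presumably be used to connect regular configurations of a sofic tree shift with colorings of a finite quotient, feeding into the surjunctivity argument (Corollary~\ref{c:surjunctive}).
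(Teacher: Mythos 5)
Your proof is correct and follows essentially the same route as the paper: both directions hinge on the relation $R_f$ defined by $f^v=f^w$, with right-invariance checked via $(f^w)^s=f^{ws}$, constancy on classes by evaluating at $\varepsilon$, and the converse obtained by showing any $R$ as in (b) is contained in $R_f$. No gaps.
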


\begin{proof}
Let $f \in A^{\Sigma^*}$ and consider the map
$\rho_f \colon \Sigma^* \to  A^{\Sigma^*}$ defined by
$\rho_f(w) = f^w$ for all $w \in \Sigma^*$.
Then $\rho_f$ induces the equivalence relation $R_f$ on
$\Sigma^*$ defined by $(w,w') \in R_f$ if $\rho_f(w) = \rho_f(w')$
(i.e., if $f^w = f^{w'}$), where $w,w' \in \Sigma^*$. Moreover $\rho_f$ induces a bijection between the
set of $R_f$-equivalence classes in $\Sigma^*$ and the
image of $\rho_f$ (the latter coincides with the orbit of the configuration $f$).

Suppose that $w_1,w_2 \in \Sigma^*$ and $(w_1, w_2) \in R_f$, i.e., $f^{w_1} = f^{w_2}$.
Then $f(w_1) = f^{w_1}(\varepsilon) = f^{w_2}(\varepsilon) = f(w_2)$, thus showing that
$f$ is constant on each $R_f$-equivalence class.
Moreover, for $w, u \in \Sigma^*$, we have $f^{w_1w}(u) = f^{w_1}(wu) = f^{w_2}(wu) = f^{w_2w}(u)$, so that
$(w_1w,w_2w) \in R_f$ as well. This shows that the equivalence relation $R_f$ is right-invariant.

We deduce that
\begin{equation}
\label{eq:regular-iff-constant}
f \textup{ is regular if and only if } R_f\textup{ is of finite index.}
\end{equation}
The implication ``(a) $\Longrightarrow$ (b)'' is now straightforward.

Conversely, suppose (b) and suppose that $w_1,w_2,w \in \Sigma^*$ and $(w_1, w_2) \in R$. We then have $f^{w_1}(w) = f(w_1w) = f(w_2w) = f^{w_2}(w)$ (where the middle equality follows from the right-invariance of $R$ and the fact that $f$ is constant on each $R$-equivalence class), thus showing that $(w_1, w_2) \in R_f$.
It follows that $R \subset R_f$ so that $R_f$ also is of finite index.
From \eqref{eq:regular-iff-constant} we deduce that $f$ is regular.
\end{proof}

\begin{remark}\label{r:regular iff constant}
The proof of Proposition~\ref{p:regular iff constant} actually shows that a configuration $f \in A^{\Sigma^*}$ belongs to $\mathcal{R}_n(A^{\Sigma^*})$ if and only if
there exists a right-invariant equivalence relation $R$ of finite index $\leq n$ on $\Sigma^*$
such that $f$ is constant on each equivalence class of $R$.
\end{remark}

\begin{corollary}\label{c:Rn finite}
The set $\mathcal{R}_n(A^{\Sigma^*})$
is finite for every $n \in \mathbb{N}$.
\end{corollary}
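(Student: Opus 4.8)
The plan is to derive the statement from the characterization recorded in Remark~\ref{r:regular iff constant}: a configuration lies in $\mathcal{R}_n(A^{\Sigma^*})$ precisely when it is constant on the classes of some right-invariant equivalence relation $R$ on $\Sigma^*$ of index at most $n$. Hence $\mathcal{R}_n(A^{\Sigma^*})$ is the union, over all such relations $R$, of the sets $\{f \in A^{\Sigma^*} : f \text{ is constant on every } R\text{-class}\}$. Each of these sets is in bijection with $A^{\Sigma^*/R}$, and $\Sigma^*/R$ has at most $n$ elements, so each has at most $\vert A\vert^n$ elements; it therefore suffices to show that there are only finitely many right-invariant equivalence relations on $\Sigma^*$ of index at most $n$, and then to invoke that a finite union of finite sets is finite.

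To establish this last point I would argue that, although $\Sigma^*$ is infinite, such a relation is encoded by a bounded amount of data. Let $R$ be right-invariant of index $m \le n$, and enumerate its classes as $C_1, C_2, \dots, C_m$ with $\varepsilon \in C_1$. For each $\sigma \in \Sigma$ and each $i$, right-invariance forces the set $C_i\sigma = \{w\sigma : w \in C_i\}$ to lie inside a single class, which defines a map $\phi_\sigma \colon \{1, \dots, m\} \to \{1, \dots, m\}$ by letting $\phi_\sigma(i)$ be the index of that class. A straightforward induction on the length of $w = \sigma_1\sigma_2\cdots\sigma_k$ then shows $w \in C_{(\phi_{\sigma_k}\circ\cdots\circ\phi_{\sigma_1})(1)}$, so the partition into classes, and hence $R$ itself, is completely determined by the tuple $(m, (\phi_\sigma)_{\sigma \in \Sigma})$; conversely, any such tuple produces, via this recipe, a right-invariant equivalence relation of index at most $n$. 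As there are only finitely many tuples with $1 \le m \le n$ and each $\phi_\sigma$ a self-map of an $m$-element set, there are only finitely many right-invariant equivalence relations on $\Sigma^*$ of index at most $n$, as required.

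I do not anticipate a genuine obstacle; the only point requiring care is the verification just sketched, namely that the transition maps $\phi_\sigma$ are well defined (this is exactly right-invariance) and that they, together with the class of $\varepsilon$, recover the whole relation on the infinite monoid $\Sigma^*$. One may equivalently phrase the argument in the language of the previous sections: a configuration $f \in \mathcal{R}_n(A^{\Sigma^*})$ is determined by its (finite) orbit $O_f$ under the shift action, together with the shift maps $g \mapsto g^\sigma$ on $O_f$, the labeling $g \mapsto g(\varepsilon)$, and the distinguished element $f$ itself, i.e.\ by an essential unrestricted Rabin automaton on at most $n$ states with a distinguished state; there are only finitely many of those, and each determines at most one configuration in this way.
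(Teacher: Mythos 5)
Your proof is correct and follows essentially the same route as the paper: both reduce, via Remark~\ref{r:regular iff constant}, to showing there are finitely many right-invariant equivalence relations of index at most $n$ on $\Sigma^*$, observe that each such relation contributes at most $\vert A\vert^{n}$ configurations, and encode the relation by the induced transition data on its (at most $n$-element) quotient. The only difference is in how that encoding is counted: you enumerate the abstract transition maps $\phi_\sigma$ directly (which lets you skip the paper's preliminary lemma that every class has a representative of length $<n$), whereas the paper uses that lemma to read the same data off the $R$-types of the elements of $\Delta_{n+1}$; both yield a finite bound, and your version is, if anything, slightly more economical.
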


\begin{proof}
Let $R \subset \Sigma^* \times \Sigma^*$ be a right-invariant equivalence relation of index $n$.
Given $w \in \Sigma^*$ we denote by $\mathfrak{t}(w)$ the $R$-equivalence class of $w$ and we call it the \emph{$R$-type} of $w$.
Thus, there are exactly $n$ many $R$-types.
Pick $w_1, w_2, \dots, w_n \in \Sigma^*$ representatives of these $R$-classes of minimal word length.  We claim that $|w_i| < n$ for all $i=1,2,\dots, n$. Indeed suppose by
contradiction that one of these representatives, denote it by $w$, satisfies
$w = \sigma_1\sigma_2 \cdots \sigma_k$ where $\sigma_i \in \Sigma$, $1 \leq i \leq k$ and $k \geq n$.
Set $u_0 = \varepsilon$ and $u_i = \sigma_1\sigma_2 \cdots \sigma_i$ for $1
\leq i \leq k$. Then necessarily there exist $0 \leq i < j \leq k$ such that
$(u_i, u_j) \in R$. But then $w' = \sigma_1\sigma_2 \cdots \sigma_i
\sigma_{j+1} \cdots \sigma_k$ would satisfy $(w', w) \in R$ and $|w'|<|w|$, contradicting
the minimality of $|w|$.

It follows from our discussion that the
equivalence relation $R$ is recursively (from top to down) recovered by the following finite data:
\begin{equation}
\label{eq:finite-data}
\left(\mathfrak{t}(w_i), (\mathfrak{t}(w_i\sigma))_{\sigma \in \Sigma}\right)_{i=1}^n.
\end{equation}

Since the data \eqref{eq:finite-data} can be detected by looking at the $R$-types of
the elements just in $\Delta_{n+1}$, we deduce that there are at most $n^{|\Delta_{n+1}|}$
many such right-invariant equivalence relations of index $n$ on $\Sigma^*$.

Let $f \in A^{\Sigma^*}$ be a configuration which is constant
on each equivalence class of a suitable right-invariant equivalence relation $R$ of finite index $n$ on $\Sigma^*$.
Then $f$ only depends on its values on the $n$ representatives of $R$.
It follows that there are at most $n^{|\Delta_{n+1}|}|A|^n$ many such regular configurations $f$
in $A^{\Sigma^*}$.

By Remark~\ref{r:regular iff constant}, it follows that $\vert \mathcal{R}_n(A^{\Sigma^*}) \vert \leq \sum_{i=1}^n i^{|\Delta_{i+1}|}|A|^i \leq n^{|\Delta_{n+1}|+1}|A|^n$.
\end{proof}

\begin{proposition}\label{p:dense under tau}
Let $X \subset A^{\Sigma^*}$ be a tree shift and let $\tau \colon X \to
B^{\Sigma^*}$ be a CA.
If $\mathcal{R}(X)$ is dense in $X$ then $\mathcal{R}(\tau(X))$
is dense in $\tau(X)$.
\end{proposition}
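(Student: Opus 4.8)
The plan is to show that the cellular automaton $\tau$ carries regular configurations to regular configurations, and then to combine this with the continuity of $\tau$ and the density hypothesis to conclude.

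First I would prove the inclusion $\tau(\mathcal{R}(X)) \subseteq \mathcal{R}(\tau(X))$. Let $f \in \mathcal{R}(X)$, so that the orbit $O_f := \{f^w : w \in \Sigma^*\}$ is a finite subset of $X$. By the Curtis--Hedlund--Lyndon theorem (Theorem~\ref{t:curtis}), $\tau$ commutes with the shift action, so $(\tau(f))^w = \tau(f^w)$ for every $w \in \Sigma^*$; hence the orbit of $\tau(f)$ equals $\tau(O_f)$, which is finite. Thus $\tau(f)$ is regular, and since $\tau(f) \in \tau(X)$, it belongs to $\mathcal{R}(\tau(X))$.

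Next I would use continuity. Fix $g \in \tau(X)$ and write $g = \tau(f)$ for some $f \in X$. Since the configuration space is metrizable and $\mathcal{R}(X)$ is dense in $X$, there is a sequence $(f_n)$ in $\mathcal{R}(X)$ with $f_n \to f$. By Theorem~\ref{t:curtis}, $\tau$ is continuous, so $\tau(f_n) \to \tau(f) = g$; by the first step each $\tau(f_n)$ lies in $\mathcal{R}(\tau(X))$. Hence $g$ lies in the closure of $\mathcal{R}(\tau(X))$. As $g$ was an arbitrary element of $\tau(X)$, and $\tau(X)$ is itself a tree shift (hence closed) by Remark~\ref{r:image}, this shows that $\mathcal{R}(\tau(X))$ is dense in $\tau(X)$.

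There is really no serious obstacle here: the proposition is a soft consequence of the fact that cellular automata commute with the shift (so that finiteness of an orbit is preserved) together with elementary point-set topology. The only point requiring a moment's care is the bookkeeping about closures --- density of $\mathcal{R}(\tau(X))$ inside $\tau(X)$ follows from $\tau(X) \subseteq \overline{\mathcal{R}(\tau(X))}$ precisely because $\mathcal{R}(\tau(X)) \subseteq \tau(X)$ --- together with the (already recorded) facts that $A^{\Sigma^*}$ is metrizable and that $\tau(X)$ is closed.
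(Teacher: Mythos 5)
Your proof is correct and follows essentially the same route as the paper: first show $\tau(\mathcal{R}(X))\subseteq\mathcal{R}(\tau(X))$ using that $\tau$ commutes with the shift (you phrase this via finiteness of orbits, the paper via finiteness of the index of $R_f$ --- the same fact), then conclude by continuity (you via sequences in a metrizable space, the paper via the inclusion $\tau(\overline{\mathcal{R}(X)})\subseteq\overline{\tau(\mathcal{R}(X))}$). These are only cosmetic differences.
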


\begin{proof}
Set $Y = \tau(X)$. First we prove that $\tau(\mathcal{R}(X))
\subset \mathcal{R}(Y)$. This follows from the fact that $\tau$ commutes
with the shift action (Theorem~\ref{t:curtis}),
indeed if $f \in X$ and $(w,w')\in R_f$ then $f^w = f^{w'}$ and $(\tau(f))^w = \tau(f^w) = \tau(f^{w'}) = (\tau(f))^{w'}$ that is
$(w,w') \in R_{\tau(f)}$. Moreover, if $R_f$ has
finite index, then $R_{\tau(f)}$ has finite index as well.
Finally, from the continuity of $\tau$, we deduce $Y = \tau(X) = \tau(\overline{\mathcal{R}(X)}) \subset \overline{\tau(\mathcal{R}(X))} \subset \overline{\mathcal{R}(Y)}$.
\end{proof}

\begin{definition}[Surjunctivity]
A selfmapping $\tau : X \to X$ on a set $X$ is
\emph{surjunctive} if it is either noninjective or surjective.
\end{definition}
In
other words a map is surjunctive if it is not a strict
embedding. Hence the implication ``injective $\Longrightarrow$
surjective'' holds for surjunctive maps. The notion of surjunctivity is due to
Gottschalk~\cite{Gottschalk73}.

The simplest example of a surjunctive map is provided by a
selfmapping $f \colon X \to X$ where $X$ is a finite set. Other examples
are provided by linear selfmappings of finite-dimensional vector spaces
and by regular selfmappings of complex algebraic varieties (and, more
generally, of algebraic varieties over algebraically closed fields).
This latter is a highly nontrivial result called the Ax-Grothendieck Theorem (see \cite{Ax68}). Many others examples of
surjunctive maps are given by Gromov in~\cite{Gromov99} (see also~\cite{CeccheriniCoornaert13}).
Moreover, Richardson proves in \cite{Richardson72} that a cellular automaton
$\tau \colon A^{\mathbb{Z}^d} \to A^{\mathbb{Z}^d}$ is surjunctive for each $d \geq 1$.
In fact surjunctivity of cellular
automata $\tau \colon A^G \to A^G$ was proved for every \emph{amenable}
group $G$ as a consequence of the Garden of Eden Theorem for amenable
groups (\cite{CeccheriniMachiScarabotti99}, see also \cite[Theorem 5.9.1]{livre}) and, more
generally, for every \emph{sofic} group (a result due to Gromov
\cite{Gromov99} and Weiss \cite{Weiss00}, see also \cite[Theorem
7.8.1]{livre}).

The following is a sufficient condition for a selfmapping of a
topological space to be surjunctive. Similar conditions are stated
in~\cite{Gromov99}.

\begin{lemma}\label{density}
Let X be a topological space, let $\tau : X \to X$ be a
closed map and let $(X_i)_{i \in I}$ be a family of subsets of
$X$ such that
\begin{itemize}
\item $X = \overline{\bigcup_{i \in I}{X_i}}$
\item $\tau(X_i) \subseteq X_i$
\item $\tau\vert_{X_i} : X_i \to X_i$ is surjunctive
\end{itemize}
then $\tau$ is surjunctive.
\end{lemma}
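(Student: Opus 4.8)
The plan is to run the standard "dense orbit closure" argument for surjunctivity, exploiting the closedness hypothesis at the very last step. Since $\tau$ is surjunctive precisely when it is noninjective or surjective, I would start by assuming $\tau$ is injective and aim to prove that $\tau$ is surjective; if $\tau$ is not injective there is nothing to prove.

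First I would pass to the pieces $X_i$. If $\tau$ is injective on $X$, then its restriction $\tau\vert_{X_i}\colon X_i\to X_i$ (which is well defined because $\tau(X_i)\subseteq X_i$) is injective as well. By hypothesis $\tau\vert_{X_i}$ is surjunctive, so being injective it must be surjective, i.e. $\tau(X_i)=X_i$ for every $i\in I$. Taking the union over $i$ gives $\tau\bigl(\bigcup_{i\in I}X_i\bigr)=\bigcup_{i\in I}X_i$.

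Next I would globalize. Since $X=\overline{\bigcup_{i\in I}X_i}$ is a closed subset of $X$ and $\tau$ is a closed map, the image $\tau(X)$ is closed in $X$. Moreover $\tau(X)\supseteq \tau\bigl(\bigcup_{i\in I}X_i\bigr)=\bigcup_{i\in I}X_i$, and being closed it therefore contains the closure $\overline{\bigcup_{i\in I}X_i}=X$. Combined with the trivial inclusion $\tau(X)\subseteq X$ (as $\tau$ maps $X$ into $X$), this yields $\tau(X)=X$, so $\tau$ is surjective, hence surjunctive.

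There is no serious obstacle here: the argument is essentially formal. The only point that genuinely uses a hypothesis beyond the density of $\bigcup_i X_i$ is the passage from surjectivity on the dense union to surjectivity on all of $X$, and this is exactly what the closedness of $\tau$ delivers (a merely continuous $\tau$ would give only $\overline{\tau(X)}=X$, which is not enough). It is also worth recording explicitly why $\tau\vert_{X_i}$ is a well-defined selfmap and why injectivity is inherited by restriction, since those are the facts that let us invoke the assumed surjunctivity of $\tau\vert_{X_i}$.
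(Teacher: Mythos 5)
Your argument is correct and is essentially the paper's own proof: assume injectivity, deduce $\tau(X_i)=X_i$ from the surjunctivity of the restrictions, observe that $\tau(X)$ contains the dense union $\bigcup_{i\in I}X_i$, and use closedness of $\tau$ to conclude that $\tau(X)$ is closed and hence equals $X$. The only cosmetic difference is that the paper writes the last step as $X=\overline{\bigcup_{i\in I}X_i}\subseteq\overline{\tau(X)}=\tau(X)$ rather than saying directly that $\tau(X)$ is a closed set containing the union, which is the same observation.
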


\begin{proof}
If $\tau$ is injective then, for every $i \in I$, the restriction
$\tau\vert_{X_i}$ is injective as well. By the hypotheses we have
$\tau(X_i) = X_i$ and hence $\bigcup_{i \in I}X_i = \bigcup_{i \in
I}\tau(X_i) = \tau(\bigcup_{i \in I}X_i) \subseteq
\tau(\overline{\bigcup_{i \in I}X_i}) = \tau(X)$. Then $X =
\overline{\bigcup_{i \in I}X_i} \subseteq \overline{\tau(X)}$, and
$\tau$ being closed we have $X \subseteq \tau(X)$.
\end{proof}

In the following theorem we prove that the density of regular
configurations is a sufficient condition for the surjunctivity of a
CA defined on a tree shift.

\begin{theorem}\label{t:dense -> surjunctive}
Let $X \subseteq A^{\Sigma^*}$ be a tree shift whose set
$\mathcal{R}(X)$ of regular configurations is dense in $X$. Then
every cellular automaton $\tau : X \to X$ is surjunctive.
\end{theorem}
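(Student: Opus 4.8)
The plan is to deduce Theorem~\ref{t:dense -> surjunctive} directly from Lemma~\ref{density} by producing a suitable family $(X_i)_{i\in I}$ of $\tau$-invariant subsets of $X$ on which $\tau$ restricts to a surjunctive map, and whose union is dense in $X$. The natural candidates are the finite sets
\[
X_i := \mathcal{R}_i(X) = \{\, f \in X : |\{f^w : w\in\Sigma^*\}| \leq i \,\},
\]
indexed by $i\in I=\mathbb{N}$. Three things need to be checked: that $\bigcup_i X_i = \mathcal{R}(X)$ is dense in $X$, that $\tau(X_i)\subseteq X_i$, and that each $\tau|_{X_i}$ is surjunctive; together with the fact that $\tau\colon X\to X$ is a closed map, Lemma~\ref{density} then yields the conclusion.

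First I would note that density of $\bigcup_i X_i$ is exactly the hypothesis of the theorem. Next, $\tau$-invariance of $X_i$: since $\tau$ commutes with the shift action by the Curtis--Hedlund--Lyndon theorem (Theorem~\ref{t:curtis}), for any $f\in X$ the orbit of $\tau(f)$ is the image under $\tau$ of the orbit of $f$, namely $\{(\tau(f))^w : w\in\Sigma^*\} = \{\tau(f^w) : w\in\Sigma^*\} = \tau(\{f^w : w\in\Sigma^*\})$; hence $|\text{orbit of }\tau(f)| \leq |\text{orbit of }f|$, so $f\in X_i$ implies $\tau(f)\in X_i$ (this is essentially the argument already given in the proof of Proposition~\ref{p:dense under tau}). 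Then, surjunctivity of $\tau|_{X_i}$ is immediate because $X_i = \mathcal{R}_i(X) \subseteq \mathcal{R}_i(A^{\Sigma^*})$ is a \emph{finite} set by Corollary~\ref{c:Rn finite}, and every selfmap of a finite set is surjunctive (injective $\Rightarrow$ bijective on a finite set). Finally, $\tau$ is a closed map: $X$ is compact (it is a closed subset of the compact space $A^{\Sigma^*}$), $A^{\Sigma^*}$ is Hausdorff, $\tau$ is continuous by Theorem~\ref{t:curtis}, and a continuous map from a compact space to a Hausdorff space is closed.

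With all the hypotheses of Lemma~\ref{density} verified --- $X = \overline{\bigcup_{i}X_i}$, $\tau(X_i)\subseteq X_i$, each $\tau|_{X_i}$ surjunctive, and $\tau$ closed --- we conclude that $\tau$ is surjunctive. I do not anticipate a genuine obstacle here: the theorem is really an assembly of already-established pieces (Corollary~\ref{c:Rn finite} supplies finiteness, Theorem~\ref{t:curtis} supplies both continuity and shift-commutativity, compactness supplies closedness, and Lemma~\ref{density} is the packaging). The only point demanding a moment's care is making sure the indexing family is chosen so that the sets are simultaneously $\tau$-invariant \emph{and} finite \emph{and} have dense union --- and $\mathcal{R}_i(X)$ does all three at once, which is precisely why Corollary~\ref{c:Rn finite} was proved beforehand.
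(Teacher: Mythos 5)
Your proposal is correct and follows exactly the paper's argument: decompose into the finite sets $\mathcal{R}_i(X)$ (finite by Corollary~\ref{c:Rn finite}), note $\tau$-invariance via shift-commutation as in Proposition~\ref{p:dense under tau}, and apply Lemma~\ref{density}. The only difference is that you spell out the closedness of $\tau$ (compactness plus continuity) explicitly, which the paper leaves implicit.
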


\begin{proof}
By Corollary~\ref{c:Rn finite}, the set $\mathcal{R}_n(X)
= \mathcal{R}_n(A^{\Sigma^*}) \cap X$ is finite. As proved in
Proposition~\ref{p:dense under tau}, we have that
$\tau(\mathcal{R}_n(X)) \subset \mathcal{R}_n(\tau(X)) \subset \mathcal{R}_n(X)$. Hence $\tau$ is
surjunctive by Lemma~\ref{density}.
\end{proof}

From Theorem~\ref{t:regular} we then deduce the following result.

\begin{corollary}\label{c:surjunctive}
Let $X \subset A^{\Sigma^*}$ be a sofic tree shift. If $\tau : X \to X$ is a cellular
automaton, then $\tau$ is surjunctive.
\end{corollary}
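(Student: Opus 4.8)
The plan is to obtain the statement immediately by chaining together the two main results already established in this section, so the proof will be a two-line deduction rather than a fresh argument. First I would invoke Theorem~\ref{t:regular}: since $X \subset A^{\Sigma^*}$ is sofic, the set $\mathcal{R}(X)$ of regular configurations is dense in $X$ with respect to the prodiscrete topology. This is exactly the hypothesis needed to enter Theorem~\ref{t:dense -> surjunctive}.

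Next I would apply Theorem~\ref{t:dense -> surjunctive} verbatim to $X$ and the given cellular automaton $\tau\colon X \to X$: because $\mathcal{R}(X)$ is dense in $X$, the map $\tau$ is surjunctive. Concatenating the two implications yields the corollary. There is essentially no obstacle to surmount here: all of the substantive content has already been handled upstream — the construction of regular configurations $g_n$ approximating an arbitrary $f \in X$ via an essential presentation $\A$ of $X$ (Theorem~\ref{t:regular}), the finiteness of each $\mathcal{R}_n(A^{\Sigma^*})$ (Corollary~\ref{c:Rn finite}), the invariance $\tau(\mathcal{R}_n(X)) \subset \mathcal{R}_n(X)$ together with the fact that a selfmap of a finite set is surjunctive, and the covering lemma (Lemma~\ref{density}) that patches these finite pieces into surjunctivity of $\tau$ on all of $X$. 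The only verification is that the hypotheses line up, and they do exactly.

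If desired, I would append a one-sentence remark pointing out that this deduction is special to the free monoid: as noted after Theorem~\ref{t:regular}, density of regular configurations can fail when $\Sigma^*$ is replaced by a free group, so the argument does not transfer to that setting.
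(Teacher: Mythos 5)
Your proof is correct and is exactly the paper's deduction: the corollary is stated there as an immediate consequence of Theorem~\ref{t:regular} (density of regular configurations in a sofic tree shift) combined with Theorem~\ref{t:dense -> surjunctive}. Nothing further is needed.
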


\begin{remark}The implication ``injective $\Longrightarrow$
surjective'' in Theorem~\ref{t:dense -> surjunctive} is not invertible. The following is an
example when $\vert \Sigma \vert =1$ and $A = \{ 0,1 \}$. Let $\tau$ be the cellular automaton given by the
local defining map $\mu : A^{\Delta_3} \to A$ such
that
$$\mu(a_0, a_1, a_2) = a_0 + a_2 \mod 2.$$
The cellular automaton $\tau$ is surjective and not injective. Indeed if
$(a_n)_{n \in \mathbb{N}}$ is a configuration in
$A^\mathbb{N}$, a pre-image of $(a_n)_{n \in \mathbb{N}}$ is given by:
$$
\left\{
\begin{array}{ll}
b_0 = 0&\\
b_1 = 0&\\
b_n = a_{n-1}+b_{n-2}\mod2 & \textup{if } n \geq 2\\
\end{array}
\right.
$$
as showed in Figure~\ref{fig:image}.
\begin{figure}[!h]
\scriptsize
\begin{center}
\begin{tabular}{r|c|c|c|c|c|c|c|c|c|c|c|}
\cline{2-11}
$(b_n)_{n \in \mathbb{N}}$ & $0$ & $0$ & $a_0$ & $a_1$ & $a_0+a_2$ & $a_1+a_3$ & $a_0+a_2+a_4$ & $a_1 + a_3 +a_5$ & $a_0+a_2+a_4+a_6$ &\dots \\
\cline{2-11}
$(a_n)_{n \in \mathbb{N}}$ & $a_0$ & $a_1$ & $a_2$ & $a_3$ & $a_4$ & $a_5$ & $a_6$ & $a_7$ & $a_8$ &\dots \\
\cline{2-11}
\end{tabular}\ .
\end{center}
\caption{The image of the configuration $(b_n)_{n \in \mathbb{N}}$ under the cellular automaton $\tau$.}
\label{fig:image}
\end{figure}
By taking $(b_n+1)_{n \in \mathbb{N}}$ we get a different pre-image.
\end{remark}

\section{Full-tree-patterns of a sofic tree shift}\label{s:Full-tree-patterns}
\begin{definition}[Sub-bundle]
Let $\A =(S,\Sigma,A,\T)$ be an unrestricted Rabin automaton. Let $M \subset \Sigma$ be a subset. A tuple $(s;a;(s_\sigma)_{\sigma \in M}) \in S \times A \times S^M$ is called a \emph{sub-bundle} of a transition bundle $(\bar s;\bar a;(\bar s_\sigma)_{\sigma \in \Sigma})\in \T$ provided that $s = \bar s$, $a = \bar a$, and $s_{\sigma} = \bar s_\sigma$ for each $\sigma \in M$.
\end{definition}

Recall that a \emph{$k$-ary rooted tree} is a rooted tree in which each vertex has at most $k$ children. A \emph{leaf} is a vertex without children. A \emph{full $k$-ary rooted tree} is a $k$-ary rooted tree in which every vertex other than the leaves has exactly $k$ children. Hence $\Sigma^*$ is the full $\vert\Sigma\vert$-ary rooted tree with no leaves. A \emph{subtree of $\Sigma^*$} is a connected subgraph of $\Sigma^*$ containing the root $\varepsilon$. Thus, a subtree of $\Sigma^*$ is always a $\vert\Sigma\vert$-ary rooted tree.
The \emph{height} of a finite subtree $T \subset \Sigma^*$ as the minimal $n \geq 1$ such that $T \subset \Delta_n$.
In other words, the height of a subtree is the number of vertices contained in a maximal path from the root to a leaf (we call such a path a \emph{branch}).

If $T\subset\Sigma^*$ is a subtree and $w \in T$,
we denote by $\Sigma_T(w)$ the set  $\{\sigma \in \Sigma : w\sigma \in T\}$. Hence $w\in T$ is a leaf if and only if $\Sigma_T(w) = \varnothing$.

Given a subtree $T\subset\Sigma^*$, we denote by $T^+$ the subtree of $\Sigma^*$ defined as $T \cup \{w\sigma : w \in T, \sigma \in \Sigma\}$. Note that $T^+$ is always a full subtree. If $T$ is a full subtree, then $T^+$ is obtained by adding all the $k$ children of each leaf in $T$. In particular, for each $n\geq1$ the set $\Delta_n$ is a full subtree of $\Sigma^*$ whose leaves are the elements in $\Sigma^{n-1}$. Moreover, $\Delta_n^+ = \Delta_{n+1}$.

Finite full subtrees correspond to finite and complete prefix codes in~\cite{Aubrun11}.

\begin{definition}\label{d:acceptanceSUBTREE}
Let $\A =(S,\Sigma,A,\T)$ be an unrestricted Rabin automaton. Let $T \subset \Sigma^*$ be a subtree and let $f \colon T \to A$ be a map. One
says that $f$ is \emph{accepted by $\A$} if there exists a map $\alpha \colon  T \to  S$ such that, for each $w \in T$, one has
\begin{equation}\label{eq:acceptanceSUBTREE}
(\alpha(w);f(w);(\alpha(w\sigma))_{\sigma \in \Sigma_T(w)}) \textup{ is a sub-bundle of some } t \in \T.
\end{equation}
In this case we say that $f$ is accepted by $\A$ \emph{via} $\alpha$.
\end{definition}

Note that, for a leaf $w \in T$, condition \eqref{eq:acceptanceSUBTREE} reduces to saying that there exists a transition bundle starting at $\alpha(w)$ with label $f(w)$ (in fact, $\alpha$ is not defined on $w\sigma$ for any $\sigma\in \Sigma$).

\smallskip

Let $\A$ be an unrestricted Rabin automaton and let $\Sh_\A \subset A^{\Sigma^*}$ be the sofic tree shift accepted by $\A$. If $T \subset \Sigma^*$ is a subtree and $f \in X_T$ then obviously $f$ is accepted by $\A$. If $\A$ is essential, also the converse of this fact holds.

\begin{theorem}\label{t:subtree} Let $\A =(S,\Sigma,A,\T)$ be an essential unrestricted Rabin automaton. Let $T \subset \Sigma^*$ be a subtree and suppose that $f \in A^T$ is accepted by $\A$. Then there exists a configuration $\bar f \in \Sh_\A$ such that $f = \bar f\vert_T$.
\end{theorem}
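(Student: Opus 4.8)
The plan is to extend the accepting run of $f$ on $T$ to an accepting run on the whole of $\Sigma^*$, building it level by level from the root: inside $T$ we are forced to follow the committed choices of $\alpha$, and off $T$ we are free to fill in arbitrarily, which is possible precisely because $\A$ is essential.

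First I would record the data supplied by the hypothesis. Since $f$ is accepted by $\A$ via some $\alpha\colon T\to S$, condition~\eqref{eq:acceptanceSUBTREE} lets me fix, for each $w\in T$, a transition bundle $t_w=(\alpha(w);f(w);(s^w_\sigma)_{\sigma\in\Sigma})\in\T$ whose $\sigma$-terminal state equals $\alpha(w\sigma)$ for every $\sigma\in\Sigma_T(w)$. Using essentiality of $\A$, I would also fix a map $\xi\colon S\to\T$ with $\mathbf{i}(\xi(s))=s$ for all $s\in S$, exactly as in the proof of Theorem~\ref{t:regular}.

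Next I would define $\bar\alpha\colon\Sigma^*\to S$ and $\bar f\colon\Sigma^*\to A$ by induction on word length. On $T$ put $\bar\alpha=\alpha$ and $\bar f=f$. When the induction reaches a vertex $w\in T$, declare $\bar\alpha(w\sigma)=\mathbf{t}_\sigma(t_w)$ for $\sigma\in\Sigma\setminus\Sigma_T(w)$; for $\sigma\in\Sigma_T(w)$ the value $\bar\alpha(w\sigma)=\alpha(w\sigma)$ is already committed and, by the choice of $t_w$, coincides with $\mathbf{t}_\sigma(t_w)$, so in all cases $\bar\alpha(w\sigma)=\mathbf{t}_\sigma(t_w)$. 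If instead $w\notin T$, then $\bar\alpha(w)$ has already been produced as a terminal state of the bundle chosen at the unique parent of $w$; put $\bar f(w)=\lambda(\xi(\bar\alpha(w)))$ and $\bar\alpha(w\sigma)=\mathbf{t}_\sigma(\xi(\bar\alpha(w)))$ for all $\sigma\in\Sigma$. Since each vertex of $\Sigma^*$ has a unique parent, every value is assigned exactly once, so $\bar\alpha$ and $\bar f$ are well defined, and $\bar f\vert_T=f$ is immediate.

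Finally I would check that $\bar f$ is accepted by $\A$ via $\bar\alpha$, i.e.\ that $(\bar\alpha(w);\bar f(w);(\bar\alpha(w\sigma))_{\sigma\in\Sigma})\in\T$ for every $w\in\Sigma^*$ (this is the acceptance criterion \eqref{eq:acceptance}): for $w\in T$ this tuple is exactly $t_w$, and for $w\notin T$ it is exactly $\xi(\bar\alpha(w))$. Hence $\bar f\in\Sh_\A$ and $\bar f$ restricts to $f$ on $T$, as desired. The only genuinely delicate point is consistency at the frontier of $T$: when the induction visits $w\in T$ it must reuse, not overwrite, the states $\alpha(w\sigma)$ already attached to the children $w\sigma$ lying in $T$, and the existence of a bundle $t_w$ simultaneously compatible with all of them is precisely what~\eqref{eq:acceptanceSUBTREE} guarantees; the rest is a routine bookkeeping induction on $\vert w\vert$, with essentiality ensuring the construction never gets stuck off $T$.
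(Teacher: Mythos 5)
Your proof is correct and follows essentially the same route as the paper's: extend the accepting map level by level, using for each $w\in T$ a full bundle of which the given sub-bundle is a restriction, and using essentiality (your choice function $\xi$) to continue off $T$. The frontier-consistency point you flag is exactly the one the paper also notes, so nothing is missing.
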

\proof
Suppose that $f$ is accepted by $\A$ via $\alpha \colon  T \to  S$. We define a homomorphism $\bar \alpha \colon  \Sigma^* \to  S$ and a configuration $\bar f \colon  \Sigma^* \to  A$ in such a way that $\bar \alpha\vert_T = \alpha$, $\bar f\vert_T = f$, and with $\bar \alpha$ and $\bar f$ satisfying condition~\eqref{eq:acceptance} of Remark~\ref{r:acceptance}. For this,
we recursively define $\bar \alpha$ on $\Delta_n$ for each $n \geq 1$.

If $n = 1$ then $\Delta_1 = \{\varepsilon\}$ and we obviously set $\bar \alpha(\varepsilon) = \alpha(\varepsilon)$.

Suppose now that $\bar \alpha$ has been defined on $\Delta_n$. For each $w \in \Sigma^{n-1}$ and $\sigma \in \Sigma$, we have to define $\bar \alpha(w\sigma)$.

Consider first the case $w \in T$. Then there exists a bundle $t \in \T$ such that $(\alpha(w);f(w);$ $(\alpha(w\sigma))_{\sigma \in \Sigma_T(w)})$ is a sub-bundle of $t$ and $\lambda(t)=f(w)$. If $t = (s;a;(s_\sigma)_{\sigma \in \Sigma})$,
we define $\bar \alpha(w\sigma) = s_{\sigma}$. Notice that $\bar \alpha$ and $\alpha$ coincide on $w\Sigma_T(w)$. Moreover, $(\bar \alpha(w);f(w);$ $(\bar \alpha(w\sigma))_{\sigma \in \Sigma}) \in \T$.

If $w\notin T$, consider $s= \bar \alpha(w) \in S$. Since $\A$ is essential, we can fix a transition bundle $(s;a;(s_\sigma)_{\sigma \in \Sigma})\in \T$ starting at $s$. Notice that $a = a(w)$ and $s_\sigma=s_\sigma(w)$. By defining $\bar \alpha(w\sigma) = s_\sigma$, we have that $(\bar \alpha(w);a(w);$ $(\bar \alpha(w\sigma))_{\sigma \in \Sigma})$ $\in \T$.

Define now $\bar f \colon  \Sigma^* \to  A$ by setting $\bar f\vert_T = f$ and $\bar f(w) = a(w)$ for $w\notin T$. By definition (see Remark~\ref{r:acceptance}), the configuration $\bar f$ is accepted by $\A$ via the homomorphism $\bar \alpha$.
\endproof

\begin{corollary}\label{c:subtree}
Let $\A =(S,\Sigma,A,\T)$ be an essential unrestricted Rabin automaton. Let $T \subset \Sigma^*$ be a subtree and let $f \in A^T$ be a map. Then
$$f \textup{ \it is accepted by } \A \Longleftrightarrow f \in X_T.$$
\end{corollary}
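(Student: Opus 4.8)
The plan is to derive Corollary~\ref{c:subtree} directly from Theorem~\ref{t:subtree} together with the observation, already made in the remark preceding Theorem~\ref{t:subtree}, that restrictions of configurations in $\Sh_\A$ are automatically accepted. First I would prove the implication ``$f \in X_T \Longrightarrow f$ is accepted by $\A$''. If $f \in X_T$, then by definition of $X_T$ there exists $\bar f \in \Sh_\A$ with $\bar f\vert_T = f$. Since $\bar f$ is accepted by $\A$, there is a homomorphism $\alpha \colon \Sigma^* \to S$ with $(\alpha(w);\bar f(w);(\alpha(w\sigma))_{\sigma\in\Sigma}) \in \T$ for all $w \in \Sigma^*$. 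Restricting $\alpha$ to $T$, for each $w \in T$ the tuple $(\alpha(w);f(w);(\alpha(w\sigma))_{\sigma\in\Sigma_T(w)})$ is a sub-bundle of the transition bundle $(\alpha(w);\bar f(w);(\alpha(w\sigma))_{\sigma\in\Sigma}) \in \T$, since $\Sigma_T(w) \subseteq \Sigma$ and $f(w) = \bar f(w)$. Hence condition~\eqref{eq:acceptanceSUBTREE} holds, so $f$ is accepted by $\A$ via $\alpha\vert_T$. Note that this direction does not use essentiality.

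For the converse implication ``$f$ is accepted by $\A \Longrightarrow f \in X_T$'', I would simply invoke Theorem~\ref{t:subtree}: since $\A$ is essential and $f \in A^T$ is accepted by $\A$, there exists a configuration $\bar f \in \Sh_\A$ with $f = \bar f\vert_T$. By the definition of $X_T$ (the set of restrictions to $T$ of configurations in $X = \Sh_\A$), this says precisely that $f \in X_T$.

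Combining the two implications gives the stated equivalence. There is essentially no obstacle here, since all the work has already been done in Theorem~\ref{t:subtree}; the only point requiring a little care is the first (easy) direction, where one must check that restricting a homomorphism $\alpha \colon \Sigma^* \to S$ to $T$ indeed produces sub-bundles in the sense of Definition~\ref{d:acceptanceSUBTREE} — this is immediate because a full transition bundle is trivially a sub-bundle of itself over any $M \subseteq \Sigma$, in particular over $M = \Sigma_T(w)$. One should also remark, as the paper does just before Theorem~\ref{t:subtree}, that the hypothesis of essentiality is used only in the nontrivial direction (through Theorem~\ref{t:subtree}), and that without it the converse may fail.
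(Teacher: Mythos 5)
Your proof is correct and matches the paper's intended argument exactly: the paper states the easy direction ("if $f \in X_T$ then obviously $f$ is accepted by $\A$") in the remark immediately preceding Theorem~\ref{t:subtree} and leaves the corollary as an immediate consequence of that remark together with Theorem~\ref{t:subtree}, which is precisely how you have assembled it. Your additional check that a full transition bundle is a sub-bundle of itself over $\Sigma_T(w) \subseteq \Sigma$ is the right (and only) detail needed for the easy direction.
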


\begin{definition}
Let $T$ be a finite full subtree of $\Sigma^*$. A pattern defined on $T$ is called a \emph{full-tree-pattern}.
The set of all full-tree-patterns is denoted by $\TT(A^{\Sigma^*})$. Given a shift $X \subset A^{\Sigma^*}$,
we denote by $\TT(X)$ the set of all full-tree-patterns of $X$ (that is, $\TT(X)= \bigcup_{T \subset \Sigma^*} X_T$, where the union ranges over all finite full subtrees $T$ of $\Sigma^*$). The \emph{height of a full-tree-pattern $p \in A^T$} as the height of its support, i.e., of the (finite full) subtree $T$.
\end{definition}

In this setting, we have the following characterization of acceptance which immediately
results from Definition~\ref{d:acceptanceSUBTREE}.

\begin{proposition}\label{p:acceptanceSUBTREE2}
Let $\A =(S,\Sigma,A,\T)$ be an unrestricted Rabin automaton. Let $T \subset \Sigma^*$ be a finite full subtree. A full-tree-pattern $p \in A^T$ is accepted by $\A$ if and only if there exists a map $\alpha \colon  T^+ \to  S$ such that
\begin{equation}\label{eq:acceptanceSUBTREE2}
(\alpha(w);p(w);(\alpha(w\sigma))_{\sigma \in \Sigma})\in \T
\end{equation}
for each $w \in T$.
\end{proposition}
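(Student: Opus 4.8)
The plan is to unwind Definition~\ref{d:acceptanceSUBTREE} in the special case at hand, where the subtree $T$ is \emph{full}. The key elementary observation is that for a full subtree $T$ and a vertex $w\in T$, the set $\Sigma_T(w)$ is either empty (exactly when $w$ is a leaf of $T$) or all of $\Sigma$ (when $w$ is an internal vertex). Hence, at an internal vertex the sub-bundle condition \eqref{eq:acceptanceSUBTREE} is literally the requirement $(\alpha(w);p(w);(\alpha(w\sigma))_{\sigma\in\Sigma})\in\T$, whereas at a leaf $w$ it only demands, as noted after Definition~\ref{d:acceptanceSUBTREE}, the existence of some $t\in\T$ with source $\alpha(w)$ and label $p(w)$. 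The whole content of the proposition is that these two requirements can be merged into the single uniform condition \eqref{eq:acceptanceSUBTREE2} once one is allowed to define the state-labeling map on $T^+$ instead of on $T$.

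For the forward implication I would start from an $\alpha_0\colon T\to S$ witnessing acceptance of $p$ in the sense of Definition~\ref{d:acceptanceSUBTREE}. Keep $\alpha_0$ on $T$; for each leaf $w$ of $T$ choose, using the leaf case of \eqref{eq:acceptanceSUBTREE}, a bundle $t_w=(\alpha_0(w);p(w);(s^w_\sigma)_{\sigma\in\Sigma})\in\T$, and set $\alpha(w\sigma):=s^w_\sigma$ for $\sigma\in\Sigma$. Since $T$ is full, every vertex of $T^+\setminus T$ is of the form $w\sigma$ for a \emph{unique} leaf $w$ of $T$, so this extends $\alpha_0$ to a well-defined map $\alpha\colon T^+\to S$; and \eqref{eq:acceptanceSUBTREE2} then holds at every $w\in T$ (at internal $w$ because $\alpha$ agrees with $\alpha_0$ on $w$ and on all its children, which all lie in $T$; at a leaf $w$ by the choice of $t_w$).

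For the converse, given $\alpha\colon T^+\to S$ satisfying \eqref{eq:acceptanceSUBTREE2}, I would simply put $\alpha_0:=\alpha|_T$ and verify \eqref{eq:acceptanceSUBTREE} vertex by vertex: at an internal vertex $w$ one has $\Sigma_T(w)=\Sigma$, so the tuple in \eqref{eq:acceptanceSUBTREE} is exactly the bundle in \eqref{eq:acceptanceSUBTREE2}, which is a sub-bundle of itself and lies in $\T$; at a leaf $w$ one has $\Sigma_T(w)=\varnothing$, and the required tuple is the empty-support sub-bundle of the bundle $(\alpha(w);p(w);(\alpha(w\sigma))_{\sigma\in\Sigma})\in\T$. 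Thus $p$ is accepted by $\A$ via $\alpha_0$.

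I do not anticipate any real obstacle: the statement is an immediate reformulation of Definition~\ref{d:acceptanceSUBTREE}. The only point that needs a sentence of care is the well-definedness of the extension $\alpha$ on $T^+$ in the forward direction, and this rests entirely on the trivial fact that fullness of $T$ forces each newly added vertex $w\sigma\in T^+$ to hang off a leaf of $T$.
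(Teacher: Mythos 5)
Your proof is correct and is exactly the unwinding that the paper has in mind: the paper gives no argument at all, stating only that the proposition ``immediately results from Definition~\ref{d:acceptanceSUBTREE}''. Your careful treatment of the two cases (internal vertices, where $\Sigma_T(w)=\Sigma$ by fullness, and leaves, where the extension of $\alpha$ to $T^+$ is obtained by picking a bundle at each leaf and is well defined because each vertex of $T^+\setminus T$ hangs off a unique leaf) supplies precisely the missing details.
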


By abuse of language, if \eqref{eq:acceptanceSUBTREE2} holds and there is no ambiguity, we say that the full-tree-pattern \emph{$p$ is accepted by $\A$ via $\alpha$}.
Obviously, Proposition~\ref{p:acceptanceSUBTREE2} applies whenever $T = \Delta_n$ for some $n \geq 1$ (recall that in this case $T^+ = \Delta_{n+1}$).

The following result follows from Corollary~\ref{c:subtree}.
\begin{corollary}\label{c:full-subtree}
Let $\A =(S,\Sigma,A,\T)$ be an essential unrestricted Rabin automaton. Let $p\in \TT(A^{\Sigma^*})$ be a full-tree-pattern. Then
$$p \textup{ \it is accepted by } \A \Longleftrightarrow p\in \TT(\Sh_\A).$$
\end{corollary}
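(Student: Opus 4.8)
The plan is to deduce this directly from Corollary~\ref{c:subtree}, since a full-tree-pattern is just a special case of a map defined on a subtree of $\Sigma^*$.

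First I would observe that if $p \in \TT(A^{\Sigma^*})$ is a full-tree-pattern, then by definition its support is a \emph{finite full subtree} $T \subset \Sigma^*$; in particular $T$ is a subtree of $\Sigma^*$ in the sense of Section~\ref{s:Full-tree-patterns} (a connected subgraph containing the root $\varepsilon$), so $p \in A^T$ is a map defined on a subtree. Hence Corollary~\ref{c:subtree} applies verbatim: since $\A$ is essential, $p$ is accepted by $\A$ if and only if $p \in X_T$, where $X = \Sh_\A$.

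It remains to reconcile the two acceptance notions. The acceptance of a map on a subtree (Definition~\ref{d:acceptanceSUBTREE}) is phrased via sub-bundles and a map $\alpha \colon T \to S$, whereas the acceptance of a full-tree-pattern (Proposition~\ref{p:acceptanceSUBTREE2}) is phrased via a map $\alpha \colon T^+ \to S$ satisfying the genuine bundle condition~\eqref{eq:acceptanceSUBTREE2} at every $w \in T$. For a \emph{full} subtree $T$, these two notions coincide: given $\alpha \colon T^+ \to S$ as in Proposition~\ref{p:acceptanceSUBTREE2}, its restriction to $T$ witnesses acceptance in the sense of Definition~\ref{d:acceptanceSUBTREE} (each full bundle is in particular a sub-bundle over $\Sigma_T(w) \subseteq \Sigma$); conversely, given $\alpha \colon T \to S$ witnessing Definition~\ref{d:acceptanceSUBTREE}, one extends it to $T^+$ by choosing, for each leaf $w \in T$, a bundle $t \in \T$ of which $(\alpha(w); f(w))$ is a sub-bundle and setting $\alpha(w\sigma) = \mathbf{t}_\sigma(t)$ for $\sigma \in \Sigma \setminus \Sigma_T(w)$—this is exactly the remark already made right after Proposition~\ref{p:acceptanceSUBTREE2} and is the only slightly fussy point. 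With the two acceptance notions identified, Corollary~\ref{c:subtree} gives $p \text{ accepted by } \A \Longleftrightarrow p \in X_T$, and since $p \in X_T$ for the (finite full) subtree $T = \mathrm{supp}(p)$ is precisely the statement $p \in \TT(\Sh_\A)$ by the definition of $\TT(\cdot)$, we are done.

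The main (and essentially only) obstacle is this bookkeeping step matching the sub-bundle formulation of Definition~\ref{d:acceptanceSUBTREE} with the full-bundle formulation of Proposition~\ref{p:acceptanceSUBTREE2}; once that is noted, the corollary is an immediate specialization of Corollary~\ref{c:subtree} to full-tree-patterns. I expect the whole argument to be a few lines.
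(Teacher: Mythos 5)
Your proposal is correct and follows exactly the paper's route: the paper states Corollary~\ref{c:full-subtree} as an immediate consequence of Corollary~\ref{c:subtree}, with the reconciliation of the two acceptance notions for full subtrees already supplied by Proposition~\ref{p:acceptanceSUBTREE2}. Your "bookkeeping step" is precisely the content of that proposition, so nothing further is needed.
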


To conclude this section, we have the following (see Remark~\ref{r:B(X) determines X}).
\begin{remark}\label{r:T(X) determines X}
Let $X,Y \subset A^{\Sigma^*}$ be two tree shifts. Then
$$X = Y \Longleftrightarrow \TT(X) = \TT(Y).$$
\end{remark}

\section{Irreducibility}
\begin{definition}[Bundle path]\label{d:bundle path}
Let $\A =(S,\Sigma,A,\T)$ be an unrestricted Rabin automaton. A \emph{bundle path} in $\A$ is a sequence $\pi = (t_0,t_1,\dots, t_{n-1})$ of transition bundles such that there exist $\sigma_0, \sigma_1,\dots, \sigma_{n-1} \in \Sigma$ satisfying
$$
{\bf t}_{\sigma_i}(t_{i}) = {\bf i}(t_{i+1})
$$for each $i = 0, \dots, n-2$.
We then say that $\pi$ \emph{begins} at ${\bf i}(t_{0}) \in S$ and
\emph{ends} at ${\bf t}_{\sigma_{n-1}}(t_{n-1})\in S$.
The \emph{label} of the bundle path $\pi$ is the word
$$\lambda(\pi) = \lambda(t_0)\lambda(t_1) \cdots \lambda(t_{n-1}) \in A^*.$$
The integer $n\geq0$ is the \emph{length} of $\pi$.
If $n=0$ the bundle path reduces to a state $s\in S$. In this case it is
called the \emph{empty path} at $s$, and its label is the empty word $\varepsilon$.
\end{definition}
Note that a bundle path of length $n\geq0$ beginning at $s$ and ending at $s'$ determines a sequence of $n+1$ states
$$s,\  s_{\sigma_0},\ (s_{\sigma_0})_{\sigma_{1}},\  \dots,\  (((s_{\sigma_0})_{\sigma_{1}})_{\dots})_{\sigma_{n-1}}=s'$$
labeled by a suitable word $a_0 a_1 \cdots a_{n-1}$.
In the case $n=0$ we have that $s=s'$.

\begin{definition}[Strong connectivity]
An unrestricted Rabin automaton $\A =(S,\Sigma,A,\T)$ is said to be \emph{strongly connected} if for each pair of states
$s, s' \in S$, there is a bundle path beginning at $s$ and ending at $s'$.
\end{definition}

Note that a strongly connected Rabin automaton is also essential.

\begin{definition}[Irreducible tree shift]
A tree shift $X\subset A^{\Sigma^*}$ is \emph{irreducible} if for each pair of blocks $p \in X_n$ and $q\in X_m$, $n,m \geq 0$, there exists a configuration $f \in X$ satisfying the following properties:
$f\vert_{\Delta_n} = p$ and for each $w\in \Sigma^n$ there exists $v = v(w) \in \Sigma^*$ such that $f\vert_{wv\Delta_m} = wv \cdot q$ (see Definition~\ref{d:def w cdot p}). This situation is illustrated in Figure~\ref{FIGirreducibility}.
\end{definition}

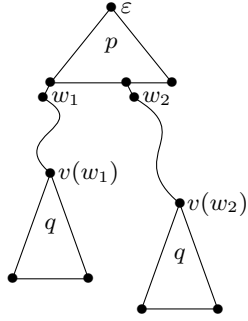
\begin{figure}[h!]
\scriptsize
\centering
\begin{picture}(0,40)(0,-20)
\gasset{Nw=0,Nh=0,Nframe=n,ATnb=0,AHnb=0}
\node(E)(0,20){$\bullet$}
\node(Enome)(2,20){$\varepsilon$}
\node(P)(0,15){$p$}
\node(0)(-8,10){$\bullet$}
\node(i)(2,10){$\bullet$}
\node(1)(8,10){$\bullet$}
\drawedge(E,0){}
\drawedge(E,1){}
\drawedge(0,1){}

\node(W)(-9,8){$\bullet$}
\node(Wnome)(-6,8){$w_1$}
\drawedge(0,W){}

\node(W')(3,8){$\bullet$}
\node(W'nome)(6,8){$w_2$}
\drawedge(i,W'){}

\node(E')(-8,-2){$\bullet$}
\node(E'nome)(-3,-2){$v(w_1)$}
\node(Q)(-8,-9){$q$}
\node(0')(-13,-16){$\bullet$}
\node(1')(-3,-16){$\bullet$}
\drawedge(E',0'){}
\drawedge(E',1'){}
\drawedge(0',1'){}

\drawbpedge(W,-30,8,E',140,8){}

\node(E'')(9,-6){$\bullet$}
\node(E''nome)(14,-6){$v(w_2)$}
\node(Q)(9,-13){$q$}
\node(0'')(4,-20){$\bullet$}
\node(1'')(14,-20){$\bullet$}
\drawedge(E'',0''){}
\drawedge(E'',1''){}
\drawedge(0'',1''){}

\drawbpedge(W',-30,10,E'',140,10){}
\end{picture}
\caption{Irreducibility of a tree shift.}
\label{FIGirreducibility}
\end{figure}

\begin{theorem}
\label{t:connected->irreducible}
Let $\A$ be a strongly connected unrestricted Rabin automaton. Then $\Sh_\A$ is irreducible.
\end{theorem}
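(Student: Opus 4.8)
The plan is to reduce everything to a single explicit construction. Writing $X = \Sh_\A$, we are given blocks $p \in X_n$ and $q \in X_m$ (assume $n,m \ge 1$), and we must produce one configuration $f \in X$ together with a homomorphism $\gamma \colon \Sigma^* \to S$ accepting it, such that $f\vert_{\Delta_n} = p$ and, under each $w \in \Sigma^n$, a translated copy of $q$ reappears. First I would observe that $\A$ is essential (being strongly connected) and that $p \in X_n = (\Sh_\A)_{\Delta_n}$ means there is $g \in X$ accepted via some $\gamma_g \colon \Sigma^* \to S$ with $g\vert_{\Delta_n} = p$, and likewise $h \in X$ accepted via $\gamma_h$ with $h\vert_{\Delta_m} = q$. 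The construction of $f$ then has three ingredients: copy $g$ on the top $\Delta_{n+1}$; below each $w \in \Sigma^n$, run the automaton from the state $\gamma_g(w)$ to the state $\gamma_h(\varepsilon)$ along a bundle path supplied by strong connectivity, and then paste a full copy of $h$ at the end of that path; and, at every vertex where a state has been fixed but the subtree below it has not, complete the configuration by iterating a fixed choice map $\xi \colon S \to \T$ sending each state to a transition bundle issuing from it (exactly as in the proof of Theorem~\ref{t:regular}).

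In order, the steps are: (i) set $\gamma\vert_{\Delta_{n+1}} = \gamma_g\vert_{\Delta_{n+1}}$ and $f\vert_{\Delta_n} = p$, which makes the acceptance relation \eqref{eq:acceptance} hold at every $w \in \Delta_n$ since $g$ is accepted via $\gamma_g$; (ii) for each $w \in \Sigma^n$ pick, by strong connectivity, a bundle path $\pi_w = (t_0,\dots,t_{\ell-1})$ with witnesses $\sigma_0,\dots,\sigma_{\ell-1}$ beginning at $\gamma_g(w)$ and ending at $\gamma_h(\varepsilon)$, set $v(w) := \sigma_0\sigma_1\cdots\sigma_{\ell-1}$, and along the spine $w, w\sigma_0, w\sigma_0\sigma_1,\dots, wv(w)$ put, at the $i$-th spine vertex (for $0 \le i < \ell$), the value of $f$ equal to $\lambda(t_i)$ and, for each $\sigma \in \Sigma$, the value of $\gamma$ at the $\sigma$-child equal to the $\sigma$-terminal state of $t_i$ — the defining relation of a bundle path makes these assignments mutually consistent and makes $t_i$ itself a witness for \eqref{eq:acceptance} at that vertex, and it forces the state attached to $wv(w)$ to be $\gamma_h(\varepsilon)$; (iii) below $wv(w)$ set $\gamma(wv(w)u) = \gamma_h(u)$ and $f(wv(w)u) = h(u)$ for all $u \in \Sigma^*$, which is consistent at $u = \varepsilon$ and satisfies \eqref{eq:acceptance} throughout because $h$ is accepted via $\gamma_h$; (iv) extend $\gamma$ and $f$ over the remaining subtrees — each rooted at an off-spine child of a spine vertex, where $\gamma$ is already defined — by iterating $\xi$. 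This defines $\gamma$ and $f$ on all of $\Sigma^*$, $f$ is accepted by $\A$, hence $f \in X$; and by construction $f\vert_{\Delta_n} = p$ while $f\vert_{wv(w)\Delta_m} = wv(w) \cdot q$ for every $w \in \Sigma^n$, which is exactly the definition of irreducibility.

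The only genuine use of strong connectivity (as opposed to mere essentiality) is in step (ii), to bridge between $\gamma_g(w)$ and $\gamma_h(\varepsilon)$. The part that needs care is the gluing in steps (ii)--(iv): one must check that the off-spine subtrees produced in (iv) are pairwise disjoint and that, together with the spines and the pasted subtrees, they account for every vertex of $\Sigma^*$ exactly once, and that at the three kinds of junction — the leaves $\Sigma^{n-1}$ of $\Delta_n$, the tops $\Sigma^n$ of the spines (where the state $\gamma_g(w)$ must match the start of $\pi_w$), and the roots $wv(w)$ of the pasted copies (where the end of $\pi_w$ must match $\gamma_h(\varepsilon)$) — the relevant bundle really is a witness for \eqref{eq:acceptance}. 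I expect this to be purely bookkeeping rather than a real obstacle; in particular the degenerate case of a length-zero path $\pi_w$ (when $\gamma_g(w)$ already equals $\gamma_h(\varepsilon)$) is subsumed in the same construction.
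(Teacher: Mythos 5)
Your construction is correct and follows essentially the same route as the paper: for each $w\in\Sigma^n$, a bundle path supplied by strong connectivity bridges the state reached at $w$ by an accepting map for $p$ to the root state of an accepting map for $q$, the spine is labeled by the path's labels, a copy of $q$ is pasted at its end, and the remainder is completed using essentiality. The only (cosmetic) difference is that the paper assembles a finite pattern on the subtree $T=\Delta_{n+1}\cup(\text{spines})\cup(\text{pasted copies})$ and invokes Theorem~\ref{t:subtree} to extend it to a configuration of $\Sh_\A$, whereas you build the full configuration directly by inlining that extension via the choice map $\xi$ from the proof of Theorem~\ref{t:regular}.
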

\proof
Let $\A =(S,\Sigma,A,\T)$. Let $p, q\in \B(\Sh_\A)$ be two blocks of $\Sh_\A$ with support $\Delta_n$ and $\Delta_m$, respectively. Let $w\in \Sigma^n$ be a word of length $n$. Suppose $p$ is accepted via $\alpha \colon  \Delta_{n+1} \to  S$ (in the sense of Proposition~\ref{p:acceptanceSUBTREE2}) and that $q$ is accepted via $\alpha' \colon  \Delta_m \to  S$ (in the sense of Definition~\ref{d:acceptanceSUBTREE}). Fix $s = \alpha(w)$ and $s' = \alpha'(\varepsilon)$. Since $\A$ is strongly connected, there is a bundle path $\pi$ of length $h\geq0$ beginning at $s$ and ending at $s'$ yielding a sequence of $h+1$ states $$s,\  s_{\sigma_0},\ (s_{\sigma_0})_{\sigma_{1}},\  \dots,\  (((s_{\sigma_0})_{\sigma_{1}})_{\dots})_{\sigma_{h-1}}=s'$$
labeled by $$\lambda(\pi) = a_0 a_1 \cdots  a_{h-1}.$$
Notice that $\pi = \pi(w)$ and, in particular, $\sigma_i = \sigma_i(w)$ for all $i=0, 1, \dots, h-1$.

Define $v = v(w) = \sigma_0\sigma_1\cdots\sigma_{h-1} \in \Sigma^{h}$ (this corresponds to $v = \varepsilon$ whenever $h=0$). The set $$T = \Delta_{n+1} \cup \left(\bigcup_{w\in \Sigma^n}\{w\sigma_0, w\sigma_0\sigma_1, \dots, w\sigma_0\sigma_1\dots\sigma_{h-2}\} \right) \cup \left(\bigcup_{w\in \Sigma^n}wv\Delta_m\right)
$$
is a subtree of $\Sigma^*$.

The map $\bar \alpha \colon  T \to  S$ defined by
$$\bar \alpha(u) =
\begin{cases}
\alpha(u) & \textup{if } u \in \Delta_{n+1}\\
(((s_{\sigma_0})_{\sigma_{1}})_{\dots})_{\sigma_i} &\textup{if } u=w\sigma_0\sigma_1\cdots\sigma_{i} \textup{ for some } w\in\Sigma^n \textup{ and }i\in\{0, \dots, h-2\}\\
\alpha'(v') & \textup{if } u \in wv\Delta_m \textup{ for some } w\in\Sigma^n \textup{ and } u = wvv',
\end{cases}$$
accepts the pattern $\bar p \in A^T$ defined as follows
$$\bar p(u) =
\begin{cases}
p(u) & \textup{if } u \in \Delta_n\\
a_0 & \textup{if } u \in \Sigma^n\\
a_{i+1} &\textup{if } u=w\sigma_0\sigma_1\cdots\sigma_{i} \textup{ for some } w\in\Sigma^n \textup{ and }i\in\{0, \dots, h-2\}\\
q(v') & \textup{if } u \in wv\Delta_m \textup{ for some } w\in\Sigma^n \textup{ and } u = wvv'.
\end{cases}$$
A configuration  $f \in \Sh_\A$ extending $\bar p$ exists by Theorem~\ref{t:subtree}. Thus the shift $\Sh_\A$ is irreducible.
\endproof
Also the converse of the above theorem holds: an irreducible sofic tree shift admits a suitable strongly connected presentation. This will be proved in the next section, by using the notion of co-determinism.

\section{Deterministic and co-deterministic presentations}
\begin{definition}[Determinism and co-determinism]\label{d:CODET}
An unrestricted Rabin automaton $\A =(S,\Sigma,A,\T)$ is said to be \emph{deterministic} if, for each state $s\in S$, the transition bundles starting at $s$ carry different labels. Analogously, $\A$ is said to be \emph{co-deterministic} if, for each sequence ${\bf s} \in S^\Sigma$, the transition bundles terminating at ${\bf s}$ (if there are any) carry different labels.
\end{definition}

\begin{example}
Let $X \subset A^{\Sigma^*}$ be a tree shift of finite type. Then the unrestricted Rabin automaton accepting $X$ illustrated in Remark~\ref{r:transducer} is co-deterministic. If $n$ is the memory of $X$, this automaton is also \emph{$n$-local}, in the sense that if a block of size $n$ is accepted by two maps according to Definition~\ref{d:acceptanceSUBTREE}, then they coincide on $\varepsilon$.
In fact, Proposition~\ref{p:A(tau,M,X) accepts tau(X)} proves that a tree shift of finite type is always accepted by a co-deterministic and local unrestricted Rabin automaton. This implication and its converse have also been proved by Aubrun~\cite{Aubrun11}.
\end{example}

We now prove that for each unrestricted Rabin automaton there exists a co-deterministic unrestricted Rabin automaton accepting the same shift. To do this, we generalize the well-known ``subset construction'' to our class of automata.

\begin{theorem}[Subset construction]\label{t:subsetCONST}
Let $\A$ be an unrestricted Rabin automaton. There exists a co-deterministic and essential unrestricted Rabin automaton $\A_{\rm cod}$ such that $\Sh_\A = \Sh_{\A_{\rm cod}}$.
\end{theorem}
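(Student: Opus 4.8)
The plan is to carry out the classical ``subset construction'' in our setting, bearing in mind that the correct notion of determinization here is co-determinism (bottom-up determinism). Given $\A = (S,\Sigma,A,\T)$, the idea is to take as states of $\A_{\rm cod}$ the \emph{nonempty} subsets of $S$, where the subset attached to a vertex $w$ is meant to record exactly those states of $S$ that occur as the image of $w$ in some accepting run of $\A$ on the subtree rooted at $w$. For $a \in A$ and a family $(P_\sigma)_{\sigma\in\Sigma}$ of subsets of $S$, put
$$\Phi_a((P_\sigma)_{\sigma\in\Sigma}) = \{s\in S : (s;a;(s_\sigma)_{\sigma\in\Sigma}) \in \T \textup{ for some } (s_\sigma)_{\sigma\in\Sigma} \textup{ with } s_\sigma\in P_\sigma \textup{ for all } \sigma\},$$
and let $\A_{\rm cod} = (2^S\setminus\{\varnothing\},\Sigma,A,\T_{\rm cod})$, where $\T_{\rm cod}$ is the set of all $(P;a;(P_\sigma)_{\sigma\in\Sigma})$ such that every $P_\sigma$ is nonempty and $P = \Phi_a((P_\sigma)_{\sigma\in\Sigma})$ is nonempty. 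Since $S$ is finite, $\A_{\rm cod}$ is an unrestricted Rabin automaton (between a given source and a given terminal sequence there are at most $\vert A\vert$ bundles). It is co-deterministic: once the terminal sequence $(P_\sigma)_{\sigma\in\Sigma}$ and the label $a$ are fixed, the source $P$ is \emph{forced} to equal $\Phi_a((P_\sigma)_{\sigma\in\Sigma})$, so any two bundles of $\T_{\rm cod}$ with the same terminal sequence and the same label coincide. Discarding $\varnothing$ is crucial: were it kept, the constant map $w\mapsto\varnothing$ would be an accepting run for every configuration.

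The core of the proof is the equality $\Sh_\A = \Sh_{\A_{\rm cod}}$. I would first record, for a configuration $f$, the ``possible source states''
$$P_f(w) := \{s\in S : f^w \textup{ is accepted by } \A \textup{ via some } \alpha \textup{ with } \alpha(\varepsilon)=s\}, \qquad w\in\Sigma^*,$$
and observe that $P_f(w) = \Phi_{f(w)}((P_f(w\sigma))_{\sigma\in\Sigma})$ for every $w$; this is immediate by splitting an accepting run at the children of $\varepsilon$ and, conversely, by gluing accepting runs of the $f^{w\sigma} = (f^w)^\sigma$ along a bundle of $\T$. For $\Sh_\A\subseteq\Sh_{\A_{\rm cod}}$: if $f$ is accepted by $\A$ via $\alpha$, then $\alpha^w$ witnesses $\alpha(w)\in P_f(w)$, so $P_f(w)\neq\varnothing$ for all $w$, and hence $w\mapsto P_f(w)$ is a homomorphism $\G_f\to\A_{\rm cod}$; thus $f\in\Sh_{\A_{\rm cod}}$. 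For $\Sh_{\A_{\rm cod}}\subseteq\Sh_\A$: if $f$ is accepted by $\A_{\rm cod}$ via $\beta$, I would construct an accepting run $\alpha$ of $\A$ on $f$ by recursion on the levels $\Delta_n$, keeping the invariant that $\alpha(w)\in\beta(w)$ for $w$ on the current frontier; the recursive step is just the definition of $\Phi$ applied to $\beta(w) = \Phi_{f(w)}((\beta(w\sigma))_{\sigma\in\Sigma})$, which provides a bundle $(\alpha(w);f(w);(s_\sigma)_{\sigma\in\Sigma})\in\T$ with $s_\sigma\in\beta(w\sigma)$, and we set $\alpha(w\sigma)=s_\sigma$. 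Because $\Sigma^*$ is a \emph{rooted} tree (a run never needs to be prolonged upward), this last direction is a direct top-down construction and needs no compactness argument --- this is precisely where a difficulty might have been expected, but rootedness disposes of it.

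Finally, $\A_{\rm cod}$ itself need not be essential (e.g.\ a singleton $\{s\}$ with $s$ a source of no bundle of $\A$ is a dead state), so I would conclude by replacing $\A_{\rm cod}$ with the essential automaton it determines as in Remark~\ref{r:essential}; this leaves $\Sh_{\A_{\rm cod}}$ unchanged and, as it only removes states and transition bundles, preserves co-determinism. The main obstacle is thus not any single step but the simultaneous fitting together of the three demands on $\T_{\rm cod}$ --- co-determinism, the absence of spurious accepting runs (the $\varnothing$ issue), and both inclusions; once the ``possible source states'' $P_f(w)$ and their recursion are isolated, the rest is routine verification.
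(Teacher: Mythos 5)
Your proposal is correct, and it builds exactly the same automaton as the paper: states are the nonempty subsets of $S$, and the source of a bundle is forced to be the set of all sources of $\T$-bundles with the given label terminating inside the given subsets (your $\Phi_a$), which yields co-determinism; the paper likewise discards $\varnothing$ and then invokes Remark~\ref{r:essential} for essentiality, a step you make explicit. Where you genuinely diverge is in the verification of $\Sh_\A = \Sh_{\A_{\rm cod}}$. The paper reduces to blocks via Remark~\ref{r:B(X) determines X} and, for the inclusion $\B(\Sh_\A)\subset\B(\Sh_{\A_{\rm cod}})$, runs a bottom-up induction seeded with the \emph{singletons} $\{\alpha(w)\}$ at the leaves of $\Delta_{n+1}$; since $\A_{\rm cod}$ need not be essential, a finite run of $\A_{\rm cod}$ on a block does not by itself put the block in $\B(\Sh_{\A_{\rm cod}})$, and the paper has to append a (rather informal) extensibility argument to close this gap. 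Your use of the canonical ``maximal'' run $w\mapsto P_f(w)$, the set of all source states of infinite accepting runs of $\A$ on $f^w$, together with the identity $P_f(w)=\Phi_{f(w)}\bigl((P_f(w\sigma))_{\sigma\in\Sigma}\bigr)$, produces a genuine homomorphism $\G_f\to\A_{\rm cod}$ on all of $\Sigma^*$ in one stroke and so avoids the extensibility issue entirely; this is the cleaner argument. The reverse inclusion is the same top-down element-picking recursion in both proofs, and you are right that rootedness makes compactness unnecessary there. The one thing the paper's block-level formulation buys is that it transfers almost verbatim to full-tree-patterns, which is how Theorem~\ref{t:subsetCONST2} is later proved; if you wanted your argument to serve that purpose too, you would need to restate it at the level of patterns on finite full subtrees.
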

\proof
Let $\A = (S,\Sigma,A,\T)$ be an essential Rabin automaton (see Remark~\ref{r:essential}). We define $\A_{\rm cod} = ({\mathcal P}^*(S),\Sigma,A,\T_{\rm cod})$ as follows. The state set of $\A_{\rm cod}$ is the set ${\mathcal P}^*(S)$ of all nonempty subsets of the state set $S$ of $\A$. If $(\mathcal{S}_\sigma)_{\sigma \in \Sigma} \in ({\mathcal P}^*(S))^\Sigma$ and $a \in A$, let $\mathcal{S}$ denote the source state set of all the transition bundles in $\T$ labeled by $a$ and terminating at some $(s_\sigma)_{\sigma\in\Sigma} \in S^\Sigma$ with $s_\sigma \in \mathcal{S}_\sigma$ for each $\sigma \in \Sigma$.
If $\mathcal{S}$ is nonempty, we impose that the transition bundle $(\mathcal{S};a; (\mathcal{S}_\sigma)_{\sigma \in \Sigma})$ belongs to $\T_{\rm cod}$.

It is clear by construction that the automaton $\A_{\rm cod}$ is co-deterministic.
We have to prove that $\Sh_\A = \Sh_{\A_{\rm cod}}$. By Remark~\ref{r:B(X) determines X}, it suffices to show that $\B(\Sh_\A) = \B(\Sh_{\A_{\rm cod}})$.

Suppose that $p \in \B(\Sh_\A)$ is a block, say $p \colon \Delta_n \to A$. Then there exists a map $\alpha \colon  \Delta_{n+1} \to  S$ such that
$(\alpha(w);p(w);(\alpha(w\sigma))_{\sigma \in \Sigma}) \in \T$ for each $w \in \Delta_n$.
We inductively define $\alpha_{\rm cod} \colon \Delta_{n+1} \to  {\mathcal P}^*(S)$ on $\Sigma^{n-i}$ for $i = 0, \dots, n$. Moreover, we want that $\alpha(w) \in \alpha_{\rm cod}(w)$ for each $w \in \Delta_{n+1}$. For $i = 0$ and $w \in \Sigma^n$ we
define $\alpha_{\rm cod}(w) = \{\alpha(w)\}$. If $\alpha_{\rm cod}$ has been defined on $\Sigma^{n-i}$ and $n-i-1 \geq 0$, consider $w \in \Sigma^{n-i-1}$. We have that $(\alpha(w);p(w);(\alpha(w\sigma))_{\sigma \in \Sigma}) \in \T$. By our induction hypothesis, we have that $\alpha(w\sigma) \in \alpha_{\rm cod}(w\sigma)$ for each $\sigma\in \Sigma$. Hence there is $\mathcal{S}\in {\mathcal P}^*(S)$ such that $\alpha(w) \in \mathcal{S}$ and $(\mathcal{S};p(w);(\alpha_{\rm cod}(w\sigma))_{\sigma \in \Sigma})\in \T_{\rm cod}$. Define $\alpha_{\rm cod}(w) = \mathcal{S}$. It follows that the map $\alpha_{\rm cod}$ accepts~$p$.
Notice that $\A_{\rm cod}$ need not to be essential and then $p$ need not to be extensible to a configuration in $\Sh_{\A_{\rm cod}}$. In order to conclude that $p \in \B(\Sh_{\A_{\rm cod}})$, we need to observe that the map $\alpha$ could have been extended to a map $\alpha' \colon \Delta_{m+1} \to A$ with $m \geq n$ in a way such that each state in $\alpha'(\Sigma^m)$ is contained in a suitable $\mathcal{S} \subset S$ which is the source state (i.e., the image of $\varepsilon$) of an homomorphism determining a configuration accepted by $\A_{\rm cod}$. With this, we have that $\B(\Sh_\A) \subset \B(\Sh_{\A_{\rm cod}})$.

Conversely, suppose that $p \in \B(\Sh_{\A_{\rm cod}})$, with $p \colon  \Delta_n \to A$. Then there exists a map $\alpha_{\rm cod} \colon  \Delta_{n+1} \to {\mathcal P}^*(S)$ such that
$(\alpha_{\rm cod}(w);p(w);(\alpha_{\rm cod}(w\sigma))_{\sigma \in \Sigma}) \in \T_{\rm cod}$ for each $w \in \Delta_n$.
We inductively define $\alpha \colon  \Delta_{n+1} \to  S$ on $\Sigma^i$, for $i = 0, \dots, n$. Moreover, we want that $\alpha(w) \in \alpha_{\rm cod}(w)$ for each $w \in \Delta_{n+1}$. For $i = 0$, we define $\alpha(\varepsilon)$ as an arbitrary element of $\alpha_{\rm cod}(\varepsilon)$.
If $\alpha$ has been defined on $\Sigma^i$ and $i+1 \leq n$, consider $w \in \Sigma^i$.
Since $\alpha(w) \in \alpha_{\rm cod}(w)$  and $(\alpha_{\rm cod}(w);p(w);(\alpha_{\rm cod}(w\sigma))_{\sigma \in \Sigma}) \in \T_{\rm cod}$, there exists a sequence $(s_\sigma)_{\sigma\in\Sigma} \in S^\Sigma$ such that $(\alpha(w);p(w);(s_\sigma)_{\sigma\in\Sigma})\in \T$ with $s_\sigma \in \alpha_{\rm cod}(w\sigma)$ for each $\sigma \in \Sigma$. We define $\alpha(w\sigma) = s_\sigma$.
It follows that the map $\alpha$ accepts $p$, so that $p \in \B(\Sh_{\A})$.  This shows that $\B(\Sh_{\A_{\rm cod}}) \subset \B(\Sh_\A)$.
\endproof
The statement of the above theorem fails to hold, in general, for deterministic unrestricted Rabin automata, as shown in the following counterexample.
\begin{example}[A shift of finite type not admitting a deterministic presentation]\label{e:countnondet}
Consider the tree shift $X$ presented in Example~\ref{e:monochromaticchildren}. A nondeterministic presentation of $X$ is given in Example~\ref{e:AUTmonochromaticchildren}.
Suppose that $X$ admits a deterministic presentation $\A = (S,\Sigma,A,\T)$. First observe that, in this case, each \emph{accessible} state (i.e., reachable by a transition bundle), admits exactly one transition bundle starting at it. Thus for every accessible state $s \in S$ there exists exactly one configuration $f_s \in X$ accepted by a homomorphism $\alpha_s \colon \Sigma^* \to S$ \emph{beginning} at $s$, that is, such that $\alpha_s(\varepsilon) = s$.
This implies that any state determines at most $|A|$ configurations (indeed, for a state $s$ that is not accessible, there are at most $|A|$ bundles that begin at $s$ and all of these bundles terminate in accessible states). Therefore $\A$ accepts only finitely many different configurations, which contradicts the fact that $X$ is infinite.
\end{example}

\subsection{Minimal presentations}
Let $\Sigma$ be a nonempty finite set and let $A$ be a finite alphabet.

\begin{definition}[Minimal presentation]
Let $X \subset A^{\Sigma^*}$ be a sofic tree shift.
A \emph{minimal co-deterministic presentation} of $X$ is a co-deterministic presentation of $X$ having fewest states among all co-deterministic presentations of $X$.
\end{definition}

\begin{theorem}
\label{t:irreducible->connected}
Let $X \subset A^{\Sigma^*}$ be an irreducible sofic tree shift and let $\A$ be a minimal co-deterministic presentation of $X$. Then $\A$ is strongly connected.
\end{theorem}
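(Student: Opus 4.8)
The plan is to argue by contradiction, showing that a non-strongly-connected minimal co-deterministic presentation could be replaced by one with strictly fewer states. By Remark~\ref{r:essential} and minimality I may assume $\A=(S,\Sigma,A,\T)$ is essential, since passing to the associated essential automaton preserves co-determinism and does not increase the number of states. I would order the states by $s\rightsquigarrow s'$ (there is a bundle path from $s$ to $s'$) and pass to the strongly connected components: $\A$ is strongly connected precisely when there is only one component, so I assume there are at least two. The key structural remark is that if $W$ is a union of components that is closed downward for $\rightsquigarrow$ — in particular if $W$ is a single sink component — then every bundle with source in $W$ has all its $\sigma$-terminal states in $W$, so the induced automaton $\A_W$ (states $W$, together with all bundles of $\T$ whose source lies in $W$) is a genuine, essential, co-deterministic automaton, and $\Sh_{\A_W}\subseteq X$.

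The heart of the proof is to show that $X=\bigcup_C Y_C$, where $C$ ranges over the sink components and $Y_C:=\Sh_{\A_C}$. Since a finite union of tree shifts is a tree shift and a tree shift is determined by its set of blocks (Remark~\ref{r:B(X) determines X}), it suffices to prove $\B(X)\subseteq\bigcup_C\B(Y_C)$. Given $p\in X_n$, I would first use irreducibility of $X$ — iterating its defining property with $p$ playing the role of both blocks, together with a compactness argument and the density of regular configurations (Theorem~\ref{t:regular}) — to obtain a \emph{regular} configuration $g\in X$ with $g\vert_{\Delta_n}=p$ in which a copy of $p$ recurs below every vertex. Fixing an accepting run $\beta$ of $g$, along each branch the components visited by $\beta$ are nondecreasing in the component DAG, and one then argues that the subtree of vertices at which $\beta$ has not yet entered a sink component is finite; at a deep enough vertex $w$ one has $\beta(w)$ in a sink component $C$, and choosing $v$ with $g^{wv}\vert_{\Delta_n}=p$ gives $p\in\B(Y_C)$ because $g^{wv}=(g^w)^v\in Y_C$. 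I expect this to be the main obstacle: in a co-deterministic automaton acceptance is read top-down and is therefore nondeterministic, so controlling accepting runs and ruling out a branch being ``trapped'' forever in a non-sink component is exactly the place where irreducibility, essentiality and the regularity of $g$ must be combined, and this is the technical core.

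Granting $X=\bigcup_C Y_C$, the disjoint union $\A'=\bigsqcup_C\A_C$ over the sink components is an essential co-deterministic presentation of $X$ (every bundle terminating at a given $\Sigma$-tuple of states comes from a single $\A_C$, so co-determinism survives), with $\sum_C|C|$ states. If some state of $S$ lies outside every sink component, then $\sum_C|C|<|S|$, so $\A'$ contradicts minimality of $\A$ and we are done. Otherwise $S$ is the disjoint union of its sink components $C_1,\dots,C_r$; if $r=1$ then $\A=\A_{C_1}$ is a single strongly connected component, contrary to our assumption. If $r\ge2$, minimality forces that deleting any $C_i$ loses configurations, i.e.\ $Y_{C_i}\not\subseteq\bigcup_{j\ne i}Y_{C_j}$, so (using that for tree shifts $Y\subseteq Y'$ iff $\B(Y)\subseteq\B(Y')$, which follows from Remark~\ref{r:B(X) determines X}) each $Y_{C_i}$ carries a block $p_i$ appearing in no other $Y_{C_j}$; applying irreducibility of $X$ to $p_1$ and $p_2$ produces a configuration of $X=\bigcup_\ell Y_{C_\ell}$ in which both appear, which would then have to lie in $Y_{C_1}$ and in $Y_{C_2}$ simultaneously — a contradiction. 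In all cases the assumption fails, so $\A$ is strongly connected.
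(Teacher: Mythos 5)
Your outer strategy is genuinely different from the paper's, and its final steps are sound: the component decomposition, the fact that a sink component $C$ yields a well-defined essential co-deterministic sub-automaton $\A_C$ with $\Sh_{\A_C}\subseteq X$, the deletion arguments against minimality, and the irreducibility contradiction in the case where $S$ splits into $r\ge 2$ sink components (where $X=\bigcup_i Y_{C_i}$ holds trivially, since every run starts and stays in one component). But the argument hinges entirely on the claim $X=\bigcup_{C\ \mathrm{sink}}Y_C$ in the case where some state lies outside every sink component, and that claim is exactly where you stop proving and start describing. The gap is real, for two concrete reasons. First, an accepting run $\beta$ can remain forever, on every branch, inside a non-sink strongly connected component $D$: for $D$ to be non-sink it suffices that \emph{some} bundle from $D$ has \emph{some} terminal outside $D$, and a run is free to use only those bundles whose terminals all stay in $D$ (such bundles may well support an entire run if $D$ contains an internal cycle). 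Neither essentiality, nor co-determinism, nor regularity of the configuration rules this out, so ``the subtree of vertices at which $\beta$ has not yet entered a sink component is finite'' is unjustified. Second, the auxiliary configuration in which a copy of $p$ recurs below \emph{every} vertex is not delivered by the paper's definition of irreducibility, which only plants one copy of $q$ below each vertex of $\Sigma^n$ and says nothing about the subtrees hanging off the intermediate vertices between $w$ and $wv(w)$; regularity of $g$ does not supply recurrence of $p$ either (a regular configuration can contain $p$ only at the root).

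The ingredient you are missing is the paper's key use of minimality: for each state $s$ there is a block $p_s\in\B(X)$ such that \emph{every} map accepting $p_s$ (in the sense of Proposition~\ref{p:acceptanceSUBTREE2}) has $s$ in its image --- otherwise one deletes $s$ and all bundles involving it and contradicts minimality. Given that lemma, irreducibility applied to $p_s$ and $p_{s'}$ produces a single configuration whose every accepting run must pass through $s$ at some $u\in\Delta_{n+1}$ and through $s'$ at some descendant of $u$, and the image of the connecting path is the required bundle path; no component analysis is needed. Your uses of minimality (dropping all non-sink states at once, or dropping a redundant sink component) are strictly weaker and do not force accepting runs into sink components, so as written the proposal does not close. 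If you did import the paper's lemma to force runs into sinks, you would essentially be reproducing the paper's proof inside your step 2.
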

\proof
Let $\A =(S,\Sigma,A,\T)$. We first prove that for each state $s \in S$, there is a block $p_s\in \B(X)$ such that every map $\alpha$ accepting $p_s$ (in the sense of Proposition~\ref{p:acceptanceSUBTREE2}) involves $s$, that is, $s \in \im(\alpha)$. Suppose that this is not the case for some $s\in S$ and consider the automaton
$\A' = (S\setminus\{s\},\Sigma, A, \T')$ where $\T'$ consists of the transition bundles $t \in \T$ not involving $s$ (that is, ${\bf i}(t) \neq s \neq {\bf t}_\sigma(t)$ for all $\sigma \in \Sigma$). We have that $\B(\Sh_\A) = \B(\Sh_{\A'})$ and therefore $\Sh_\A = \Sh_{\A'}$ by Remark~\ref{r:B(X) determines X}. Thus $\A'$ is a co-deterministic presentation of $X$ with fewer states than $\A$, contradicting the minimality of $\A$.

Let now $s, s' \in S$ be two states. Consider the two blocks $p_s$ and $p_{s'}$ as above, say with support $\Delta_n$ and $\Delta_m$, respectively. Since $X$ is irreducible, there exists a configuration $f \in X$ satisfying the following conditions: $f$ coincides with $p_s$ on $\Delta_n$, and for each $w\in \Sigma^n$ there exists $v = v(w) \in \Sigma^*$ such that $f$ coincides with $wv \cdot p_{s'}$ on $wv\Delta_m$.
Let $\alpha$ be a homomorphism accepting $f$. Hence there is $u\in \Delta_{n+1}$ such that $\alpha(u) = s$ and for a fixed $w\in \Sigma^n$ in the descendance of $u$ there exist $v = v(w)$ and $u'\in \Delta_m$ such that $\alpha(wvu') = s'$. The image under $\alpha$ of the path in $\Sigma^*$ from $u$ to $wvu'$ determines a bundle path from $s$ to $s'$. Thus $\A$ is strongly connected.
\endproof

\section{Finite-tree automata}
\begin{definition}\label{d:finite tree acceptance}
A \emph{finite-tree automaton} is an unrestricted Rabin automaton $\A = (S,\Sigma,$ $A,\T)$ for which a subset $\mathcal{I} \subset S$ of \emph{initial states} and a state $F \in S$, called \emph{final state}, are specified. We shall denote it by $\A(\mathcal{I},F)$.

Let $T\subset \Sigma^*$ be a finite full subtree. We say that a full-tree-pattern $p \in A^T$ is \emph{accepted by $\A(\mathcal{I}, F)$} if there exists a map $\alpha \colon  T^+ \to  S$ such that
\begin{enumerate}[(1)]
\item $p$ is accepted by $\A$ via $\alpha$ (see Proposition~\ref{p:acceptanceSUBTREE2});
\item $\alpha(\varepsilon) \in \mathcal{I}$;
\item $\alpha(w) = F$ if $w \in T^+ \setminus T$.
\end{enumerate}
We denote by $\TT(\A(\mathcal{I}, F))$ the set of all full-tree-patterns accepted by $\A(\mathcal{I}, F)$.
A set of full-tree-patterns is called \emph{recognizable} if it is of the form $\TT(\A(\mathcal{I}, F))$, for some finite-tree automaton $\A(\mathcal{I}, F)$.
A finite-tree automaton $\A(\mathcal{I}, F)$ is \emph{co-deterministic} if its underlying unrestricted Rabin automaton $\A$ is co-deterministic.
\end{definition}
\begin{remark}
As explained in Remark~\ref{r:essential}, we could only consider essential unrestricted Rabin automata.
As finite-tree automata are concerned, we relax this assumption: each non final state is the source of some transition bundle, but no condition is required for the final state. Thus in the finite-tree automata we consider, one may have no transition bundles starting from the final state.
\end{remark}
\begin{definition}
An unrestricted Rabin automaton $\A = (S,\Sigma,A,\T)$ is called \emph{co-complete} if for each ${\bf s} \in S^\Sigma$ and $a \in A$, there exists a transition bundle in $\T$ labeled by $a$ and terminating at~${\bf s}$. A finite-tree automaton $\A(\mathcal{I}, F)$ is \emph{co-complete} if its underlying unrestricted Rabin automaton $\A$ is co-complete.
\end{definition}
\begin{remark}\label{r:cocomplete+codet}
Let $\A = (S,\Sigma,A,\T)$ be a co-complete and co-deterministic unrestricted Rabin automaton and let $T\subset \Sigma^*$ be a finite full subtree. For each $F \in S$ and $p\in A^T$, there always exists a map $\alpha_p \colon  T^+ \to  S$ satisfying conditions $(1)$ and $(3)$ in Definition~\ref{d:finite tree acceptance}. Moreover, since $\A$ is co-deterministic, such a map is unique.
\end{remark}
A slight adaptation in the proof of Theorem~\ref{t:subsetCONST} leads to the proof of Theorem~\ref{t:subsetCONST2}.

\proof[Proof of Theorem~\ref{t:subsetCONST2}]
Let $\A = (S,\Sigma,A,\T)$ be an essential Rabin automaton. The unrestricted Rabin automaton $\A_{\rm cod} = ({\mathcal P}^*(S),\Sigma,A,\T_{\rm cod})$ is defined as in the proof of Theorem~\ref{t:subsetCONST}. Set $\mathcal{I} = {\mathcal P}^*(S)$ and $F = S$.

We have to prove that $\TT(\Sh_\A) = \TT(\A_{\rm cod}(\mathcal{I},F))$.

Suppose that $p \in \TT(\Sh_\A)$ is a full-tree-pattern with support a finite full subtree $T$. Then there exists a map $\alpha \colon  T^+ \to  S$ such that
$(\alpha(w);p(w);(\alpha(w\sigma))_{\sigma \in \Sigma}) \in \T$ for each $w \in T$.

Set $\alpha_{\rm cod}(w) = S$ for each $w \in T^+\setminus T$. As the proof of Theorem~\ref{t:subsetCONST}, a map
$\alpha_{\rm cod} \colon T^+ \to  {\mathcal P}^*(S)$ can be inductively defined on $\Sigma^{n-i} \cap T$ for $i = 1, \dots, n$, where $n$ is the height of $T$. This way,
the map $\alpha_{\rm cod}$ accepts $p$ (in the sense of Definition~\ref{d:finite tree acceptance}). Thus $p \in \TT(\A_{\rm cod}(\mathcal{I},F))$. This shows that $\TT(\Sh_\A) \subset \TT(\A_{\rm cod}(\mathcal{I},F))$.

Conversely, suppose that $p \in \TT(\A_{\rm cod}(\mathcal{I},F))$ is a full-tree-pattern with support a finite full subtree $T$. Then there exists a map $\alpha_{\rm cod} \colon  T^+ \to {\mathcal P}^*(S)$ such that
$(\alpha_{\rm cod}(w);p(w);$ $(\alpha_{\rm cod}(w\sigma))_{\sigma \in \Sigma}) \in \T_{\rm cod}$ for each $w \in T$.
Again, as in the proof of Theorem~\ref{t:subsetCONST}, we can inductively define $\alpha \colon  T^+ \to  S$ on $\Sigma^i \cap T^+$, for $i = 0, \dots, n$, where $n$ is the height of $T$. This way, the map $\alpha$ accepts $p$. By Corollary~\ref{c:subtree}, we have $p \in \TT(\Sh_{\A})$.  This shows that $\TT(\A_{\rm cod}(\mathcal{I},F)) \subset \TT(\Sh_\A)$.\endproof

As proved in the following theorem, the recognizable sets of full-tree-patterns form a class which is closed under complementation.

\begin{theorem}\label{t:complement}Let $\A(\mathcal{I}, F)$ be a co-deterministic finite-tree automaton. Then there exists a co-complete and co-deterministic finite-tree automaton $\A_\complement(\mathcal{I}_\complement, F_\complement)$ such that $$\TT(A^{\Sigma^*}) \setminus \TT(\A(\mathcal{I}, F)) = \TT(\A_\complement(\mathcal{I}_\complement, F_\complement)).$$
\end{theorem}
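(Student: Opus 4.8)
The plan is to imitate the classical complementation of deterministic complete bottom-up tree automata. In our setting co-determinism plays the role of bottom-up determinism, and once the automaton is also co-complete every full-tree-pattern admits a \emph{unique} run of the shape required by Definition~\ref{d:finite tree acceptance}; complementing the set of initial states then yields the complement language. So I would proceed in two steps: first turn $\A(\mathcal{I},F)$ into a co-complete and co-deterministic finite-tree automaton accepting the same set of full-tree-patterns, and then keep the underlying automaton and the final state but replace $\mathcal{I}$ by its complement.

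For the first step, write $\A = (S,\Sigma,A,\T)$ and adjoin a fresh ``sink'' state $\bot$, setting $S' = S\cup\{\bot\}$. Let $\T'$ consist of all bundles of $\T$ together with, for every pair $({\bf s},a)\in (S')^\Sigma\times A$ for which no bundle of $\T$ is labeled by $a$ and terminates at ${\bf s}$ (this happens in particular whenever some coordinate of ${\bf s}$ equals $\bot$), the new bundle $(\bot;a;{\bf s})$. By construction, for each $({\bf s},a)$ there is now exactly one bundle of $\T'$ labeled $a$ and terminating at ${\bf s}$, so $\A' = (S',\Sigma,A,\T')$ is co-complete and co-deterministic; moreover each non-final state is still the source of some bundle (states of $S$ keep their old bundles, and $\bot$ is the source of the added ones), so $\A'(\mathcal{I},F)$ is a legitimate finite-tree automaton. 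It remains to check $\TT(\A'(\mathcal{I},F)) = \TT(\A(\mathcal{I},F))$. The inclusion $\supseteq$ is immediate since $\T\subseteq\T'$. For $\subseteq$, suppose $p\in A^T$ is accepted by $\A'(\mathcal{I},F)$ via $\alpha\colon T^+\to S'$ and that $\bot\in\im(\alpha)$. Since $\alpha$ equals $F\in S$ on $T^+\setminus T$, there is $w\in T$ with $\alpha(w)=\bot$. An induction on $|w|$ shows $\alpha(\varepsilon)=\bot$: if $w=w'\tau\neq\varepsilon$, the bundle of $\T'$ used at $w'$ has $\bot$ in the $\tau$-coordinate of its terminal sequence, and every bundle of $\T'$ with $\bot$ in its terminal sequence has source $\bot$ (original bundles terminate inside $S$, and the added ones have source $\bot$), so $\alpha(w')=\bot$. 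But $\alpha(\varepsilon)=\bot\notin\mathcal{I}$ contradicts condition (2) of Definition~\ref{d:finite tree acceptance}. Hence accepting runs of $\A'(\mathcal{I},F)$ never meet $\bot$, so they are accepting runs of $\A(\mathcal{I},F)$.

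For the second step, I set $\A_\complement := \A'$, $F_\complement := F$ and $\mathcal{I}_\complement := S'\setminus\mathcal{I}$. By Remark~\ref{r:cocomplete+codet}, for every finite full subtree $T$ and every $p\in A^T$ there is a unique map $\alpha_p\colon T^+\to S'$ satisfying conditions (1) and (3) of Definition~\ref{d:finite tree acceptance}; consequently $p$ is accepted by $\A'(\mathcal{I},F)$ if and only if $\alpha_p(\varepsilon)\in\mathcal{I}$, while $p$ is accepted by $\A_\complement(\mathcal{I}_\complement,F_\complement)$ if and only if $\alpha_p(\varepsilon)\in S'\setminus\mathcal{I}$. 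Since every element of $\TT(A^{\Sigma^*})$ arises as such a $p$, these two conditions partition $\TT(A^{\Sigma^*})$, whence $\TT(\A_\complement(\mathcal{I}_\complement,F_\complement)) = \TT(A^{\Sigma^*})\setminus\TT(\A'(\mathcal{I},F)) = \TT(A^{\Sigma^*})\setminus\TT(\A(\mathcal{I},F))$, which is what we want.

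I expect the only delicate point to be the bookkeeping in the co-completion step, specifically the verification that the sink state $\bot$, once it occurs anywhere in a run, necessarily propagates up to the root and is therefore excluded from accepting runs; after that, the argument is routine. Note in particular that, in contrast with Theorem~\ref{t:subsetCONST}, no subset construction on the boundary is needed here, because the boundary condition is already deterministic (a single final state $F$), and this is exactly what makes the run $\alpha_p$ unique.
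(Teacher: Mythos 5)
Your proof is correct and follows essentially the same route as the paper: the paper adjoins a single new state $K$ (your $\bot$), adds exactly the bundles you add, sets $\mathcal{I}_\complement = (S\setminus\mathcal{I})\cup\{K\}$ and $F_\complement = F$, and concludes via the uniqueness of the run $\alpha_p$ from Remark~\ref{r:cocomplete+codet} together with the observation that the sink state can never occur as a terminal state of a bundle with a different source. Your two-step organization (a language-preserving co-completion followed by complementation of the initial set) is only a presentational variant of the paper's one-step argument.
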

\proof
Let $\A = (S,\Sigma,A,\T)$ be the co-deterministic unrestricted Rabin automaton underlying $\A(\mathcal{I}, F)$. We define $\A_\complement = (S_\complement,\Sigma,A,\T_\complement)$ as follows. We construct $S_\complement$ by adding to $S$ a new state $K \notin S$, so that $S_\complement = S \cup \{K\}$.
The set of transition bundles $\T_\complement$ contains all the transition bundles in $\T$. Moreover, for a sequence ${\bf s} \in (S_\complement)^\Sigma$ and $a \in A$, if no transition bundle in $\T$ labeled by $a$ terminates at
${\bf s}$, then we add to $\T_\complement$ the transition bundle $(K;a;{\bf s})$.
With this, we have that $\A_\complement$ is  co-deterministic and co-complete.

By setting $\mathcal{I}_\complement = \left(S \setminus \mathcal{I}\right) \cup \{K\}$ and $F_\complement = F$, we want to prove that a full-tree-pattern $p \in A^T$ is accepted by $\A(\mathcal{I}, F)$ if and only if it is not accepted by $\A_\complement(\mathcal{I}_\complement, F_\complement)$.
By Re\-mark~\ref{r:cocomplete+codet}, given $p\in A^T$ there exists exactly one map $\alpha_p \colon  T^+ \to  S_\complement$ satisfying conditions $(1)$ and $(3)$ in Definition~\ref{d:finite tree acceptance}.

Suppose that $p$ is accepted by $\A(\mathcal{I}, F)$ via $\alpha \colon  T^+ \to  S$ as in Definition~\ref{d:finite tree acceptance}. Recall that $S \subset S_\complement$ and $F = F_\complement$. By uniqueness, it follows that $\alpha = \alpha_p$. As $\alpha_p(\varepsilon) = \alpha(\varepsilon) \in \mathcal{I}$, we have $\alpha_p(\varepsilon) \notin \mathcal{I}_\complement$ and then
$p \notin \TT(\A_\complement(\mathcal{I}_\complement, F_\complement))$.

Suppose now that $p$ is not accepted by $\A(\mathcal{I}, F)$. The map $\alpha_p$ cannot map $\varepsilon$ to a state $s \in \mathcal{I}$ unless $K \in \alpha_p(T^+)$.
But this last condition is impossible because
there is no transition bundle $t \in \T_\complement$ with ${\bf i}(t) \neq K$ and ${\bf t}_\sigma(t) = K$ for some
$\sigma \in \Sigma$.  Hence $\alpha_p(\varepsilon) \in \mathcal{I}_\complement$ and $p$ is accepted by $\A_\complement(\mathcal{I}_\complement, F_\complement)$.
\endproof
We are now in position to prove Theorem~\ref{t:complement2} from the Introduction.

\proof[Proof of Theorem~\ref{t:complement2}]
Let $\A = (S,\Sigma,A,\T)$ be an unrestricted Rabin automaton. Let us show that there exists a co-complete and co-determi\-ni\-stic finite-tree automaton $\A_\complement(I,F)$ with a single initial state, such that $\TT(A^{\Sigma^*}) \setminus \TT(\Sh_\A) = \TT(\A_\complement(I,F))$.
By virtue of Theorem~\ref{t:subsetCONST2}, we can find a co-deterministic unrestricted Rabin automaton $\A_{\rm cod} = ({\mathcal P}^*(S),\Sigma,A,\T_{\rm cod})$ such that
$\TT(\Sh_\A) = \TT(\A_{\rm cod}({\mathcal P}^*(S), S))$.
By Theorem~\ref{t:complement}, there exists a co-complete and co-deterministic unrestricted Rabin automaton
$\A_\complement = ({\mathcal P}^*(S)\cup\{K\},\Sigma,A,\T_\complement)$ such that $\TT(A^{\Sigma^*}) \setminus \TT(\A_{\rm cod}({\mathcal P}^*(S), S)) = \TT(\A_\complement(K, S))$.
\endproof
\begin{corollary}\label{c:full}
Let $\A$ be an unrestricted Rabin automaton. Let $\A_\complement(I,F)$ be as in Theorem~\ref{t:complement2}. Then $\Sh_\A = A^{\Sigma^*}$ if and only if $\TT(\A_\complement(I, F)) = \varnothing$.
\end{corollary}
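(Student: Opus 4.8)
The plan is to reduce the statement to Theorem~\ref{t:complement2} together with the fact that a tree shift is determined by its full-tree-patterns (Remark~\ref{r:T(X) determines X}). First I would record that $\Sh_\A$ is a tree shift (indeed sofic) by Proposition~\ref{p:XA sofic}, and that the inclusion $\TT(\Sh_\A) \subset \TT(A^{\Sigma^*})$ holds trivially since every full-tree-pattern of $\Sh_\A$ is, by definition, a pattern over $A$ on a finite full subtree. Hence the set difference $\TT(A^{\Sigma^*}) \setminus \TT(\Sh_\A)$ is empty if and only if $\TT(\Sh_\A) = \TT(A^{\Sigma^*})$.

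Next I would invoke Remark~\ref{r:T(X) determines X}: applied to the two tree shifts $X = \Sh_\A$ and $Y = A^{\Sigma^*}$, it gives the equivalence $\Sh_\A = A^{\Sigma^*} \Longleftrightarrow \TT(\Sh_\A) = \TT(A^{\Sigma^*})$. Combining this with the observation of the first paragraph yields that $\Sh_\A = A^{\Sigma^*}$ if and only if $\TT(A^{\Sigma^*}) \setminus \TT(\Sh_\A) = \varnothing$.

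Finally, I would apply Theorem~\ref{t:complement2} to the unrestricted Rabin automaton $\A$: it produces the co-complete and co-deterministic finite-tree automaton $\A_\complement(I,F)$ (with a single initial state) satisfying $\TT(A^{\Sigma^*}) \setminus \TT(\Sh_\A) = \TT(\A_\complement(I,F))$. Substituting this identity into the equivalence of the previous paragraph gives precisely $\Sh_\A = A^{\Sigma^*} \Longleftrightarrow \TT(\A_\complement(I,F)) = \varnothing$, which is the claim.

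Since all the machinery is already in place, I do not expect a genuine obstacle here; the only point requiring a moment of care is checking that Remark~\ref{r:T(X) determines X} is indeed applicable, i.e.\ that both $\Sh_\A$ and $A^{\Sigma^*}$ are honest tree shifts (the former by Proposition~\ref{p:XA sofic}, the latter being the full shift), and that the trivial inclusion $\TT(\Sh_\A) \subseteq \TT(A^{\Sigma^*})$ lets one replace "equality of full-tree-pattern sets" by "emptiness of the complement". This corollary is essentially the decidability hook: combined with Theorem~\ref{t:emptiness} it shows one can decide whether $\Sh_\A$ is the full shift.
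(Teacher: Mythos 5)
Your proof is correct and follows exactly the same route as the paper's: apply Theorem~\ref{t:complement2} to identify $\TT(\A_\complement(I,F))$ with $\TT(A^{\Sigma^*}) \setminus \TT(\Sh_\A)$, then use Remark~\ref{r:T(X) determines X} to translate equality of full-tree-pattern sets into equality of the shifts. The extra care you take in noting the trivial inclusion $\TT(\Sh_\A) \subseteq \TT(A^{\Sigma^*})$ is fine but the paper leaves it implicit.
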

\proof
By Theorem~\ref{t:complement2}, we have that $\TT(\Sh_\A) = \TT(A^{\Sigma^*})$ if and only if $\TT(\A_\complement(I, F))=\varnothing$.
By Remark~\ref{r:T(X) determines X}, we have that $\Sh_\A = A^{\Sigma^*}$ if and only if
$\TT(\Sh_\A) = \TT(A^{\Sigma^*})$.
\endproof

\section{Decision problems}\label{s:Decision problems}

\subsection{The emptiness problem for finite-tree automata}\label{ss:emptiness}
The emptiness problem for an unrestricted Rabin automaton is trivial (every nonempty essential automaton accepts at least a configuration), but this argument does not apply to the case of finite-tree automata. In this section we present an effective procedure to establish the emptiness of a recognizable set of full-tree-patterns.

\proof[Proof of Theorem~\ref{t:emptiness}]
Let $\A(\mathcal{I}, F)$ be a finite-tree automaton. Let us show that there is an algorithm which establishes whether $\TT(\A(\mathcal{I}, F)) = \varnothing$ or not.
We claim that $\TT(\A(\mathcal{I}, F))$ is nonempty if and only if it contains a pattern of height $\leq \vert S \vert$, where $S$ is the state set of $\A$. Indeed, suppose that there exists a full-tree-pattern $p$ defined on a finite full subtree $T$ which is accepted by $\A(\mathcal{I}, F)$ (see Definition~\ref{d:finite tree acceptance}) via a map $\alpha \colon  T^+ \to  S$. If a branch contains more than $\vert S \vert$ vertices, this means that a state in $S$ appears at least twice in the $\alpha$-image of this branch. More precisely, there exist $w, w' \in T$ with $w' \in w\Sigma^* \setminus \{w\}$ and such that $\alpha(w) = \alpha(w')$. One can always consider $w$ and $w'$ as the shortest and the longest word in the branch having this property, respectively. Define $T' = (T \setminus w\Sigma^*)\cup \{wv : v \in \Sigma^* \textup{ and } w'v \in T\}$. In other words, $T'$ is obtained by replacing the descendants of $w$ with the descendants of $w'$. Obviously $T'$ is still a finite full subtree. Consider the full-tree-pattern $p' \colon  T' \to  A$ defined by $p'\vert_{T \setminus w\Sigma^*} = p\vert_{T \setminus w\Sigma^*}$ and $p'(wv) = p(w'v)$ for all $v \in \Sigma^*$ such that $w'v\in T$. As it can be easily seen, $p'$ is accepted by $\A(\mathcal{I}, F)$ via a suitable modification of $\alpha$. Thus, $\TT(\A(\mathcal{I}, F))$ accepts a full-tree-pattern defined on a finite full subtree $T'$ obtained by a nontrivial shortening of the appointed branch.
Up to recursively applying this process, we get a full-tree-pattern in which every branch has at most $\vert S \vert$ vertices of $\Sigma^*$ involved.
One then concludes since there are finitely many full-tree-patterns of height $\leq \vert S \vert$ and one can effectively check whether they are accepted by $\A(\mathcal{I}, F)$ or not.
\endproof

\begin{remark}
Following the above proof, in order to solve the emptiness problem we have, in principle, to check all possible maps $\alpha \colon \Delta_{\vert S \vert +1} \to  S$. Thus, the previous algorithm has exponential complexity in the size of $S$.
\end{remark}

\subsection{An algorithm establishing whether two sofic tree shifts coincide or not}
Before presenting our algorithm, we need to define
the join of two unrestricted Rabin automata.

\begin{definition}\label{d:join}
The \emph{join} of two unrestricted Rabin automata $\A_1 = (S_1,\Sigma,A,\T_1)$ and $\A_2 = (S_2,\Sigma,A,\T_2)$ is the unrestricted Rabin automaton $\A_1 * \A_2 = (S_1 \times S_2,\Sigma,A,\T_\times)$ where
$$\left((s_1,s_2);a;(s'_\sigma,s''_\sigma)_{\sigma \in \Sigma}\right) \in \T_\times \Longleftrightarrow \left(s_1;a;(s'_\sigma)_{\sigma \in \Sigma}\right) \in \T_1 \textup{ and }
\left(s_2;a;(s''_\sigma)_{\sigma \in \Sigma}\right) \in \T_2.$$
\end{definition}

Notice that $\Sh_{\A_1 * \A_2} = \Sh_{\A_1} \cap \Sh_{\A_2}$. Moreover, $\A_1 * \A_2$ is co-complete (resp. co-deterministic), if $\A_1$ and $\A_2$ are co-complete (resp. co-deterministic).

%
%
We are now in position to prove Theorem~\ref{t:equality} from the Introduction.

\proof[Proof of Theorem~\ref{t:equality}]
Let $\A_1$ and $\A_2$ be two unrestricted Rabin automata. Note that, by Remark~\ref{r:T(X) determines X}, it suffices to establish whether
\begin{equation}
\label{whether-or-not}
\TT(\Sh_{\A_1}) \setminus \TT(\Sh_{\A_2}) = \varnothing = \TT(\Sh_{\A_2}) \setminus \TT(\Sh_{\A_1})
\end{equation}
or not.

First construct the co-complete and co-deterministic finite-tree automata $\A_1'(I_1,F_1)$ and $\A_2'(I_2,F_2)$ as in Theorem~\ref{t:complement2}, associated with $\A_1$ and $\A_2$, respectively. Hence $\TT(A^{\Sigma^*}) \setminus \TT(\Sh_{\A_i}) = \TT(\A_i'(I_i,F_i))$ for $i=1,2$. Moreover, we denote by $S_i$ the state set of $\A_i'$ for $i=1,2$.
Consider the finite-tree automaton $(\A_1'*\A_2')(\mathcal{I}_1,F)$, where $\mathcal{I}_1 = (S_1\setminus\{I_1\}) \times \{I_2\}$ and $F = (F_1,F_2)$.
For each full-tree-pattern $p \in \TT(A^{\Sigma^*})$ we have
$$p \in \TT((\A_1'*\A_2')(\mathcal{I}_1,F)) \Longleftrightarrow
p \notin \TT(\A_1'(I_1,F_1)) \textup{ and } p \in \TT(\A_2'(I_2,F_2)),$$ that is,
$\TT((\A_1'*\A_2')(\mathcal{I}_1,F)) = \TT(\Sh_{\A_1}) \setminus \TT(\Sh_{\A_2})$.
Analogously, by defining $\mathcal{I}_2 = \{I_1\}\times (S_2\setminus\{I_2\})$ one has
$\TT((\A_1'*\A_2')(\mathcal{I}_2,F)) = \TT(\Sh_{\A_2}) \setminus \TT(\Sh_{\A_1})$.

Thus \eqref{whether-or-not} holds if and only if $\TT((\A_1'*\A_2')(\mathcal{I}_1,F)) \bigcup \TT((\A_1'*\A_2')(\mathcal{I}_2,F)) = \varnothing$.
An effective procedure to establish this latter equality is then provided by Theorem~\ref{t:emptiness} proved in Section~\ref{ss:emptiness}.
\endproof

\begin{remark}
The above algorithm has exponential complexity in the maximal size of the state sets of the unrestricted Rabin automata. A different procedure can be applied to the class of \emph{strongly irreducible} unrestricted Rabin automata by using a minimization process. Actually, in~\cite{AubrunBeal10} it is shown that there exists a canonical minimal co-deterministic presentation of an irreducible sofic tree shift. Thus another possible decision algorithm consists in computing the minimal presentations of the two shifts and checking whether they coincide or not. In this case, Theorem~\ref{t:subsetCONST} is needed while the procedure for the emptiness problem is not required. Hence this algorithm has in general an exponential complexity. The complexity can be reduced to be polynomial by only considering the class of co-deterministic strongly irreducible tree shifts.
\end{remark}

\subsection{An algorithm establishing whether a cellular automaton is surjective or not}\label{ss:surjectivity}
We are now in position to prove Theorem~\ref{t:surjectivity} in the Introduction. Observe first
that giving a sofic tree shift $X \subset A^{\Sigma^*}$ corresponds, equivalently, to giving a tree shift of finite type $Z \subset C^{\Sigma^*}$ (for a suitable finite alphabet $C$) and a surjective cellular automaton $\tau' \colon Z \to X$, or an unrestricted Rabin automaton $\A$ such that $X = \Sh_\A$. Propositions~\ref{p:XA sofic} and~\ref{p:A(tau,M,X) accepts tau(X)} provide an effective procedure to switch from one representation to the other.

\proof[Proof of Theorem~\ref{t:surjectivity}]
Let $X \subset A^{\Sigma^*}$ and $Y \subset B^{\Sigma^*}$ be two sofic tree shifts and let $\tau \colon X \to Y$ be a CA.
We want to show that it is decidable whether $\tau$ is surjective or not.
Let $Z \subset C^{\Sigma^*}$ and $\tau' \colon Z \to X$ be as above.
Now the cellular automaton $\tau \colon X \to Y$ is surjective if and only if
the composite cellular automaton $\tau \circ \tau' \colon Z \to Y$ is surjective.
Let $n \in \mathbb{N}$ be large enough so that the cellular automaton $\tau \circ \tau'$ has memory set $\Delta_n$ and that $n-1$ is the memory of $Z$. By Proposition~\ref{p:A(tau,M,X) accepts tau(X)}, the unrestricted Rabin automaton $\A(\tau \circ \tau', \Delta_n, Z)$ having state set $Z_{n-1}$ is a presentation of $\tau(X)$. Then, it suffices to apply Theorem~\ref{t:equality} to establish whether $Y=\tau(X)$ or not.
\endproof
\begin{remark}
If in Theorem~\ref{t:surjectivity} one has $X = A^{\Sigma^*}$ and $Y = B^{\Sigma^*}$, then the algorithm becomes much simpler.
Indeed, once constructed the unrestricted Rabin automaton $\A = \A(\tau, \Delta_n, A^{\Sigma^*})$ accepting $\tau(A^{\Sigma^*})$ as in Proposition~\ref{p:A(tau,M,X) accepts tau(X)}, we have that (by virtue of Corollary~\ref{c:full}), the map $\tau$ is surjective if and only if
$\TT(\A_\complement(I,F)) = \varnothing$,
where $\A_\complement(I,F)$ is the co-complete and co-deterministic finite-tree automaton corresponding to $\A$ as in Theorem~\ref{t:complement2}. This latter equality is effectively verifiable by Theorem~\ref{t:emptiness} proved in Section~\ref{ss:emptiness}.
\end{remark}

\subsection{An algorithm establishing whether a cellular automaton is injective or not}
 We are now going to prove Theorem~\ref{t:injectivity} in the Introduction. Notice that we cannot apply the same argument used in Section~\ref{ss:surjectivity} for establishing whether a CA between sofic tree shifts is injective or not. We have to limit ourselves to the case of a CA defined on a tree shift of finite type.

\proof[Proof of Theorem~\ref{t:injectivity}]
Let $X \subset A^{\Sigma^*}$ be a tree shift of finite type and let $\tau \colon X \to B^{\Sigma^*}$ be a CA.
We want to show that it is decidable whether $\tau$ is injective or not.
Let $n \in \mathbb{N}$ be large enough so that the cellular automaton $\tau$ has memory set $\Delta_n$ and that $n-1$ is the memory of $X$. By Proposition~\ref{p:A(tau,M,X) accepts tau(X)}, the unrestricted Rabin automaton $\A = \A(\tau, \Delta_n, X)$ having state set $X_{n-1}$ is a presentation of $\tau(X)$.

As we have seen, the unrestricted Rabin automaton $\A*\A$ (see Definition~\ref{d:join}) is another presentation of
$\tau(X)$. Recall that the state set of $\A*\A$ is $S\times S$, where $S = X_{n-1}$ is the state set of $\A$.
A state $(s, s')$ of $\A*\A$ is \emph{diagonal} if $s
= s'$. Notice that the cellular automaton $\tau$ is noninjective if and only
if there exist a configuration $f\in\tau(X)$ and two different homomorphisms $\alpha, \alpha' \colon \mathcal{G}_f \to \A$.
This fact is equivalent to the existence
of a homomorphism $\alpha \colon \mathcal{G}_f \to \A*\A$
that involves a
nondiagonal state (i.e., such that the image $\alpha(\Sigma^*)$ contains a nondiagonal state).

Hence, starting from the Rabin automaton
$\A*\A$ we construct the essential Rabin automaton that accepts
the same sofic tree shift. It suffices to check, on this latter
automaton, if some nondiagonal state is left.
\endproof

\begin{remark}
Given a Rabin automaton $\A = (S,\Sigma,A,\T)$, the Rabin automaton $\A*\A$ has $\vert S \vert^2$ states and $O(\vert \T \vert^2)$ transition bundles.
The procedure to get the essential part of $\A*\A$ has then complexity $O(\vert S \vert^4\vert \T \vert^2)$ and it costs $O(\vert S \vert^2)$ to check the presence of a nondiagonal state.
Thus, the above algorithm has complexity $O(\vert S \vert^{2\vert\Sigma \vert+6})$ in the size of the state set of the unrestricted Rabin automaton $\A(\tau, \Delta_n, X)$ accepting $\tau(X)$.
\end{remark}

\section{Rabin automata}
\begin{definition}[Rabin automaton and acceptance]\label{d:rabin}
A \emph{Rabin automaton} is a 6-tuple $\mathcal{A}=(S,\Sigma,A,\mathcal{T},\mathcal{I},$ $\mathfrak{F})$ where $(S,\Sigma,A,\mathcal{T})$ is an unrestricted Rabin automaton, $\mathcal{I}\subset S$ is a set of \emph{initial states}, and $\mathfrak{F}\subset \mathcal{P}(S)$ is a family of \emph{accepting sets}.

A configuration $f \in A^{\Sigma^*}$ is \emph{accepted} by $\mathcal{A}$ if there exists a homomorphism $\alpha \colon \Sigma^* \to S$ such that
\begin{enumerate}[(1)]
\item $f$ is accepted by $(S,\Sigma,A,\mathcal{T})$ (see Remark~\ref{r:acceptance});
\item $\alpha(\varepsilon) \in \mathcal{I}$;
\item for every right infinite word $\bar w= \sigma_0\sigma_1 \cdots \in \Sigma^\mathbb{N}$, the set of states that appear infinitely many times along the image of $\bar w$ under $\alpha$ is in $\mathfrak{F}$, i.e.,
$$
 {\rm In}(\alpha\vert \bar w) := \{s \in S : s \text{ appears infinitely often in }
  \alpha(\varepsilon),\alpha(\sigma_0),\alpha(\sigma_0\sigma_1),\dots\ \} \in \mathfrak{F}.
$$
\end{enumerate}
\end{definition}

In this section we modify the definition of an essential automaton. Namely, a state $s \in S$ in a Rabin automaton $\mathcal{A}=(S,\Sigma,A,\mathcal{T},\mathcal{I},\mathfrak{F})$ is \emph{essential} if there exists at least one configuration $f$ and a homomorphism $\alpha: \mathcal{G}_f \to \A$ that accepts $f$ and \emph{uses} $s$ (i.e., $\alpha(w)=s$, for some $w \in \Sigma^*$). A Rabin automaton is \emph{essential} if all its states are essential.

We will prove Theorem~\ref{t:rabin+shift=sofic} from the Introduction through two lemmas, designed to indicate that accepting sets are not needed to recognize topologically closed sets, and initial sets are not needed to recognize shift-invariant sets.

\begin{lemma}\label{l:no-initial}
Let $X$ be a shift-invariant subset of $A^{\Sigma^*}$ that is accepted by the essential Rabin automaton $\A=(S,\Sigma,A,\T,\mathcal{I},\mathfrak{F})$. Then, $X$ is also accepted by the Rabin automaton $\A'=(S,\Sigma,A,\T,S,\mathfrak{F})$ in which all states in $S$ are initial.
\end{lemma}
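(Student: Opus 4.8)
I will prove the two inclusions $X \subseteq \{$configurations accepted by $\A'\}$ and $\{$configurations accepted by $\A'\} \subseteq X$. The first is immediate: conditions (1) and (3) of Definition~\ref{d:rabin} do not mention the set of initial states, and condition (2) is vacuous for $\A'$, so any homomorphism accepting a configuration in $\A$ accepts it in $\A'$ as well. The content is in the reverse inclusion, and the plan is to use essentiality of $\A$ to ``complete'' any $\A'$-accepting homomorphism to an $\A$-accepting one on a suitably grafted configuration, and then appeal to shift-invariance of $X$.

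So fix a configuration $f$ accepted by $\A'$ via a homomorphism $\alpha\colon\Sigma^*\to S$, so $(\alpha(w);f(w);(\alpha(w\sigma))_{\sigma\in\Sigma})\in\T$ for all $w$ and ${\rm In}(\alpha|\bar w)\in\mathfrak{F}$ for every right infinite word $\bar w$. Put $s_0=\alpha(\varepsilon)$. Since $\A$ is essential, $s_0$ is essential, so there are a configuration $g$, a homomorphism $\beta\colon\Sigma^*\to S$ accepting $g$ in $\A$ (hence $\beta(\varepsilon)\in\mathcal{I}$ and ${\rm In}(\beta|\bar w)\in\mathfrak{F}$ for all $\bar w$), and a word $w_0\in\Sigma^*$ with $\beta(w_0)=s_0$. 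I graft a copy of $f$ onto the subtree of $g$ rooted at $w_0$: define $g'\colon\Sigma^*\to A$ by $g'(w)=g(w)$ for $w\notin w_0\Sigma^*$ and $g'(w_0v)=f(v)$ for $v\in\Sigma^*$, and define $\gamma\colon\Sigma^*\to S$ by $\gamma(w)=\beta(w)$ for $w\notin w_0\Sigma^*$ and $\gamma(w_0v)=\alpha(v)$ for $v\in\Sigma^*$. Unique factorization in the free monoid makes both maps well defined, and the two clauses for $\gamma$ agree at $w_0$ because $\gamma(w_0)=\alpha(\varepsilon)=s_0=\beta(w_0)$.

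Next I would verify that $\gamma$ is a homomorphism $\mathcal{G}_{g'}\to\A$ and then that it satisfies the Rabin acceptance conditions for $\A$. For $w=w_0v$ the required membership $(\gamma(w);g'(w);(\gamma(w\sigma))_\sigma)\in\T$ reads $(\alpha(v);f(v);(\alpha(v\sigma))_\sigma)\in\T$, true since $\alpha$ accepts $f$. For $w\notin w_0\Sigma^*$ one first checks that if $w\sigma\in w_0\Sigma^*$ then in fact $w\sigma=w_0$ (otherwise $w_0$ is a proper prefix of $w\sigma$, hence a prefix of $w$, contradicting $w\notin w_0\Sigma^*$); in all cases $\gamma(w\sigma)=\beta(w\sigma)$ (using $\gamma(w_0)=s_0=\beta(w_0)$), and the membership reads $(\beta(w);g(w);(\beta(w\sigma))_\sigma)\in\T$, true since $\beta$ accepts $g$. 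Condition (2) holds since $\gamma(\varepsilon)=\beta(\varepsilon)\in\mathcal{I}$ (the two descriptions of $\gamma(\varepsilon)$ coincide with $\beta(\varepsilon)$ even if $w_0=\varepsilon$). For condition (3), given $\bar w=\sigma_0\sigma_1\cdots$, either the prefixes of $\bar w$ never enter $w_0\Sigma^*$, so $\gamma$ agrees with $\beta$ along the whole branch and ${\rm In}(\gamma|\bar w)={\rm In}(\beta|\bar w)\in\mathfrak{F}$; or the branch passes through $w_0$, say $\sigma_0\cdots\sigma_{k-1}=w_0$ with $k=|w_0|$, whence $\gamma(\sigma_0\cdots\sigma_{n-1})=\alpha(\sigma_k\cdots\sigma_{n-1})$ for all $n\geq k$, so the tail of the state sequence equals that of $\alpha$ along $\sigma_k\sigma_{k+1}\cdots$ and ${\rm In}(\gamma|\bar w)={\rm In}(\alpha|\sigma_k\sigma_{k+1}\cdots)\in\mathfrak{F}$. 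Hence $g'$ is accepted by $\A$, i.e.\ $g'\in X$, and since $(g')^{w_0}(v)=g'(w_0v)=f(v)$ for all $v$, shift-invariance of $X$ gives $f=(g')^{w_0}\in X$.

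I expect the main obstacle to be condition (3): one must notice that $\mathfrak{F}$ constrains only the limit sets along infinite branches, and that every infinite branch of $g'$ is, after a finite prefix, either a branch of $g$ read by $\beta$ or a branch of $f$ read by $\alpha$, so in either case its limit set already lies in $\mathfrak{F}$. The remaining verifications are routine bookkeeping about prefixes in the free monoid, together with the key structural observation that shift-invariance of $X$ is exactly what licenses the passage from $g'\in X$ back to $f$.
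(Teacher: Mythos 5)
Your proof is correct and follows essentially the same route as the paper's: using essentiality to find an $\A$-accepting run $\beta$ that reaches $\alpha(\varepsilon)$ at some word $w_0$, grafting $f$ and $\alpha$ onto the subtree rooted at $w_0$, checking the Rabin conditions branch by branch, and invoking shift-invariance to recover $f=(g')^{w_0}\in X$. Your write-up is in fact slightly more careful than the paper's about the boundary verifications (the case $w\sigma=w_0$ and the possibility $w_0=\varepsilon$).
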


\proof
Let $s \in S$ be a state which is not in $\mathcal{I}$. Let $f \in A^{\Sigma^*}$ be a configuration and let $\alpha \colon \Sigma^* \to S$ be a homomorphism accepting $f$ (as in Remark~\ref{r:acceptance}) such that $\alpha(\varepsilon)=s$ and ${\rm In}(\alpha\vert \bar w) \in \mathfrak{F}$ for every right infinite word $\bar w$.
All we need to show is that the configuration $f$ is in $X$.

Let $g \in X$ be a configuration accepted by $\A$ via $\beta \colon \Sigma^* \to S$ that uses $s$ and let $\beta(\varepsilon)=s_0 \in \mathcal{I}$ and $\alpha(w)=s$, for some $w \in \Sigma^*$ (note that $w \neq \varepsilon$). Define a configuration $f'$ that agrees with $g$ outside of $w\Sigma^*$ and such that $f'(wv)=f(v)$, for each $v \in \Sigma^*$ (that is, $(f')^w = f$). Define $\alpha'\colon \Sigma^* \to S$ that agrees with $\beta$ outside of $w\Sigma^*$ and such that $\alpha'(wv)=\alpha(v)$, for each $v \in \Sigma^*$. It is straightforward to check that $\alpha'$ accepts $f'$. Indeed, $\alpha'(\varepsilon) = \beta(\varepsilon)=s_0 \in \mathcal{I}$ and, for any right infinite word $w\bar w$ with prefix $w$, we have
${\rm In}(\alpha'\mid w\bar w) = {\rm In}(\alpha\mid \bar w) \in \mathfrak{F}$, since the acceptance condition is independent of finite prefixes.
On the other side, for any right infinite word $\bar w$ that does not have $w$ as a prefix, we have
${\rm In}(\alpha'\mid \bar w) = {\rm In}(\beta\mid w) \in \mathfrak{F}$.
Therefore $f' \in X$. Since $f = (f')^w$ and $X$ is shift-invariant, we conclude that $f \in X$.
\endproof

\begin{lemma}\label{l:no-acceptance}
Let $X$ be a topologically closed subset of $A^{\Sigma^*}$ that is accepted by the essential Rabin automaton $\A=(S,\Sigma,A,\T,\mathcal{I},\mathfrak{F})$. Then, $X$ is also accepted by the Rabin automaton $\A'=(S,\Sigma,A,\T,\mathcal{I},\mathcal{P}(S))$ in which all subsets of $S$ are accepting.
\end{lemma}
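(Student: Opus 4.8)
The plan is to show that if $X$ is closed and accepted by the essential Rabin automaton $\A=(S,\Sigma,A,\T,\mathcal{I},\mathfrak{F})$, then relaxing the acceptance family to all of $\mathcal{P}(S)$ does not enlarge the accepted language. One inclusion is immediate: every run that satisfies the $\mathfrak{F}$-condition trivially satisfies the $\mathcal{P}(S)$-condition, so $X \subseteq \Sh_{\A'}$. The substance is the reverse inclusion: if $f$ is accepted by $\A'$ via a homomorphism $\alpha\colon\Sigma^*\to S$ with $\alpha(\varepsilon)\in\mathcal{I}$ (but with no constraint on $\mathrm{In}(\alpha\mid\bar w)$), then $f\in X$. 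Since $X$ is closed, it suffices to show that for every $n\geq 1$ there is a configuration $h_n\in X$ with $h_n\vert_{\Delta_n}=f\vert_{\Delta_n}$; then $h_n\to f$ and $f\in X$.

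First I would fix $n$ and look at the partial run $\alpha\vert_{\Delta_{n+1}}$, which witnesses that the block $f\vert_{\Delta_n}$ is accepted by the underlying unrestricted Rabin automaton $(S,\Sigma,A,\T)$ in the sense of Proposition~\ref{p:acceptanceSUBTREE2}, and moreover $\alpha(\varepsilon)\in\mathcal{I}$. The key point is that every state of $S$ is essential: for each $\sigma$-leaf $w\in\Sigma^{n}$, the state $s_w:=\alpha(w)$ is used by some globally accepting run. So pick, for each $w\in\Sigma^n$, a configuration $g_w\in X$ and an accepting homomorphism $\beta_w\colon\Sigma^*\to S$ (satisfying all three conditions of Definition~\ref{d:rabin} for $\A$) with $\beta_w(v_w)=s_w$ for some $v_w\in\Sigma^*$. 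Now build $h_n$ by grafting: on $\Delta_n$ let $h_n$ agree with $f$, and below each leaf $w\in\Sigma^n$ attach the subtree of $g_w$ rooted at $v_w$, i.e. set $h_n(w u)=g_w(v_w u)$ for all $u\in\Sigma^*$. Correspondingly define $\gamma\colon\Sigma^*\to S$ by $\gamma\vert_{\Delta_n}=\alpha\vert_{\Delta_n}$ and $\gamma(w u)=\beta_w(v_w u)$ for $w\in\Sigma^n$, $u\in\Sigma^*$. This $\gamma$ is a homomorphism $\G_{h_n}\to(S,\Sigma,A,\T)$ because at each $w\in\Sigma^n$ the transition bundle used is exactly the one $\beta_w$ uses at $v_w$ (whose source is $s_w=\alpha(w)=\gamma(w)$), and the bundles used strictly inside $\Delta_n$ are those of $\alpha$; at the "seam" the source state matches by construction. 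Moreover $\gamma(\varepsilon)=\alpha(\varepsilon)\in\mathcal{I}$, and for any right infinite word $\bar w$, either $\bar w$ passes through some leaf $w\in\Sigma^n$ — in which case $\mathrm{In}(\gamma\mid\bar w)=\mathrm{In}(\beta_w\mid v_w\bar w')\in\mathfrak{F}$ for the appropriate tail $\bar w'$, since the acceptance condition ignores finite prefixes — or (only when $\Sigma^*$ branches, but every branch of the infinite tree must eventually leave $\Delta_n$, so this case is vacuous). Hence $h_n$ is accepted by $\A$, so $h_n\in X$, and $h_n\vert_{\Delta_n}=f\vert_{\Delta_n}$.

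Letting $n\to\infty$ gives $h_n\to f$ in the prodiscrete topology, and since $X$ is closed we conclude $f\in X$. Together with the trivial inclusion this yields $X=\Sh_{\A'}$, as required. The main obstacle is purely bookkeeping: checking that the grafted map $\gamma$ is genuinely a homomorphism at the seam $w\in\Sigma^n$ — that the transition bundle $(\alpha(w);f(w);(\gamma(w\sigma))_\sigma)$ actually lies in $\T$ — which requires that $\beta_w$ uses the state $s_w$ as a \emph{source} of a transition with label $f(w)$; this is where one must be slightly careful, and it may be cleanest to first replace $\beta_w$, $v_w$ by a run and vertex such that $\beta_w(v_w)=s_w$ and the outgoing bundle at $v_w$ has label $f(w)$ — possible since $s_w=\alpha(w)$ is the source of such a bundle in $\T$ and, $\A$ being essential in the strong sense of this section, one can reselect $g_w,\beta_w$ accordingly, or simply modify $g_w$ on the single vertex $v_w$ and continue with the subtree dictated by that chosen bundle. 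Everything else is routine.
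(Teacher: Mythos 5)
Your proposal is correct and follows essentially the same route as the paper: for each state reached at depth $n$ one grafts on the tail of a globally accepting run through that state (available by essentiality), verifies the Rabin condition using its prefix-independence, and concludes by closedness of $X$. The ``seam'' worry in your final paragraph is actually moot: since your $h_n$ carries the label $g_w(v_w)$ (not $f(w)$) at each $w\in\Sigma^n$, the bundle used there is exactly the one $\beta_w$ uses at $v_w$, and agreement with $f$ is only required on $\Delta_n$, which your construction already guarantees.
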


\proof
For $s \in S$, let $f \in X$ be a configuration accepted by $\mathcal{A}$ via $\alpha\colon \Sigma^* \to S$ that uses $s$ and let $\alpha(\varepsilon)=s_0 \in \mathcal{I}$ and $\alpha(w)=s$, for some $w \in \Sigma^*$. Define $f_s=f^w$ and $\alpha_s\colon \Sigma^* \to S$ by $\alpha_s(v) = \alpha(wv)$, for $v \in \Sigma^*$. Note that, $\alpha_s(\varepsilon)=s$ and, since the accepting condition is not affected by initial subwords, for every right infinite word $\bar w$ we have
${\rm In}(\alpha_s \mid \bar w) \in \mathfrak{F}$.

Let $g \in A^{\Sigma^*}$ be a configuration and let $\beta\colon  \Sigma^* \to S$ be a homomorphism accepting $g$ (as in Remark~\ref{r:acceptance}) such that $\beta(\varepsilon) \in \mathcal{I}$. We will show that the configuration $g$ is in $X$. In other words, we will show that the accepting condition is irrelevant. Indeed, for $n \geq 1$, define a configuration $g_n$ that agrees with $g$ on $\Delta_n$ and such that $g_n(wv)=f_{\beta(w)}(v)$, for every $w \in \Sigma^n$ and $v\in\Sigma^*$ (that is, the configuration $g_n$ behaves as $f_{\beta(w)}$ on $w\Sigma^*$). Define a map $\beta_n\colon  \Sigma^* \to S$
 that agrees with $\beta$ on $\Delta_n$ and such that $\beta_n(wv)=\alpha_{\beta(w)}(v)$, for every $w \in \Sigma^n$ and $v\in\Sigma^*$. It is straightforward to check that $\beta_n$ accepts $g_n$. Indeed, $\beta_n(\varepsilon) = \beta(\varepsilon) \in \mathcal{I}$ and, for any right infinite word $\bar w$ and any finite word $w \in \Sigma^n$, we have
${\rm In}(\beta_n\mid \bar w) = {\rm In}(\beta_n\mid w\bar w) = {\rm In}(\alpha_{\beta(w)}\mid \bar w) \in \mathfrak{F}$.
Therefore $g_n \in X$ for each $n \geq 1$. Since $\lim_{n \to \infty}g_n =g$ and $X$ is closed, we conclude that $g \in X$.
\endproof

\proof[Proof of Theorem~\ref{t:rabin+shift=sofic}]
Clearly, by Corollary~\ref{c:sofic iff accepted}, any sofic tree shift is recognized by a Rabin automaton.

Conversely, if $X$ is a shift recognized by a Rabin automaton, Lemma~\ref{l:no-initial} and Lemma~\ref{l:no-acceptance} imply that it is also recognized by an unrestricted Rabin automaton, and therefore, by Corollary~\ref{c:sofic iff accepted}, $X$ is a sofic tree shift.
\endproof

\bibliographystyle{siam}
\bibliography{biblio}

\end{document}